\newtheorem{theorem}{Theorem}[section]
\newtheorem{lemma}[theorem]{Lemma}
\newtheorem{claim}[theorem]{Claim}
\newtheorem{corollary}[theorem]{Corollary}
\newtheorem{definition}[theorem]{Definition}
\newcommand{\polylog}{\mbox{polylog}}
\newcommand{\ml}[1]{#1}
\newcommand{\sj}[1]{#1}
\newcommand{\zt}[1]{#1}
\newcommand{\zx}[1]{#1}
\newcommand{\alfun}[1][ ]{\alpha\left( p,\bar{p}_{#1}\right) }
\newcommand{\myfun}[1][ ]{\alpha\left( \myp,\bar{p}_{#1}\right) }
\newcommand{\capopt}{\min_{\bar{p}\in\left[ 0,p\right] }\left[ \alfun\left( 1-H\left( \frac{\bar{p}}{\alfun}\right) \right) \right]}
\newcommand{\myopt}{\min_{\bar{p}\in\left[ 0,\myp\right] }\left[ \myfun\left( 1-H\left( \frac{\bar{p}}{\myfun}\right) \right) \right]}
\newcommand{\mycapa}[1][0]{\myfun[#1]\left( 1-H\left( \frac{\bar{p}_{#1}}{\myfun[#1]}\right) \right) }
\newcommand{\leftcw}{t}
\newcommand{\rightcw}{h}
\newcommand{\pref}[1]{{#1}_{\leftcw}}
\newcommand{\suff}[1]{{#1}_{\rightcw}}
\newcommand{\phat}{\hat{p}}
\newcommand{\phatt}{\pref{\phat}}
\newcommand{\pbar}{\bar{p}}
\newcommand{\chunknum}{\frac{1}{\echunk}}
\newcommand{\chunk}{n\echunk}
\newcommand{\choiceschunk}{\left\lbrace \chunk,2\chunk,\cdots,n-\chunk\right\rbrace }
\newcommand{\setchend}{\mathcal{T}}
\newcommand{\intvquar}{\left( 0,\frac{1}{4}\right) }
\newcommand{\rangep}{\left[ 0, \frac{1}{2} \right) }
\newcommand{\intvone}{\left[ 0,1\right] }
\newcommand{\msg}{\mathcal{U}}
\newcommand{\secr}{\mathcal{S}}
\newcommand{\mlist}{\mathcal{L}}
\newcommand{\exlist}{\mathcal{L}{\left(m\right) }}
\newcommand{\clist}{{\exlist}}
\newcommand{\mlists}{{L}}
\newcommand{\clists}{{L{\left(m\right)}}}
\newcommand{\cwlist}{\left\lbrace \mathbf{x}_{1},\mathbf{x}_{2},\cdots,\mathbf{x}_{\clists}\right\rbrace }
\newcommand{\lecwlist}{\left\lbrace \mathbf{x}_{1},\mathbf{x}_{2},\cdots,\mathbf{x}_{L-1}\right\rbrace }
\newcommand{\echunk}{\theta}
\newcommand{\echval}{\frac{\erate^{2}(1-4\myp)}{4}}
\newcommand{\erate}{\epsilon}
\newcommand{\elist}{\frac{\erate}{2}}
\newcommand{\elfrac}{\frac{(n-t)\erate^2}{16}}
\newcommand{\efrac}[1][t]{\frac{\erate^{2}}{16\alpha^{2}\left( \myp,\bar{p}_{#1}\right)}}
\newcommand{\etwo}{\theta}
\newcommand{\eone}{\epsilon}
\newcommand{\ethree}{\etwo}
\newcommand{\efour}{\frac{\etwo}{2}}
\newcommand{\efive}{\etwo-\etwo^2}
\newcommand{\mcode}{\mathcal{C}}
\newcommand{\megacode}[1][k]{\mathcal{C}_{1}\circ\mathcal{C}_{2} \circ\cdots\circ \mathcal{C}_{#1}}
\newcommand{\megacodw}[1][k]{\mathcal{C}_{1}\left( m,s_1\right) \circ\mathcal{C}_{2}\left( m,s_2\right) \circ\cdots\circ \mathcal{C}_{#1}\left( m,s_{#1}\right) }
\newcommand{\rmegacode}[1][k]{\mathcal{C}_{#1+1}\circ\mathcal{C}_{#1+2}\circ\cdots\circ \mathcal{C}_{\chunknum}}
\newcommand{\rmegacodw}[1][k]{\mathcal{C}_{#1+1}\left( m,s_{#1+1}\right) \circ\mathcal{C}_{#1+2}\left( m,s_{#1+2}\right) \circ\cdots\circ \mathcal{C}_{\chunknum}\left( m,s_{\chunknum}\right) }
\newcommand{\myp}{p^{\prime}}
\newcommand{\ppm}{p} 
\newcommand{\cpm}{\mathbf{x}^{\prime}}
\newcommand{\tstar}{t^{\star}}
\newcommand{\kstar}{k^{\star}}
\newcommand{\lstar}{l^{\star}}
\newcommand{\dl}{\Delta}
\newcommand{\keyspace}{[2^{nS}]}
\newcommand{\msgspace}{[2^{nR}]}
\newcommand{\binaryset}[1]{\left\lbrace 0,1\right\rbrace ^{#1}}
\newcommand{\binaryera}[1]{\left\lbrace 0,1,\Lambda\right\rbrace ^{#1}}
\newcommand{\elistord}{O\left( \frac{1}{\eone}\right) }
\newcommand{\blistord}{O\left( \frac{1}{\erate}\right) }
\theoremstyle{remark}
\newtheorem{remark}[theorem]{Remark}
\begin{document}
\title{A characterization of the capacity of online (causal) binary channels}
\author{
Z. Chen
\thanks{Department of Information Engineering,The Chinese University of Hong Kong, \texttt{cz012@ie.cuhk.edu.hk}}
\and 
S. Jaggi
\thanks{Department of Information Engineering, The Chinese University of Hong Kong, \texttt{jaggi@ie.cuhk.edu.hk}}
\and 
M. Langberg
\thanks{Department of Electrical Engineering, University at Buffalo, The State University of New York, \texttt{mikel@buffalo.edu}}
}

\date{}
\maketitle
\thispagestyle{empty}

\begin{abstract}
\noindent%
In the binary online (or ``causal'') channel coding model, a sender wishes to communicate a
message to a receiver by transmitting a codeword $\mathbf{x} =(x_1,\ldots,x_n) \in \{0,1\}^n$ bit by bit via a channel limited to
at most $pn$ corruptions. The channel is ``online'' in the sense that at the $i$th step of communication the channel decides whether to corrupt the $i$th bit or not based on its view so far, i.e., its decision depends only
on the transmitted bits $(x_1,\ldots,x_i)$. 
This is in contrast to the classical adversarial channel in which the error is chosen by a channel that has a full knowledge on the sent codeword $\mathbf{x}$. 

In this work we study the capacity of binary online channels for two corruption models: the {\em bit-flip} model in which the channel may flip at most $pn$ of the bits of the transmitted codeword, and the {\em erasure} model in which the channel may erase at most $pn$ bits of the transmitted codeword. Specifically, for both error models we give a full characterization of the capacity as a function of $p$. 

The online channel (in both the bit-flip and erasure case) has seen a number of recent studies which present both upper and lower bounds on its capacity. In this work, we present and analyze a coding scheme that improves on the previously suggested lower bounds and matches the previously suggested upper bounds thus implying a tight characterization.
\end{abstract}

\newpage
\setcounter{page}{1}


%
%

%

\section{Introduction}
Reliable communication over different types of channels has been extensively studied in electrical engineering and computer science. One frequently used communication channel model is the binary erasure channel, in which a bit (a zero or one) is either transmitted intact or erased. Specifically, an erased bit is a visible error, denoted by a special symbol $\Lambda$, which can be identified directly by a receiver. Another frequently studied channel model is the binary bit-flip channel, where bits can be flipped to their complement. 

There are two broad approaches to model (erasure or bit-flip) errors imposed by the binary channel. Shannon's approach is to model the channel as a stochastic process; Hamming's approach is a combinatorial approach to model the channel by an adversarial process that can erase or flip parts of the transmitted codeword arbitrarily, subject only to a limit on the number of corrupted bits. 

It is interesting to further classify the Hamming model for an adversarial binary channel in terms of the adversary's knowledge of the codeword.
Some examples include the standard adversarial channel (also referred to here as the {\em omniscient} adversary), e.g., \cite{gilbert1952comparison,gilbert1957comparison,mceliece1977new}, the {\em causal} (or {\em online}) adversary, e.g., \cite{dey2013codes,langberg2009binary,haviv2011beating,dey2012improved,bassily2014causal}, and the {\em oblivious} adversary, e.g.,\cite{lapidoth1998reliable,Langberg:08,guruswami2010codes}; from the strongest adversarial power to weakest. In one extreme, the omniscient adversarial model (a.k.a. the classical adversarial model) assumes that the channel has full knowledge of the entire codeword, and based on this knowledge, the channel can maliciously choose the error pattern. In the other extreme, the oblivious adversarial model is a model in which the channel is clueless about the codeword and generates errors in a manner that is independent of the codeword being transmitted. The causal adversarial model is an intermediate model between the two extremes, in which the channel decides whether to tamper with a particular bit of the codeword based only on the bits transmitted so far. 
{There are significant differences between the different adversarial models classified above (with respect to their capacity). We elaborate on these differences shortly.}

%
%
%
%
%
%

In this work we focus on causal adversaries, and study reliable communication over \textit{binary causal adversarial erasure channels} and \textit{binary causal adversarial bit-flip channels}. Specifically, we consider the following communication scenario. A sender (Alice) wishes to transmit a message $m\in\msg$ to a receiver (Bob) over a binary causal adversarial channel by encoding $m$ into a codeword $\mathbf{x}=(x_1,x_2,\cdots,x_n)\in\binaryset{n}$ of length $n$. However, the channel is governed by a causal adversary (Calvin), who can observe $\mathbf{x}$ and manipulate up to a $p$-fraction of the $n$ transmitted bits. More importantly, Calvin decides whether to tamper with the $i$-th bit of the codeword based only on the bits $(x_1,x_2,\cdots,x_i)$ transmitted thus far. Our goal is to find a coding scheme by which Alice can send as many distinct messages as possible while ensuring Bob succeeds in decoding w.h.p. Roughly, if $2^{nR}$ distinct messages can be sent using codewords of length $n$, we say that a code achieves rate $R$. We are interested in the maximum achievable rate $R$, which is the capacity $C_p$ of the channel. (See Section~\ref{sec:model} for precise definitions.)

\subsection{Our Results}
In this work we characterize the capacity of both the binary causal bit-flip channel and the binary causal erasure channel as a function of $p$ (the strength of the adversary). Specifically, we propose and analyze a novel coding scheme which implies a lower bound on the capacity for both channels that matches the known upper bounds~\cite{dey2012improved, bassily2014causal}  (to be described in detail shortly). 
Our main results can be summarized by the following two Theorems.

%

\begin{theorem}
\label{the:main:flip}
Let $\alfun \triangleq 1-4(p-\pbar)$. 
The capacity of the binary causal adversary bit-flip channel is 
\begin{equation}
C^{flip}_{p}=\left \{
\begin{array}{lc}
\capopt, \mbox{ }& p\in[0,1/4] \\
0,\mbox{ } & p \geq 1/4
\end{array}
\right .
\end{equation}
\end{theorem}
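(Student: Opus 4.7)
The upper bound $C^{flip}_p \leq \capopt$ for $p\in[0,1/4]$ and the triviality $C^{flip}_p=0$ for $p\geq 1/4$ are already in place from prior work \cite{dey2012improved,bassily2014causal}, so the plan is to establish the matching achievability: for every rate $R < \capopt$ and every $\eone > 0$, exhibit a stochastic code of blocklength $n$ whose decoding error against every causal bit-flip adversary of budget $pn$ is at most $\eone$, for all $n$ large enough.

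The code I would build is a chunked random code with stochastic encoding. Partition the $n$ channel uses into $1/\echunk$ chunks of length $\chunk$ for a suitably small constant $\echunk > 0$. Each message $m$ is associated with a random inner list of $L = \elistord$ candidate codewords, and the transmitter picks one uniformly at random (this is the stochastic encoding that turns the causal adversary's observation of the prefix into an effectively oblivious one). Short synchronisation blocks of negligible rate cost are inserted at chunk boundaries so the decoder can compute chunk-wise empirical flip counts and infer an approximate split time. The structural property to verify by random coding is that for every grid split $t \in \choiceschunk$ and every discretised noise level $\pbar$, the prefix ensemble up to position $t$ is Gilbert--Varshamov list-decodable from $\pbar n$ errors with list size at most $2^{t(R - (1 - H(\pbar n / t))) + o(n)}$ w.h.p.\ over the codebook.

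The decoder mirrors the adversary's ``babble-and-push'' structure. For each grid point $(t,\pbar)$ it computes the prefix list $\mathcal{L}(t,\pbar)$ of messages whose codeword prefix lies within Hamming distance $\pbar n$ of the received prefix, and prunes this list by checking that the suffix codeword is within $(p-\pbar)n$ of the received suffix. The rate analysis then splits by regime. For $t < \alfun n$ the suffix has length $n-t > 4(p-\pbar)n$, so any pair of random codewords has typical suffix distance $(n-t)/2 > 2(p-\pbar)n$; a Plotkin-style argument therefore eliminates every decoy, regardless of the prefix list. For $t \geq \alfun n$ the adversary's prefix flips depend only on $(x_1,\ldots,x_t)$, which are random codebook bits independent of $m$ (thanks to the stochastic encoding), so the effective prefix channel is oblivious with noise fraction $\pbar n / t \leq \pbar/\alfun$; the random-coding argument then collapses the prefix list to $\{m\}$ w.h.p.\ as soon as $R < (t/n)(1 - H(\pbar n / t))$. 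Since $(t/n)(1 - H(\pbar n /t))$ is minimised over $t \geq \alfun n$ at the endpoint $t = \alfun n$, the theorem's hypothesis $R < \capopt$ is precisely the uniform condition that ensures the oblivious prefix identifies $m$ for every $\pbar$ and every $t \geq \alfun n$, closing this regime too.

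The main obstacle is that a fully causal Calvin need not cleanly split his budget into a contiguous babble phase followed by a contiguous push phase: he may interleave flips and choose the transition instant adaptively on the bits he has already seen. I would address this with a discretisation/epoch argument, showing that for any causal strategy there exist grid parameters $(\tstar, \pbar^\star)$ such that the strategy's empirical per-chunk flip counts are $O(\echunk)$-close to babble-and-push at $(\tstar,\pbar^\star)$, at only an $O(\echunk)$ additive cost in the effective rate; taking $\echunk$ small after fixing $R$ absorbs this. A second subtlety is sizing the inner stochastic lists $L$: large enough that the prefix output randomises past the adversary's causal view (so the prefix really looks oblivious), yet small enough that the resulting blow-up of $\mathcal{L}(t,\pbar)$ stays below the Gilbert--Varshamov threshold; the choice $L = \elistord$ balances these. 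Once both pieces are in place, the chunk-wise analysis yields a decoder whose error probability against any causal adversary is below $\eone$, giving a code of rate arbitrarily close to $\capopt$ and completing the proof.
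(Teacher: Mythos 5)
Your skeleton (chunked random code, stochastic encoding, prefix list-decoding followed by a suffix consistency check) matches the paper's, but two of the load-bearing ideas are wrong and would not survive a careful write-up.

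\textbf{The synchronization-block / flip-count idea fails.} You propose inserting publicly known sync blocks at chunk boundaries so Bob can ``compute chunk-wise empirical flip counts and infer an approximate split time.'' This works for erasures, where every corruption is visible, but it does not work for bit flips: Calvin is an adversary, so he can simply leave every sync bit intact and spend his full budget on data bits, making Bob's empirical flip count estimate arbitrarily wrong. The paper never tries to estimate $p_t$ from the channel output. Instead it fixes a \emph{decoding reference trajectory} $\phatt$ (a deterministic public function of $t$ alone), proves that whatever causal strategy Calvin adopts, there is some chunk end $\tstar$ at which $\phat_{\tstar}$ is approximately Calvin's true $p_{\tstar}$ (the Low/High Type dichotomy, Claims~\ref{claim-intersection}--\ref{claim-zero-starting}), and gives Bob an iterative decoder that walks $t$ from $t_0$ upward, stopping at the first $t$ where the consistency decoder returns a message. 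The proof that the consistency decoder returns \emph{nothing} at every $t<\tstar$ and returns exactly $m$ at $\tstar$ is what replaces flip-count estimation. Your 2D grid sweep over $(t,\pbar)$ also leaves unresolved what Bob does if two different grid points return two different messages; the paper's 1D walk with a termination rule is exactly what closes that hole.

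\textbf{Stochastic encoding does not make the prefix oblivious.} You argue that for $t\geq\alfun n$ ``the adversary's prefix flips depend only on $(x_1,\ldots,x_t)$, which are random codebook bits independent of $m$ \ldots so the effective prefix channel is oblivious'' and hence ``collapses the prefix list to $\{m\}$.'' That reasoning is not correct: $(x_1,\ldots,x_t)$ is the transmitted prefix, and it most certainly depends on $m$ even with stochastic encoding --- Calvin learns a great deal about $m$ from it, and the rate sits at the list-decoding (not unique-decoding) boundary, so the prefix alone cannot identify $m$. In the paper, list decoding of the prefix only produces a list $\mlist$ of size $\blistord$, not a singleton, and Calvin is assumed to know $\mlist$ and even $m$ itself. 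What the stochastic encoding actually buys is independence in the \emph{suffix}: because Alice chooses the secrets $s_{k+1},\ldots,s_{1/\theta}$ causally (after time $\tstar$), Calvin's view at $\tstar$ --- and hence the list $\mlist$ he has shaped --- is statistically independent of the suffix realization. The consistency decoder therefore faces an $\mlist$ it cannot have ``aimed'' in the suffix direction, which is why the Plotkin-style distance argument (Claims~\ref{claim:good-1}--\ref{claim:good-3}) can amplify success over all the possible $s_{right}$ sharing a given $s_{left}$ and survive the union bound over Calvin's possible lists. Without this independence argument in the right place, the union bound over adversarially chosen lists $\mlist$ (there are $2^{O(n)}$ of them) simply does not close.

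A smaller but real gap: your ``inner list of $L=\elistord$ candidate codewords per message'' conflates the list-decoding output size with the amount of encoder randomness. In the paper each message has $2^{nS/\theta}$ codewords with $S=\theta^3/8$, which is exponentially many; only the output of the list decoder has size $O(1/\epsilon)$. The exponential amount of encoder randomness is what makes the ``most secrets in $\secr_{\mathbf{s}^*}$ are good'' amplification work.
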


\begin{theorem}
\label{the:main:erase}
The capacity of the binary causal adversary erasure channel is 
\begin{equation}
C^{erase}_{p}=\left \{
\begin{array}{lc}
1-2p, \mbox{ }& p\in[0,1/2] \\
0,\mbox{ } & p \geq 1/2
\end{array}
\right .
\end{equation}
\end{theorem}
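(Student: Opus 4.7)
The plan is to prove the characterization by establishing matching upper and lower bounds on $C_p^{erase}$.

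For the converse, i.e.\ $C_p^{erase} \le 1-2p$ on $p \in [0,1/2]$ and $C_p^{erase}=0$ for $p \ge 1/2$, I would appeal to the causal-adversary attacks already developed in prior work (Dey--Jaggi--Langberg, Bassily--Smith). At a high level, the adversary internally samples a competing message $m' \ne m$ and tries to masquerade the transmission as the codeword $\mathbf{x}'$ of $m'$ by spending its $pn$ erasure budget on positions where the true codeword $\mathbf{x}$ disagrees with $\mathbf{x}'$. A Plotkin-type averaging argument over pairs of codewords yields the bound $R \le 1-2p$, and gives $R=0$ as soon as $p \ge 1/2$ since then the erasure budget suffices to confuse essentially any pair of codewords.

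For the achievability direction---the main contribution of this paper---I would construct a coding scheme of rate $1-2p-\epsilon$ for arbitrary $\epsilon > 0$. The construction partitions the $n$-bit codeword into $1/\theta$ contiguous chunks of length $\theta n$ for a small parameter $\theta = \theta(\epsilon)$. Within each chunk, Alice transmits a slice of a stochastic encoding of $m$ together with a short authentication hash that ties the chunk to earlier ones. Bob decodes chunk by chunk, maintaining at each stage a short list of candidate messages consistent with the unerased bits seen so far, and uses the hash symbols in subsequent chunks to prune this list down to the unique true message with high probability.

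The main obstacle, and the place where causal erasure adversaries genuinely differ from oblivious or stochastic ones, is that the adversary can adaptively concentrate erasures based on the bits it has already seen---for example by deferring erasures until it has inferred enough of $m$ to target the most informative positions in the remainder. The analysis therefore cannot treat the erasure pattern as random. The key technical step I would carry out is to exhibit, for every causal strategy, a threshold time $t^\star$ by which the decoder's list is guaranteed to be short and to contain the true $m$, and to show that the residual erasure budget beyond $t^\star$ is too small to derail the hash verification and list-pruning steps. Standard Chernoff-style concentration for per-chunk erasure counts, combined with list-decodability properties of the inner stochastic code, should then close the argument whenever $R < 1 - 2p$.
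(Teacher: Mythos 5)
Your high-level scaffold matches the paper's: chunked stochastic encoding with chunk length $\theta n$, a threshold time $t^\star$ before which the decoder can list-decode to a short list containing $m$ and after which the adversary's residual erasure budget is bounded, and a verification step that whittles the list down to the unique true message. Your converse plan (cite Bassily--Smith) is exactly what the paper does. However, there are two substantive gaps in the achievability sketch, and one piece of unnecessary machinery.

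First, the crucial ingredient the paper uses to beat the causal adversary---and the one missing from your proposal---is an explicit independence argument: because Alice's secrets $(s_{k^\star+1},\dots,s_{1/\theta})$ for the chunks after $t^\star$ have not yet been chosen at time $t^\star$, they are statistically independent of the list $\mathcal{L}$ that Calvin (causally) induced, and independent of Calvin's entire view up to $t^\star$. The paper's consistency decoder exploits this: it shows (Claims~\ref{claim:e-good-1}--\ref{claim:e-good}) that over the random code and Alice's future secrets, the true message's right mega sub-codeword is far in Hamming distance from every other right mega sub-codeword Calvin could have planted in $\mathcal{L}$, and that this ``goodness'' holds simultaneously for all messages and all adversarially chosen lists via a union bound over partition classes $\secr_{\mathbf{s}^*}$. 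Your ``short authentication hash that ties the chunk to earlier ones'' could in principle play the same role, but only if you make it depend on \emph{future} secrets unknown to Calvin at the list-formation time; as stated (tied to ``earlier'' chunks), the hash is predictable and forgeable by a causal adversary that knows $m$ and $s_{\text{left}}$, which the paper explicitly grants Calvin. You would need to state and use the independence of post-$t^\star$ randomness.

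Second, invoking ``standard Chernoff-style concentration for per-chunk erasure counts'' is a non sequitur against an adversarial channel. The erasure pattern is chosen adversarially, not sampled; nothing concentrates. The existence of a suitable $t^\star$ in the paper (Claim~\ref{claim:e-1}) is a purely deterministic, combinatorial pigeonhole argument: for any erasure trajectory with $\lambda$ erasures by time $t$, the list-decoding inequality $Rn+\lambda+\theta t \le t$ and the energy-bounding inequality $np-\lambda+\frac{(n-t)\theta}{2}\le\frac{n-t}{2}$ have a nonempty intersection of admissible $t$ whenever $R<1-2p$. The only probabilistic content in the proof is over the random code design (list-decodability of the random left sub-codes and ``goodness'' of the random right sub-codes), not over erasure counts.

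Finally, a minor structural point: in the erasure case Bob does \emph{not} need to decode iteratively chunk by chunk. Erasures are visible, so Bob can directly compute $t^\star$ from the observed erasure pattern and then run a single list decode followed by a single consistency check. The iterative ``guess and re-decode'' loop is needed only for the bit-flip channel, where Bob cannot see which positions were corrupted.
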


In fact, as direct by-products of the analysis of our coding scheme, we can show that even if Calvin has ``small'' lookahead, the capacity is essentially unchanged. More precisely, if for any constant $\epsilon > 0$, Calvin decides whether to tamper with the $i$-th bit of the codeword based only on the bits $(x_1,x_2,\cdots,x_{j})$, where $j = \min\{n,i+n\epsilon\}$, then the capacity of the corresponding ``$n\epsilon$-lookahead is at most $f(\epsilon)$ less than the corresponding $C^{flip}$ and $C^{erase}$ we show in Theorems~\ref{the:main:flip} and~\ref{the:main:erase} above (for some continuous $f$). We provide a rough argument in support of this claim in the Remark at the end of Section~\ref{sec:code-analysis}.


%

\subsection{Previous Work}

We start by briefly summarizing the state-of-the-art for erasure and bit-flip adversarial channels, for both omniscient and oblivious adversaries.
The optimal rate of communication over binary omniscient adversarial channels (for both erasure and bit-flip errors) are long standing open problems in coding theory. The best known lower bounds for the problems derive from the Gilbert-Varshamov codes (the GV bound)~\cite{gilbert1952comparison,gilbert1957comparison}, and the tightest upper bounds (the MRRW bounds) from the work by McEliece {\it et al.}~\cite{mceliece1977new}. 

The literature on Arbitrarily Varying Channels (AVCs, e.g., \cite{lapidoth1998reliable}) implies that the capacity of the binary oblivious adversarial bit-flip channel is $1-H(p)$, and that of oblivious adversarial erasure channels is $1-p$; these match the well-known capacities of the corresponding ``random noise'' channels with bits flipped or erased Bernoulli($p$), but are attainable even for noise patterns that can be chosen (up to an overall constraint of a $p$-fraction corruptions) by an adversary with full knowledge of the codebook, but no knowledge of the actually transmitted codeword.\footnote{In fact, it can even be shown that if Alice is allowed to use {\it stochastic encoding} -- choosing one of multiple possible codewords randomly for each message she wants to transmit -- then even for a {\it maximal probability of error} metric, a vanishingly small probability of error can be attained by capacity achieving codes. That is, there exists a sequence of codes whose rates asymptotically achieve the corresponding capacity, and such that for every message transmitted by Alice and for every corruption pattern imposed by Calvin, can be decoded correctly by Bob for ``most'' codewords corresponding to that message.} An alternate proof of the capacity of the binary oblivious bit-flip channel was presented in~\cite{Langberg:08} by Langberg, and a computationally efficient scheme achieving this rate was presented in~\cite{guruswami2010codes} by Guruswami and Smith.


We now turn to the causal setting.
As a causal adversary can never do better than an omniscient adversary and does at least as well as an oblivious one, the upper bounds on capacity for oblivious adversaries specified above act as upper bounds for the causal case as well;
and the lower bounds on capacity for omniscient adversaries act as lower bounds for the causal case.
For the binary causal adversarial bit-flip channel both bounds were improved.
Specifically, the first nontrivial upper bound $\min\left\lbrace 1-H(p),(1-4p)^{+}\right\rbrace $ was given by Langberg {\it et al.}~\cite{langberg2009binary}, and later, the tightest upper bound was given by the continuing work of Dey {\it et al.}~\cite{dey2012improved,dey2013upper} and is equal to $ C^{flip}_p$ of Theorem~\ref{the:main:flip}. The best lower bound was described by Haviv and Langberg~\cite{haviv2011beating} which slightly improves over the GV bound. 
For the binary causal adversarial erasure channel the trivial upper bound of $1-p$ was improved to $1-2p$ (which we demonstrate equals $C^{erase}_p$ in Theorem~\ref{the:main:erase}) by Bassily and Smith~\cite{bassily2014causal} who also present improved lower bounds that  separate the achievable rate for causal adversarial erasures from the rates achievable for omniscient adversarial erasures.


\begin{figure}[htb]
	\centering
	\begin{subfigure}{.5\textwidth}
		\centering
		\includegraphics[width=1\linewidth]{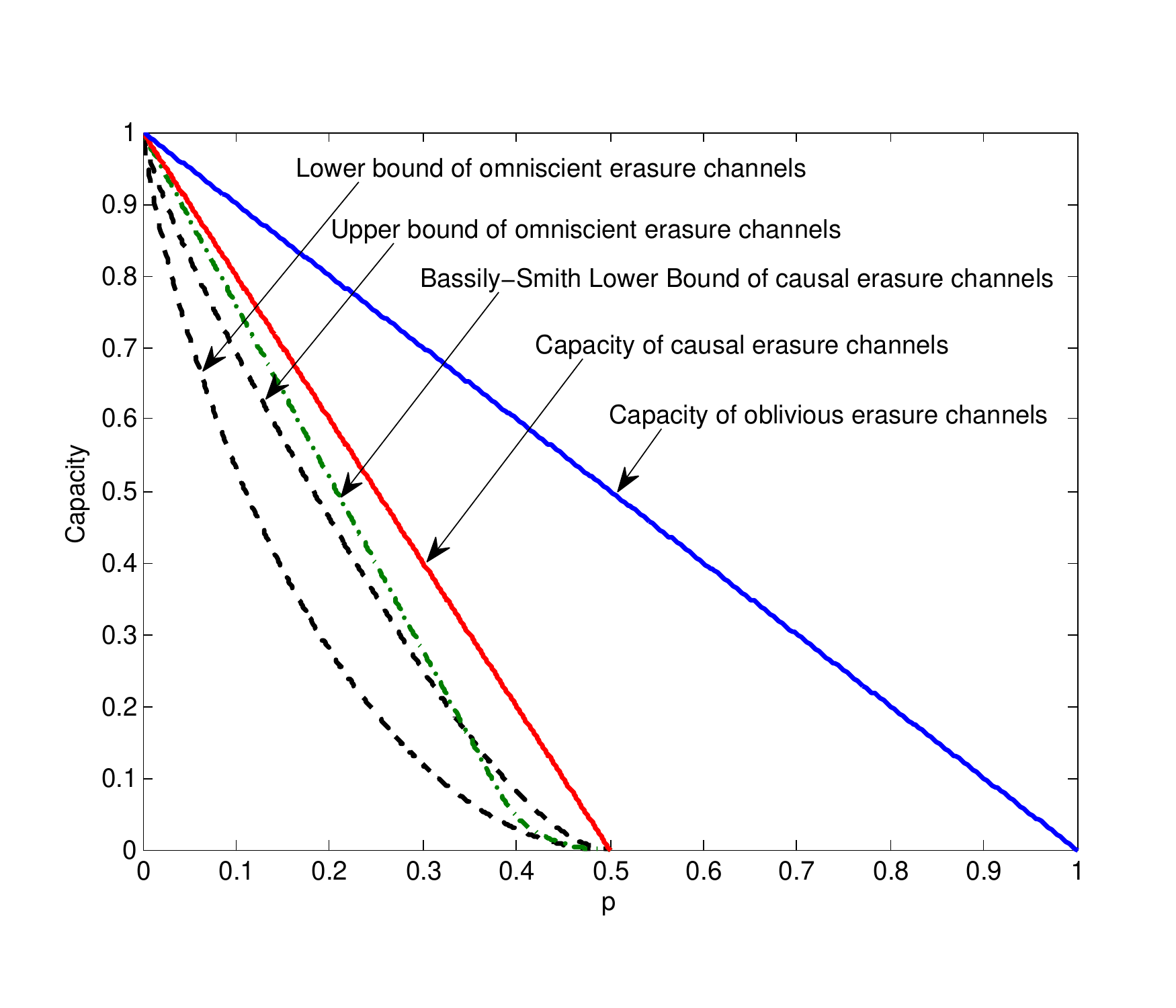}
		\caption{Binary adversarial erasure channels: The bound of $1-p$ (in blue) corresponds to the capacity of binary oblivious erasure channel. The MRRW bound and the GV bound (both in dotted black) are the best known upper and lower bounds for binary omniscient erasure channels. The lower bound for binary causal erasure channels by Bassily and Smith~\cite{bassily2014causal} is plotted in green. The upper bound $1-2p$ (in red) of~\cite{bassily2014causal} matches the lower bound we demonstrate in Theorem~\ref{the:main:erase}.}
		\label{fig:bound-bec}
	\end{subfigure}%
	\begin{subfigure}{.5\textwidth}
		\centering
		\includegraphics[width=1\linewidth]{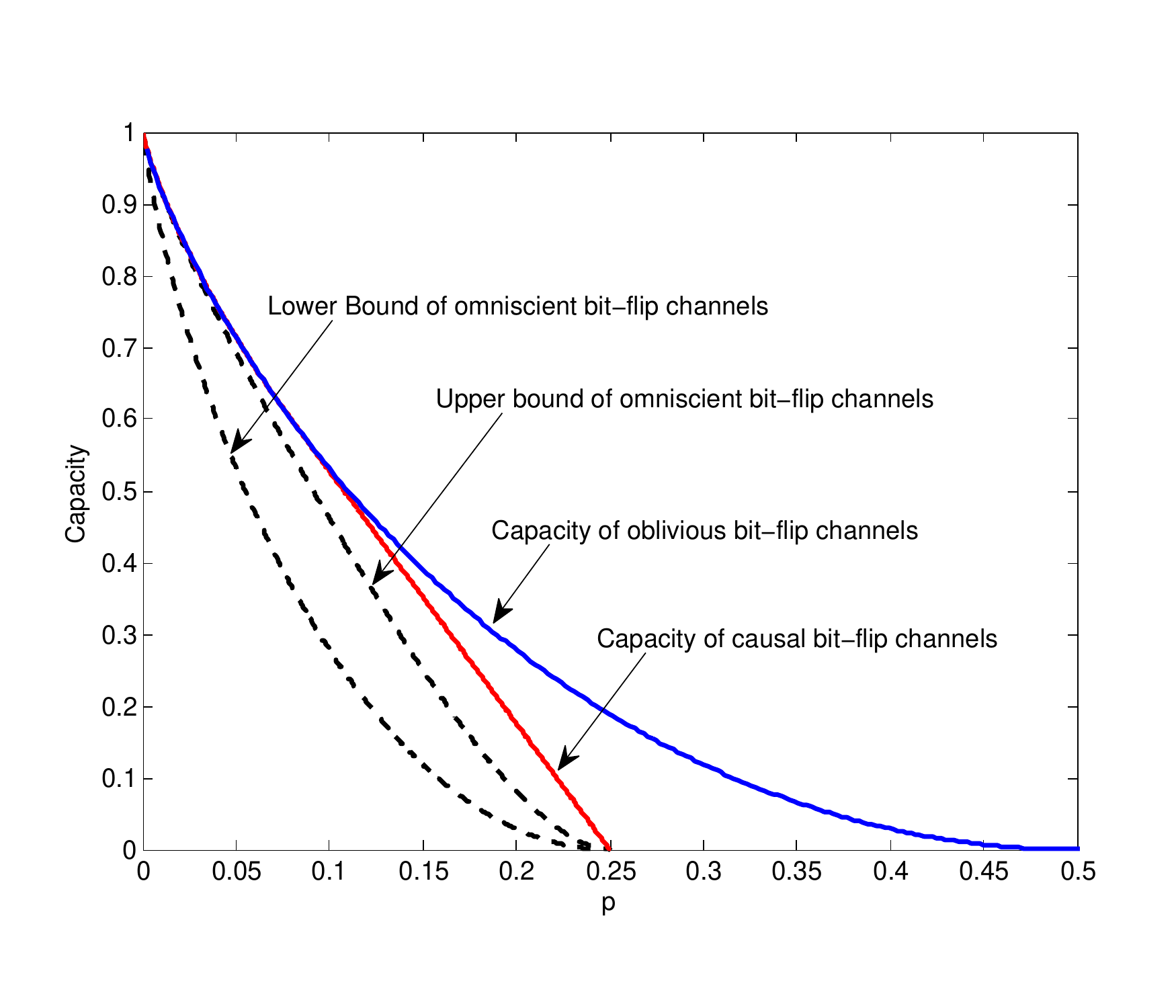}
		\caption{Binary adversarial bit-flip channels: The bound of $1-H(p)$ (in blue) corresponds to the binary oblivious bit-flip channel. The MRRW bound and the GV bound are the best upper and lower bounds (both in dotted black) for the binary omniscient bit-flip channels. For binary causal bit-flip channels, the previous lower bound by Haviv and Langberg~\cite{haviv2011beating} is a slight improvement over the GV bound. Our bound of Theorem~\ref{the:main:flip} matches the previous upper bound by Dey {\it et al.}\cite{dey2012improved,dey2013upper} (in red).}
		\label{fig:bounds-bsc}
	\end{subfigure}
	\caption[Bounds on the capacity of binary adversarial channels]{Bounds on the capacity of binary adversarial channels}
	\label{fig:bounds}
\end{figure}


Related results include the study of binary delayed adversaries by Dey {\it et al.}~\cite{dey2010coding} who provide a characterization of the capacity in the case of ``delays'' $d$ which are an arbitrarily small (but constant) fraction of the code block length $n$.\footnote{While not presented in that work, the techniques of~\cite{dey2010coding} can be used to show that the same capacity holds even if the delay is $\polylog(n)$ rather than $d={\cal O}(n)$.} The value $d$ here corresponds to an adversarial model in which the decision of whether or not to corrupt the $i$th codeword bit depends only on $(x_1,\ldots,x_{i-d})$ (and the overall constraint on the number of bits that can be corrupted).
It is interesting to note that, in this case as well as the oblivious one, the capacity of the bit-flip and bit-erasure channels matches the corresponding random noise capacities (of $1-H(p)$ and $1-p$). On the other hand, as mentioned, the causal and $n\epsilon$ lookahead settings have strictly lower, but approximately matching, capacities. This seems to imply that the knowledge of the present is critical for Calvin to significantly depress the capacity below the random noise capacity. 

While the above discussion relates to the problem of binary alphabets, the work of  Dey {\it et al.}~\cite{dey2013codes} considered ``large alphabet channels'' (in which the alphabet size is ``significantly larger'' than the block-length $n$) with causal symbol errors.\footnote{The capacity of large alphabet causal symbol erasures is essentially the same as that of omniscient large alphabet symbol erasures, which in turn equals the capacity of random symbol erasures. Such rates can be directly attained by Reed-Solomon codes, and matching converses obtained by Calvin merely randomly erasing $pn$ symbols.} A complete capacity characterization was presented (with corresponding computationally efficient codes attaining capacity), which demonstrated that the capacity of this problem equals $1-2p$, which is the same as the capacity of an omniscient adversary (attained by Reed-Solomon codes, and impossibility of higher rates by the Singleton bound). This demonstrates that the penalty imposed by the causality condition on Calvin diminishes with increasing alphabet size.

Also related to this work is the study of Mazumdar~\cite{mazumdar2014capacity} in which the capacity of memoryless channels where the adversary makes his decisions based only on the value of the currently transmitted bit is addressed. We note that the causal model is also a variant of the AVC model~\cite{blackwell1960capacities,lapidoth1998reliable}, however previous works on AVCs with capacity characterizations do not relate directly to the study at hand on causal adversaries.


\subsection{Proof Technique}
%
%
%
In this work we presents a coding scheme for both binary causal adversarial bit-flip and erasure channels that achieves rates that match the known upper bounds (respectively~\cite{dey2012improved,dey2013upper} and~\cite{bassily2014causal}) from previous studies, and as such is optimal. Our coding scheme is an existential one, and is based on {\em stochastic} encoders. Specifically, in our codes the encoder Alice uses internal randomness (not known to the receiver Bob or the adversary Calvin) in the choice of the transmitted codeword. The internal randomness is designed to allow a high probability of successful communication no matter which message Alice is sending to Bob.

In both the erasure and bit-flip channels, the codes we use are random codes. Namely, we pick our codes uniformly at random from all possible codes of a specific rate, and prove using probabilistic tools, {\it e.g.}, \cite{algoet1988sandwich}, that w.h.p. over the code distribution a code chosen at random allows reliable communication over our channels.
The decoder used in both the erasure and bit-flip case involves two major phases: a list decoding phase in which the decoder obtains a short list of \sj{messages} that includes the one transmitted; and a unique decoding phase in which the list is reduced to a single message. Roughly, Bob in his decoding process divides the received word into two parts -- all bits received up to a given time $\tstar$, and all bits received afterwards. The list decoding is done using the first part of the received word, and the process of unique decoding from the list is done using the second part. Both the list decoding and the unique decoding phase are done by considering the family of codewords of a certain given distance from the (first or second part of the) received word.

In the erasure case, given the parameter $p$ (that specifies the fraction of bits that can be erased by the adversary) and the received word, the decoder Bob can pin-point the value of $\tstar$ that will allow successful decoding. Specifically, for any adversarial behavior, we show the existence of a value $\tstar$ that satisfies two special properties: the {\em list decoding condition} that allows Bob to obtain a small list of \sj{messages} from the first part of the received word; and the {\em energy bounding condition} which guarantees that the fraction of bits erased by the adversary in the second part of the received word cannot suffice to confuse Bob between any two \sj{messages in} the list he holds. The ability to list decode is obtained using standard probabilistic arguments that take into account the block length $\tstar$ and the fraction of erasures $p_{\tstar}$ in the first part of the received word.
The ability to uniquely decode from the obtained list involves a more delicate analysis which uses the stochastic nature of our encoding and the causality constraint of Calvin. In particular,  we use the fact that the secret bits used in the encoding of the first part of the codeword (up to position $\tstar$) are independent of those used for the second part. This independence is useful in separating the two decoding phases 
in the sense that the casual adversary at time $\tstar$ is acting with no knowledge whatsoever on the secret bits used by Alice after time $\tstar$. This lack of knowledge sets the stage for the unique decoding phase.

We accommodate different potential values of $\tstar$ by designing a stochastic encoding process in which different parts of the codewords rely on independent secret bits of Alice. Namely, we divide the coding process into {\em chunks}. Each chunk is a random stochastic code of length $\chunk$ for a small parameter $\echunk$ that uses independent randomness from Alice. The final code of Alice is a concatenation of all its chunks. Setting $\echunk$ small enough allows enough flexibility to manage any possible value $\tstar$ chosen by Bob's decoder.

The encoding and decoding process for the bit-flip channel follow the same line of analysis as specified above for the erasure case, but with one major and significant difference. Bob does not know which bits in the transmitted codeword were flipped, and thus by studying the received word, Bob is not able to identify a location $\tstar$ that satisfies the list decoding and energy bounding conditions. To overcome this difficulty, we design an iterative 
decoding process in which Bob starts with a small value of $t$ and performs an attempt to decode.
As before the decoding process first list decodes using the first part of the received word and then uniquely decodes. The list decoding is done according to a certain ``guessed'' value $\phatt$ for the fraction of bit flips in the first part of the received word. Here, $\phatt$ is a carefully designed function of $t$  (also referred to as a ``trajectory'') that is fixed and known to all parties involved in the communication. We show that if Bob guessed correctly, i.e., if the fraction $\hat{p}_t$ is equal to the actual fraction of bits $p_t$ flipped in the first part of the received word, then the decoding will successfully return the correct message. Otherwise, we show that the unique decoding phase will fail in the sense that Bob will not receive any message from the decoding process. Identifying a failure in the decoding process, Bob increases $t$ and repeats the decoding attempt. The crux of our analysis lies in our proof that eventually, no matter what the behavior of Calvin is, there will be a value of $t$, denoted $\tstar$, for which $\phat_{\tstar}$ is approximately $p_{\tstar}$ and the decoding succeeds.
Establishing the existence of the trajectory $\phatt$, and proving that at some point it must be close to $p_t$ is a central part of our proof.

\subsection{Structure}

In the remaining part of this extended abstract we focus on our proof for the bit-flip case, i.e., Theorem~\ref{the:main:flip}, as it includes the ideas needed for the simpler proof of Theorem~\ref{the:main:erase}. The proof of Theorem~\ref{the:main:erase} appears in a self contained manner in Appendix~\ref{app:erasure}.
In Section~\ref{sec:model} we formally present the channel model, the encoder, and the decoding process. In addition, we present a careful description of the adversarial behavior. Section~\ref{sec:code-analysis} then presents an overview of our code analysis and the proof of Theorem~\ref{the:main:flip} (our main result for the bit-flip channel). Due to space limitations, all the technical claims for the bit-flip channel and their proof appear in the Appendix~\ref{app:flip}.

\section{Model}
\label{sec:model}

\paragraph{Channel Model:}
For any positive integer $i$, let $\left[ i\right] $ denote the set $\left\lbrace 1,2,\cdots,i\right\rbrace $. For a transmission duration of $n$ bits, a binary causal adversarial bit-flip channel can be characterized by a parameter $p\in\left[ 0,1\right] $ and a triple $\left( \mathcal{X}^n,\textsf{Adv},\mathcal{Y}^n\right) $. Here, $p$ is the fraction of bits that Calvin can flip in a codeword, $\mathcal{X}=\left\lbrace 0,1\right\rbrace $ and $\mathcal{Y}=\left\lbrace 0,1\right\rbrace $ are the input and output alphabet of the channel, and $\textsf{Adv}=\left\lbrace \text{Adv}^i|i\in\left[ n\right] \right\rbrace $ is a sequence of mappings that represents the adversarial behavior in each time step.
More precisely, each map $\text{Adv}^i:\mathcal{X}^i\times\mathcal{Y}^{i-1}\to\mathcal{Y}$ is a function that, at the time of transmitting the $i$-th bit, maps the sequence of channel inputs up to time $i$, $\left( x_1,x_2,\cdots,x_i\right) \in\mathcal{X}^i$, together with the sequence of all previous channel outputs up to time $i-1$, $\left( y_1,y_2,\cdots,y_{i-1}\right) \in\mathcal{Y}^{i-1}$, to an output symbol $y_i\in\mathcal{Y}$.
The functions  $\text{Adv}^i$ must satisfy the adversarial power constraint, namely that at no point in time does the total number of bit-flips exceed $pn$.

\paragraph{Random code distribution:}
We now define a distribution over codes. In our proof, we use this distribution to claim the existence of a fixed code that allows reliable communication between Alice and Bob over the channel model.
In our code construction $R$ denotes the code \textit{rate}, $S$ the {\em private secret rate} of the encoder (to be defined explicitly shortly), and $\echunk$ a ``quantization'' parameter  (specified below).


Let $\msg=\msgspace$ denote Alice's message set and $\secr=\keyspace$ be the set of private random secrets available only to Alice. The encoder randomness $\secr$ is neither shared with the receiver nor the adversary. Let $\Phi$ be the uniform distribution over stochastic codes $\msg \times \secr \to \mathcal{X}^{\chunk}$. Let $\mathcal{C}_1,\mathcal{C}_2,\cdots,\mathcal{C}_{\chunknum}$ be stochastic codes, which are i.i.d. according to the probability distribution $\Phi$. Specifically, $\forall i\in\left[ \chunknum\right] $, the corresponding stochastic code is a map $\mathcal{C}_i:\msg \times \secr \to \mathcal{X}^{\chunk}$ chosen from the distribution $\Phi$.  

\paragraph{Encoder:}
Given a message $m\in\msg$ and $\chunknum$ secrets, $s_1,s_2,\cdots,s_{\chunknum}$ each in $\secr$, a codeword of length $n$ with respect to the message $m$ and the $\chunknum$ secrets is defined to be the concatenation of $\chunknum$ {\em chunks} of sub-codewords,
\begin{align}
	\megacodw[\chunknum] \label{code-design}
\end{align}
where $\mathcal{C}_{i}(m,s_{i})$ is the $i$-th \textit{sub-codeword} in the entire codeword, and $\circ$ denotes the concatenation between two chunks of sub-codewords. To distinguish the concatenated code $\mcode$ from the code for a chunk, we will call $\mathcal{C}_1,\mathcal{C}_2,\cdots,\mathcal{C}_{\chunknum}$ \textit{sub-codes} hereafter. Our analysis then focuses on two mega sub-codes, defined as follows.

\begin{definition}
	\label{def:left-mega-code}
	Let a code $\mcode$ of block-length $n$ consist of $\chunknum$ sub-codes, i.e., $\mcode=\megacode[\chunknum]$. Let $\setchend=\choiceschunk$ and $t\in\setchend$. A \textit{left} mega sub-code of $\mcode$ with respect to $t$ is the concatenation of the first $\frac{t}{\chunk}$ sub-codes of $\mcode$.
\end{definition}

\begin{definition}
	\label{def:right-mega-code}
	Let a code $\mcode$ of block-length $n$ consist of $\chunknum$ sub-codes, i.e., $\mcode=\megacode[\chunknum]$. Let $\setchend=\choiceschunk$ and $t\in\setchend$. A \textit{right} mega sub-code of $\mcode$ with respect to $t$ is the concatenation of the last $\chunknum-\frac{t}{\chunk}$ sub-codes of $\mcode$.
\end{definition}

In our analysis, it is convenient to describe the encoding scheme of Alice in a causal manner. Namely, we will assume that the secret value $s_i$ corresponding to the encoding of the $i$-th chunk is chosen by Alice immediately before the $i$-th chunk is to be transmitted and no sooner.

As mentioned above, we show that with positive probability, the code $\mathcal{C}$ chosen at random based on the distribution above has certain properties that allow reliable communication over our channel model. 

\paragraph{Decoding process:}
The decoding process of Bob is done in an iterative manner. Specifically, upon receiving the entire codeword with bit-flips, Bob identifies the smallest value of $t \geq (1-4p)n$ corresponding to the (end) location of a chunk, and attempts to correctly decode the transmitted message $m$ based on the left and right mega sub-codeword with respect to position $t$. 
The  decoding process is terminated if a message is decoded by Bob, otherwise the value of $t$ is increased by $\theta n$ (the chunk size) and Bob attempts to decode again. This process continues until $t$ reaches the end of the codeword. If no decodings succeeds until then, a decoder error is declared. 

Each attempt of decoding can be divided into two phases. First, at each position $t$, Bob chooses an estimate $\phatt$ (to be specified shortly) for the fraction of bit-flips used by Calvin in the left mega sub-codeword with respect to position $t$. In our proof to come, we show that $\phatt$ satisfies two important conditions, the \textit{list-decoding condition} and the \textit{energy bounding condition} (see Claim~\ref{claim:1}). The list-decoding condition allows  Bob to decode the left mega sub-codeword $\megacodw[\frac{t}{\chunk}]$ through a list decoder with list size $\mlists$. As we will show, the list size $\mlists$ \ml{consists of at most $\blistord$ messages}. 
Here, and in what follows, $\varepsilon>0$ is a constant design parameter that can be considered to be arbitrarily small.
So at this phase Bob obtains a \sj{list $\mlist$ of $\mlists$ {messages}}.
If it is the case that $\phatt$ equals the true fraction of bits $\pref{p}$ flipped by the adversary up to position $t$, then it holds that the transmitted \ml{message} is in $\mlist$. 

Next, for the second phase,  the energy bounding condition states that, if $\phatt$ equals $\pref{p}$, there are no more than $\left(\frac{1}{4}-\frac{\erate^2}{16}\right)\left(n-t\right)$ bits flipped in the right mega sub-codeword with respect to position $t$. Therefore, as we will show, Bob can use a natural \textit{consistency} decoder (defined below) to determine whether to stop or continue the decoding process. More precisely, the decoding process continues if the consistency decoder fails to return a message and stops if a message $\hat{m}$ is decoded from the \ml{messages} in $\mlist$. The decoder also stops when $t$ has reached size $n-\chunk$.

\begin{definition}
	Let $\varepsilon > 0$. Let $\mathbf{x}_{t},\mathbf{x}^{\prime}_{t}\in\mathcal{X}^{n-t}$ be two right mega sub-words with respect to position $t$. The right mega sub-word $\mathbf{x}_{t}$ is \textit{consistent} with the right mega sub-word $\mathbf{x}^{\prime}_{t}$ if and only if the number of the positions that $\mathbf{x}_{t}$ does not agree with $\mathbf{x}^{\prime}_{t}$ is no more than $\left(\frac{1}{4}-\frac{\erate^2}{16}\right)\left(n-t\right)$.
\end{definition}

\begin{definition}
	A consistency decoder applied to a right mega sub-code $\rmegacode[\frac{t}{\chunk}]$ with respect to position $t$ and list $\mlist$ is a decoder that takes the right mega sub-word of a received word $\mathbf{x}^{\prime}$ and returns a unique message $\hat{m}$ in the list $\mlist$, \sj{one of} whose right mega sub-codewords \sj{is} consistent with that of $\mathbf{x}^{\prime}$. If more than one such message exists, then a decoding error is declared.
\end{definition}

Formally, the decoder process of Bob can be described as follows. Essentially, we will use the following definition of $\phat_{t}$ (the estimate to Calvin's corruption fraction at time $t$ used by Bob), which is slightly revised later in Definition~\ref{def:p-hat-t} to be more robust to slight slacknesses that appear in the analysis. Let $p\in\intvquar$, then for $t<n(1-4p)$, $\phatt=0$; otherwise $\phatt = p-\left(\frac{1}{4}-p\right)\left(\frac{n}{t}-1\right)$.  The value of $\phatt$ is $0$ for all $t$ up to $n(1-4p)$ and then it grows up to $p$ as $t$ increases to $n$. For the description below, recall that $\varepsilon>0$ is a constant design parameter that can be considered to be arbitrarily small.

\noindent {\bf 1.} Find the smallest integer $k$ with  $t=k\chunk$ such that $t \geq n(1-4p)$. Namely, let $k_0=\left\lceil \frac{1-4p}{\echunk}\right\rceil$. Then $t=t_0=k_0\chunk$.\newline
 \hspace{0.1in} {\bf 2.} List-decode the left mega sub-code $\megacode[\frac{t}{\chunk}]$ to obtain a list $\mlist$ of \ml{messages} of size $\mlists$, with the list-decoding radius $t\phat_{t}$. 
	More precisely, a \ml{message} is in the list $\mlist$ if \ml{there is a codeword corresponding to $m$ for which} its left mega sub-codeword is of distance no more than $t\phatt$ from the corresponding left mega sub-word of the received word.\newline
\noindent {\bf 3.} Verify the right mega sub-codewords \ml{corresponding to messages} in the list $\mlist$ through a consistency decoder. Specifically, consider the Hamming balls with radius $r=\frac{n-t}{2}-\frac{(n-t)\erate^2}{8}$ centered at the right mega sub-codeword of each codeword \ml{corresponding to a message} in the list $\mlist$. If the corresponding right mega sub-word of the received word is outside all the balls, increase $t$ by $\chunk$ and goto Step (2). If the received right mega sub-word lies in exactly one of the balls, decode to the message corresponding to the center of the ball. If the received right mega sub-word lies in more than one ball 
a decoding error is declared.


For every message $m$, Bob decodes correctly if his estimate $\hat{m}$ equals $m$. 
That is, Bob decodes correctly if for some $\tstar$, the only right mega sub-codeword of the codewords \ml{corresponding to messages in the list $\mlist$} consistent with that of the received codeword corresponds to the message $m$. We show that this indeed happens w.h.p. over the random secrets $\secr^{n-t}$ used by Alice for the right mega sub-codeword with respect to position $\tstar$.
If Bob's estimate $\hat{m}$ is not equal to $m$, Bob is said to make a \textit{decoding error}. The \textit{probability of error} for a message $m$ is defined as the probability over Alice's private secrets $s\in\secr$ that Bob decodes incorrectly. The probability of error for the code $\mcode$ is defined as the maximum of the probability of error for message $m$ over all messages $m\in\msg$.

A rate $R$ is said to be \textit{achievable} if for every $\xi>0,\beta>0$ and every sufficiently large $n$ there exists a code of block length $n$ that allows Alice to communicate $2^{n(R-\beta)}$ distinct messages to Bob with probability of error at most $\xi$. The supremum over $n$ of all achievable rates is the {\it capacity} $C_p$ of the channel.

\paragraph{Adversarial behavior:}
The behavior of Calvin is specified by the channel model above. Specifically, the behavior of Calvin can be characterized by a function $\pref{p}$ defined below which specifies how many errors were ejected by Calvin up-to position $t$. We refer to $\pref{p}$ as a trajectory, and note that the exact trajectory used by Calvin is not known to the decoder Bob.



\begin{definition}[Calvin's Trajectory $\pref{p}$]
	\label{def:p-t}
	Let a codeword of length $n$ consist of $\chunknum$ chunks of sub-codewords. Let $\setchend=\choiceschunk$ and $\leftcw\in\setchend$. Then $\pref{p}$ is the actual portion of bit-flips that occurred in the left mega sub-codeword with respect to $t$.
\end{definition}

In our analysis we assume that Calvin has certain capabilities that may be beyond those available to a causal adversary. This is without loss of generality as we are studying lower bounds on the achievable rate in this work. 
We assume that the trajectory of $\phatt$ that Bob uses in his decoding process is known to Calvin.
This implies (as we will show) that Calvin knows the position $\tstar$ that Bob eventually stops his decoding process. 
In addition, we assume that the list of \ml{messages} obtained through Bob's list decoding process can be determined explicitly by Calvin. Moreover, we assume that Calvin knows the message $m$ {\it a priori}. 

At every list-decoding position $t$, we stress that the subsequent secrets, namely, $(s_{\frac{t}{\chunk}+1},s_{\frac{t}{\chunk}+2},\cdots,s_{\chunknum})$ for the right mega sub-codeword are unknown to Calvin. Indeed, given the causal nature of Alice's encoding, these secrets have not even been chosen by Alice at this point in time. The fact that the secrets are hidden from Calvin implies that $(s_{\frac{t}{\chunk}+1},s_{\frac{t}{\chunk}+2},\cdots,s_{\chunknum})$ are completely independent of the list (obtained through Bob's list decoding) $\mlist$ determined by Calvin. This fact is crucial to our analysis.

Also, we strengthen Calvin by allowing him to choose which bits to flip after position $\tstar$ non-causally. Namely, we assume that Calvin chooses his bit-flip pattern after looking ahead to all the remaining bits of the transmitted codeword. As we show, no matter how these bit-flips are chosen, the right mega sub-codeword has at most $\left(\frac{1}{4}-\frac{\erate^2}{16}\right)\left(n-\tstar\right)$ bits flipped. The fact that the distribution of $(s_{\frac{\tstar}{\chunk}+1},s_{\frac{\tstar}{\chunk}+2},\cdots,s_{\chunknum})$ is independent from the list $\mlist$ will allow us to show that Bob succeeds in his decoding.

\section{Code Analysis for the bit-flip Channel}
\label{sec:code-analysis}
Due to space limitations, the technical details of our proof appear entirely in the Appendix. In what follows, we give a roadmap for our proof, including the major high-level arguments used in the Appendix. 
Throughout, $\varepsilon>0$ is a constant design parameter that can be considered to be arbitrarily small.

\paragraph{Existence of trajectory $\phatt$:} Our analysis of Bob's decoding begins with selecting a {\it decoding reference trajectory} $\phatt$ (Definition~\ref{def:p-hat-t}) as a proxy trajectory for Calvin's trajectory $\pref{p}$. Recall that for each $t$, $\pref{p}$ is the fraction of bit-flips in the left mega sub-codeword with respect to position $t$, and accordingly, $\phatt$ is the fraction of bits that Bob {\em assumes} are flipped up to position $t$. In general, the trajectories $\phatt$ and $\pref{p}$ are not equal.
We show in Claim~\ref{claim:1}, that for $t\geq n(1-4\ppm)$ the selected decoding reference trajectory $\phatt$ satisfies two important conditions, the \textit{list-decoding condition} \eqref{eq:con-1} and the \textit{energy bounding condition} \eqref{eq:con-2} introduced and explained previously. The list decoding condition guarantees a small list size if decoding is done with radius $\phatt$; and the energy bounding condition restricts the remaining bit-flips that the adversary has for the right mega sub-codeword if Bob's estimate $\pref{p}$ to $\phatt$ is approximately correct.

To prove correctness of our decoding procedure, we must introduce a new trajectory $\tilde{p}_{t}$ and an additional parameter $\myp$ before we continue. These additional parameters are closely related to their counterparts in the sense that $\tilde{p}_{t}$ approximately equals $\phatt$ and $\ppm$ approximately equals $\myp$, but in both cases the former is slightly smaller than the latter.
The parameters are introduced to allow robustness in our analysis which absorbs certain slacknesses that are a result of our code construction and analysis technique (e.g., such as the fact that our chunk size $\chunk$ cannot be made too small). We here give our precise definitions, which can be at times better understood intuitively if the reader keeps the above discussion in mind. All our notations are given in Table~\ref{tab:params}.

\paragraph{Existence of position $\tstar$ for which $\phat_{\tstar} \simeq p_{\tstar}$:}
Next in our analysis we chooses the position $t_{0}=\left\lceil\frac{1-4\myp}{\echunk}\right\rceil\chunk \simeq n(1-4p)$ as a {\em benchmarking} position, and separate our analysis into two cases based on whether $p_{t_{0}}$ is greater than $\tilde{p}_{t_0}$ (See Definition~\ref{def:p-tilde-t}) or not.
We use the following classification:
 
\begin{definition}[High Type Trajectory]
	For any trajectory $\pref{p}$ of Calvin, consider the values of $\pref{p}$ and $\tilde{p}_{t}$ at position $t=t_{0}$. If $p_{t_{0}}\geq\tilde{p}_{t_0}$ then Calvin's trajectory $\pref{p}$ is a \textit{high} type trajectory.
\end{definition}

\begin{definition}[Low Type Trajectory]
	For any trajectory $\pref{p}$ of Calvin, consider the values of $\pref{p}$ and $\tilde{p}_{t}$ at position $t=t_{0}$. If $p_{t_{0}}<\tilde{p}_{t_0}$ then Calvin's trajectory $\pref{p}$ is a \textit{low} type trajectory.
\end{definition}


For any High Type Trajectory of Calvin, we show in Claim~\ref{claim-intersection} that $\pref{p}$ always intersects with $\phatt$ at some point $t$ after $t_{0}$ no matter what bit-flip pattern is chosen by Calvin (i.e., at point $t$, Bob's estimate $\phatt$ is equal to the actual amount of bit flips $\pref{p}$).
Moreover, by Claim~\ref{claim-ingap} and Claim~\ref{claim:p-tilde-t}, this implies a value $\tstar$ (the chunk end which falls immediately after the intersection point $t$ above) for which it is guaranteed that the remaining bit-flip power of Calvin is low in the sense that the fraction of bit-flips that Calvin can introduce in the right mega sub-codeword with respect to $\tstar$ is less than $\frac{1}{4}-\frac{\erate^2}{16}$. 
On the other hand, for any Low Type Trajectory of Calvin, we already know that $\pref{p}$ is approximately $\phatt$ at the point $t_{0}$ (they are both nearly 0). Thus we show in Claim~\ref{claim-zero-starting} that setting $\tstar$ to be equal to $t_0$ we are again guaranteed that the remaining bit-flip power of Calvin is low in the sense that the fraction of bit-flips that Calvin can introduce in the right mega sub-codeword with respect to $\tstar$ is less than $\frac{1}{4}-\frac{\erate^2}{16}$. Formally:

\begin{definition}
	\label{def:t-star}
	Let $\varepsilon>0$. Let $\myp\in\intvquar$ and $\echunk=\echval$. Let $\setchend=\choiceschunk$ and $t\in\setchend$. 
	\begin{enumerate}[label=(\roman*)]
		\item if $p_{t_{0}}<\tilde{p}_{0}$, $\tstar=t_{0}=\left\lceil\frac{1-4\myp}{\echunk}\right\rceil \chunk$.
		\item if $p_{t_{0}}\geq\tilde{p}_{0}$, $\tstar$ is the smallest value in $\setchend$ such that $p_{\tstar-\chunk}>\phat_{\tstar-\chunk}$ and $p_{\tstar}\leq\phat_{\tstar}$.
	\end{enumerate}
\end{definition}

\paragraph{Success of Bob's decoding:}
Bob starts decoding at position $t_{0}$ and continues to decode at subsequent chunk ends until a message is returned by the consistency decoder or until Bob reaches the end of the received word. Claim~\ref{thm:b-list-decodable} and Corollary~\ref{coro:b-list-decodable} (via the list decoding condition \eqref{eq:con-1}) guarantee that Bob in his first phase of decoding will always obtain a list of \sj{messages} of list size $L=\blistord$ from the list decoder no matter what position $t$ is currently being considered. 
The analysis in Claim~\ref{thm:b-list-decodable} and Corollary~\ref{coro:b-list-decodable} and in the claims to come is w.h.p. over our random code construction. 
Moreover, for any $t$, the energy bounding condition \eqref{eq:con-2} implies that, in the case of $\pref{p} \simeq \phatt$, the 
unused bit-flips left for Calvin are less than a $\frac{1}{4}-\frac{\erate^2}{16}$ fraction of the remaining part of the codeword.

We start by studying the case in which the current iteration of Bob satisfies $t=\tstar$ (which implies that $\pref{p} \simeq \phatt$).
In Claim~\ref{claim:good-1}, Claim~\ref{claim:good-2}, and Claim~\ref{claim:good-3} we show that if $t=\tstar$ Calvin's remaining bit-flip power is not sufficient to mislead the consistency decoder, and will allow unique decoding from the list \sj{of messages} Bob holds.
Namely, we show that with high probability over the secret random bits of Alice used in the encoding process, our code design guarantees that the only \sj{message} in our list that is consistent with the transmitted codeword is the one transmitted by Alice.

More precisely, consider the consistency checking phase of Bob in the iteration in which $t=\tstar$. In this iteration we know (via the energy bounding condition \eqref{eq:con-2}) that the number of unused bit-flips of Calvin is less than a $\frac{1}{4}-\frac{\erate^2}{16}$ fraction of the remaining part of the codeword. At this point in time, Bob holds a small list of \sj{messages} $\mlist$ that has been (implicitly) determined by Calvin, and via the consistency decoder wishes to find the unique message $m$ in the list that was transmitted. For any transmitted message $m$, as the list is small, we can guarantee that with high probability over our code design \sj{most of the right mega sub-codewords} corresponding to $m$ \sj{are} roughly of distance $\frac{n-t}{2}$ from any \sj{right mega sub-codeword of any other message} in the list $\mlist$, which in turn implies, given the bound on Calvin's remaining bit-flips, that decoding will succeed. However, this analysis is misleading as one must overcome the adversarial choice of $\mlist$ in establishing correct decoding. (We note that a naive use of the union bound does not suffice to overcome all potential lists $\mlist$.)

For successful decoding regardless of Calvin's adversarial behavior, we use the randomness in Alice's stochastic encoding (not known {\it a priori} to Calvin) and the fact that Calvin is causal. Recall that every message $m$ can be encoded into several codewords based on the randomness of Alice. Let $s_{left}$ and $s_{right}$ be the collection of Alice's random bits used up to and after position $\tstar$ respectively. When Calvin (perhaps partially) determines the list $\mlist$ we may assume that he has full knowledge of $s_{left}$.
However by his causal nature he has no knowledge regarding $s_{right}$. 
As the list $\mlist$ is obtained at position $\tstar$ by Bob, we may now 
take advantage of the fact that it  is independent of the randomness $s_{right}$ used by Alice. 
Specifically, instead of considering a single codeword in our analysis that corresponds to $m$ we consider the family of codewords that on one hand all share a specific $s_{left}$ (which corresponds to Calvin's view up to position $\tstar$) but have different $s_{right}$. From Calvin's perspective at position $\tstar$, all codewords in this family are equivalent and completely match his view so far. Using a family of codewords that are independent of $\mlist$ in our analysis, and allowing the decoding to fail on a small fraction of them, enables us to amplify the success rate of our decoding procedure to the extent that it can be used in the needed union bound.
Our full analysis is given in Claim~\ref{claim:good-1}, Claim~\ref{claim:good-2}, and Claim~\ref{claim:good-3}.

%
%

We now address the case $t \ne \tstar$ in Claim~\ref{claim:p-tilde-t}.
In this case, by previous discussions, it holds that we are in a 
High Type Trajectory of Calvin and that $\pref{p} > \phatt > \tilde{p}_{t}$.
When $t \ne \tstar$ we show that the decoding process of Bob will not return any codewords at all (as all \sj{messages} in the list will fail the consistency test). In this case, we continue with the next value of $t$ (the next chunk end). 

We summarize all the properties of our code in Claim~\ref{claim:prob-good-exist}. With those properties established, through Bob's iterative decoder 
we show in Claim~\ref{claim:prob-decode} that Bob is able to correctly decode the transmitted message $m$ w.h.p. over the randomness of Alice. Finally, in Theorem~\ref{thm:capacity} we show that the channel capacity $C_{p}$ claimed is indeed achievable. We depict the flow of our claims, corollaries and theorems in Figure~\ref{fig:proofs-organization}.

{{\noindent {\bf Remark:}} The scenario wherein Calvin has $n\epsilon$ lookahead can also be handled via the codes above. Roughly, if we back off in our rate by $\epsilon$ the trajectory $\phatt$ gets shifted to the left by $n\epsilon$. We then ``sacrifice'' $n\epsilon$ bits to Calvin by demanding that a more stringent energy-bounding condition be satisfied, in which the block length of the second part (succeeding $\tstar$) is reduced by $n\epsilon$. With these tweaks, the remainder of the analysis of the $n\epsilon$-lookahead codes is identical to that of the causal codes discussed above.}

%
%
%
%
%
%
%
%

\vspace{-0.1in}
\section{Conclusion}
\vspace{-0.1in}

In this work we obtain the capacity $C^{flip}_{p}$ for the causal bit-flip adversarial channel and $C^{erase}_p$ for the causal erasure adversarial channel. We believe that for arbitrary $q>2$, capacities of the $q$-ary causal adversarial symbol error and erasure channels can be derived. Specifically, the outer bounds for the $q$-ary problems follow from using the $q$-ary Plotkin bound~\cite{blake1976introduction} replacing the binary Plotkin bound used in the outer bound techniques of~\cite{dey2012improved} and~\cite{bassily2014causal}. The achievability strategies in this paper for $q=2$ get modified by replacing binary list-decoding and energy-bounding conditions with $q$-ary versions, leading to codes with rates achieving the outer bounds mentioned above for the $q$-ary problems. Due to lack of space we do not present details in this abstract.
Given the significant difference between the capacities of causal adversaries (characterized in this work) and the $n\epsilon$-delayed adversaries considered in~\cite{dey2010coding} (with capacities equaling the corresponding random coding capacities), one open question that we do not address in this work is the question of characterizing the capacity of an adversary that is ``nearly'' causal, {\it i.e.}, a delay of say some {\it constant} number of bits, or even just $one$ bit. Another promising direction for future research is whether the techniques of~\cite{guruswami2010codes} to construct computationally efficient capacity-achieving codes for oblivious adversaries can be modified to construct corresponding capacity-achieving codes for causal adversaries.

\newpage
\bibliographystyle{unsrt}
\bibliography{./refs}

\newpage
\begin{appendices}

\section{Bit-flip channel}
\label{app:flip}
\begin{figure}[p]
	\centering
	\includegraphics[scale=0.8]{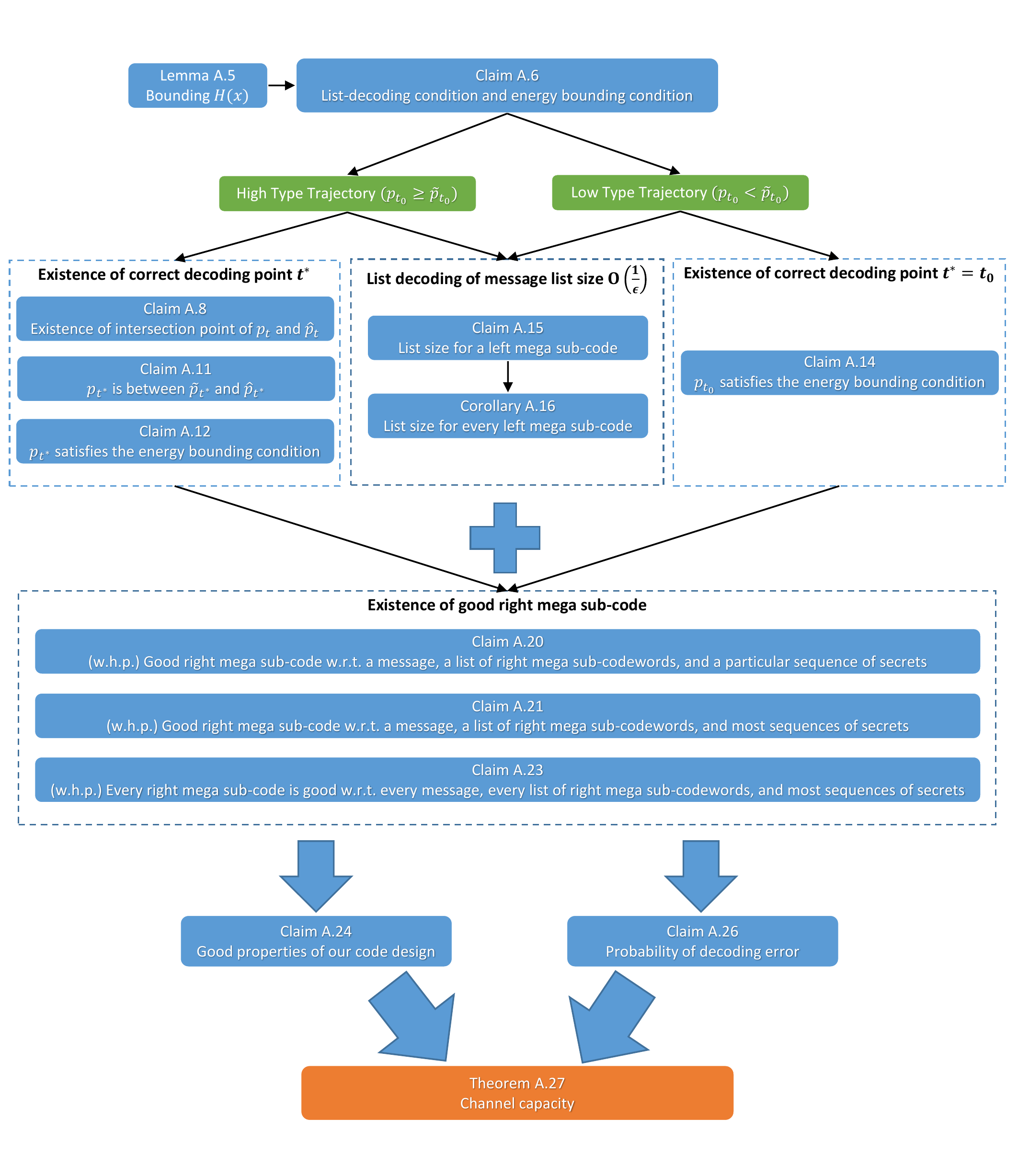}
	\caption{Organization of our claims, corollaries and theorems}
	\label{fig:proofs-organization}
\end{figure}

We start by summarizing several definitions and claims. The detailed presentations of the definitions and claims are followed by the summary.
\begin{enumerate}[label=\textbf{\arabic*}.]
	\item {\bf Preliminary definitions and technical claims}
		\begin{itemize}
			\item Definition~\ref{def:p-bar-t}: Defines $\pbar_t$ for deriving the definition of the decoding reference trajectory $\phatt$.
			\item Definition~\ref{def:p-hat-t}: Defines the decoding reference trajectory $\phatt$, which is a revision of the definition given in Section~\ref{sec:model}.
			\item Lemma~\ref{lemma-1}: A technical lemma which gives a certain upper bound on the binary entropy function.
		\end{itemize}
	\item {\bf The list decoding and energy bounding properties}
		\begin{itemize}
			\item Claim~\ref{claim:1}: This is a central claim which shows that the decoding reference trajectory $\phatt$ satisfies the list-decoding condition and the energy bounding condition.
		\end{itemize}
	\item {\bf Establishing the existence of correct decoding point}
		\begin{itemize}
			\item Claim~\ref{claim-intersection}: Calvin's trajectory $\pref{p}$ intersects with the decoding reference trajectory $\phatt$ no later than the second to last chunk.
			\item Definition~\ref{def:p-tilde-t}: Defines the energy bounding benchmarking trajectory $\tilde{p}_{t}$, which delimit the smallest value of $\pref{p}$ that meets with the energy bounding condition.
			\item Claim~\ref{claim-ingap}: For any High Type Trajectory $\pref{p}$, the value of $\pref{p}$ at the chunk end immediately after the intersection of the decoding reference trajectory $\phatt$ with $\pref{p}$ satisfies the energy bounding condition. 
			\item Claim~\ref{claim:p-tilde-t}: If $\pref{p}$ is larger than $\tilde{p}_{t}$ at point $t$, then $\pref{p}$ satisfies the energy bounding condition.
			\item Claim~\ref{claim-zero-starting}: At point $t_{0}$, if $p_{t_{0}}$ is approximately $\phat_{t_{0}}$ then it satisfies the energy bounding condition.
		\end{itemize}
	\item {\bf List decoding properties}
		\begin{itemize}
			\item Claim~\ref{thm:b-list-decodable}: A left mega sub-code can be list decoded to a list \zt{of messages }of size $\blistord$ with high probability.
			\item Corollary~\ref{coro:b-list-decodable}: Every left mega sub-code can be list decoded to a list \zt{of messages }of size $\blistord$ with high probability.
		\end{itemize}
	\item {\bf Utilizing the energy bounding condition}
		\begin{itemize}
			\item Definition~\ref{def:dist}: Defines the distance between a right mega sub-codeword and a list of \sj{right mega sub-}codewords.
			\item Definition~\ref{def:good-1}: Defines certain {\em goodness} properties of a right mega sub-code with respect to \sj{a message}, 
a list \sj{of right mega sub-codewords (of messages excluding the transmitted message}), and a sequence of secrets.
			\item Definition~\ref{def:good-2}: Defines $\sigma$-goodness property of a right mega sub-code with respect to a message, 
			a list \sj{of right mega sub-codewords (of messages excluding the transmitted message}), and most sequences of secrets.
			\item Claim~\ref{claim:good-1}: A right mega sub-code is good with respect to a message, 
			a list \sj{of right mega sub-codewords (of messages excluding the transmitted message}), and a sequence of secrets. 			\item Claim~\ref{claim:good-2}: A right mega sub-code is $\sigma$-good with respect to \sj{a message} \ml{and a list} 
			\sj{of right mega sub-codewords (of messages excluding the transmitted message}.)
			\item Claim~\ref{claim:good-3}: A right mega sub-code is $\sigma$-good with respect to 
			\ml{every transmitted message and every list of right mega sub-codewords (of messages excluding the transmitted message).}
		\end{itemize}
	\item {\bf Summary and proof of Theorem~\ref{the:main:flip}}
		\begin{itemize}
			\item Claim~\ref{claim:prob-good-exist}: With high probability our code $\mcode$ possesses the needed properties.
			\item Claim~\ref{claim:prob-decode}: With high probability Bob succeeds in decoding.
			\item Theorem~\ref{thm:capacity}: Rephrasing of Theorem~\ref{the:main:flip} (channel capacity).
		\end{itemize}
\end{enumerate}
	
\subsection{Preliminary definitions and technical claims}

Throughout our analysis, Calvin can flip at most  fraction $\ppm$ of the bits of the transmitted codeword.
Let $\myp=\ppm+\frac{\erate^2}{16}$. The parameter $\myp$ is considered in several of the derivations to follow.
We take $\epsilon>0$ and $\theta>0$ to be small constants.
All parameters and their relations appear in Table~\ref{tab:params}.

\begin{definition}[$\pbar_{t}$]
	\label{def:p-bar-t}
	Let $\myp\in\intvquar$. Let a codeword of length $n$ consist of $\chunknum$ chunks of sub-codewords. Let $\setchend=\choiceschunk$ and $\leftcw\in\setchend$. Then $\pbar_{t}$ is defined as
	\begin{align}
	\label{eq:p-bar-t}
	\pbar_t=\myp-\frac{1}{4}+\frac{t}{4n}
	\end{align} 
\end{definition}

\begin{remark}
	For $t\geq n(1-4\myp)$ we have $\pbar_{t}\geq 0$. 
\end{remark}

\begin{definition}[Decoding Reference Trajectory $\phatt$]
	\label{def:p-hat-t}
	Let $\erate>0$ and $\myp\in\intvquar$. Let $\myfun[t]=1-4(\myp-\pbar_{t})$ for $\pbar_{t}\in\left[0,\myp\right]$. Then for $t\in\left[ n(1-4\myp),n\right] $, $\phatt$ is defined as 
	\begin{align}
	\label{eq:p-hat-t}
	\phatt=\frac{\pbar_{t}}{\myfun[t]}+\efrac
	\end{align} For $t<n(1-4\myp)$, $\phatt=0$.
\end{definition}

\begin{remark}
	\label{remark-p-hat-t}
	From Definition~\ref{def:p-bar-t}, for $\leftcw\in\left[n(1-4\myp),n\right]$ we have $\pbar_{t}\in\left[ 0,\myp\right] $ and
	\begin{align}
		\myfun[t] &= 1-4(\myp-\pbar_{t}) \nonumber\\
		&= 1-4\left(\myp-\myp+\frac{1}{4}-\frac{t}{4n}\right) \nonumber\\
		&= \frac{t}{n}
	\end{align}
	
	By substituting $\pbar_{t}$ and $\myfun[t]$ into \eqref{eq:p-hat-t}, $\phatt$ can be expressed as a function of $t$, i.e., for $t\in\left[ n(1-4\myp),n\right] $ we have
	\begin{align}
		\phatt&=\frac{\pbar_{t}}{\myfun[t]}+\efrac\nonumber\\
		&=\frac{n\pbar_{t}}{t}+\frac{n^2\erate^2}{16t^2}\nonumber\\
		&=\frac{n}{t}\left( \myp-\frac{1}{4}+\frac{t}{4n}\right) +\frac{n^2\erate^2}{16t^2}\nonumber\\
		&=\frac{n^2\erate^2}{16}\cdot\frac{1}{t^2}-\frac{n(1-4\myp)}{4}\cdot\frac{1}{t}+\frac{1}{4}\label{eq:p-hat-t-2}
	\end{align}
	In Figure~\ref{fig:p-hat-t} we plot $\phatt$ as a function of $t$ for a particular choice of $n,\myp$ and $\erate$.
	\begin{figure}[htb]
		\centering
		\includegraphics[scale=0.8]{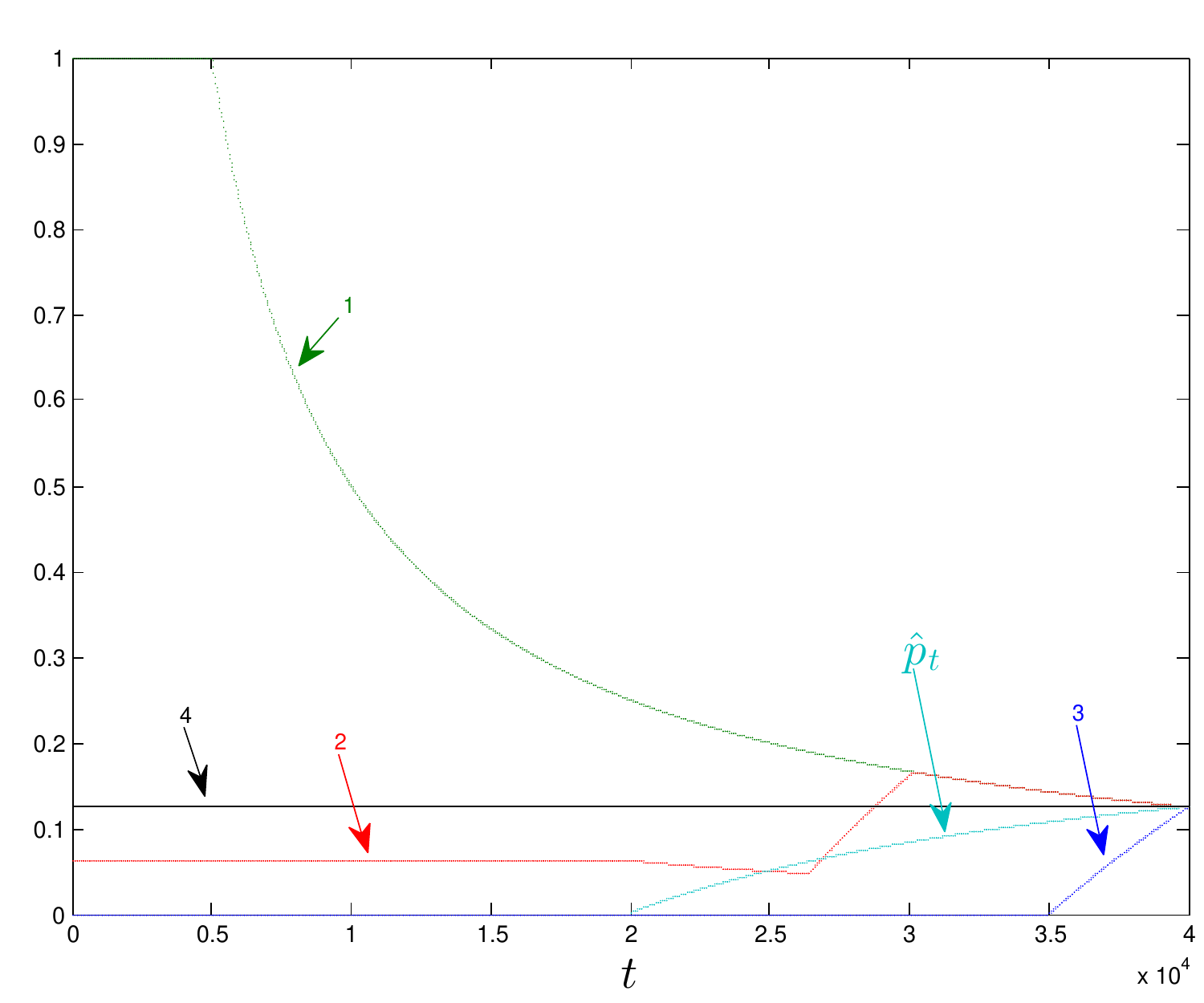}
		\caption[The decoding reference trajectory $\phatt$]{The decoding reference trajectory $\phatt$ as a function of $t$ when $n=40000,\myp=\frac{1}{8}$ and $\erate=0.08$. Four possible trajectories of Calvin are shown in comparison: $1.$ Squander all the bit-flips at the beginning; $2.$ A typical trajectory of Calvin; $3.$ Reserve all the bit-flips for the end; $4.$ Uniformly distribute all bit-flips to each chunk.}
		\label{fig:p-hat-t}
	\end{figure}
\end{remark}

\begin{lemma}
	\label{lemma-1}
	Let $q\in\left[ 0,\frac{1}{2}\right) $ and $\gamma\in\left( 0,\frac{1}{2}\right) $. Then
	\begin{align*}
		H(q+\gamma)<H(q)+2\gamma\log\frac{1}{\gamma}
	\end{align*}
\end{lemma}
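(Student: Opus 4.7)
The plan is to reduce the inequality to a single baseline estimate at $q=0$, namely $H(\gamma)<2\gamma\log(1/\gamma)$ for $\gamma\in(0,1/2)$. To verify this baseline, I will write $H(\gamma)=\gamma\log(1/\gamma)+(1-\gamma)\log(1/(1-\gamma))$ and argue that the second summand is strictly smaller than the first. Equivalently, the auxiliary function $\varphi(\gamma):=-\gamma\log\gamma+(1-\gamma)\log(1-\gamma)$ is strictly positive on $(0,1/2)$; since $\varphi(0^+)=\varphi(1/2)=0$, this reduces to a routine sign analysis of $\varphi'(\gamma)=-\log(\gamma(1-\gamma))-2/\ln 2$, which is positive near $0$, vanishes at a unique interior point, and is negative near $1/2$, so $\varphi$ rises from $0$ to a single interior maximum and then falls back to $0$.

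For the full statement, I will use that $H'(x)=\log\tfrac{1-x}{x}$ is positive on $(0,1/2)$ and strictly decreasing on $(0,1)$ (since $H''<0$). When $q+\gamma\le 1/2$, monotonicity of $H'$ gives the translation inequality
\begin{equation*}
H(q+\gamma)-H(q)=\int_q^{q+\gamma}H'(x)\,dx\;\le\;\int_0^{\gamma}H'(x)\,dx\;=\;H(\gamma),
\end{equation*}
and combining with the baseline bound closes this case. When $q+\gamma>1/2$, I will invoke the symmetry $H(x)=H(1-x)$ and set $q':=1-q-\gamma\in[0,1/2)$. If $q'\le q$, monotonicity of $H$ on $[0,1/2]$ already gives $H(q+\gamma)-H(q)=H(q')-H(q)\le 0<2\gamma\log(1/\gamma)$. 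If $q'>q$, the same translation argument applied over $[q,q']$ yields $H(q')-H(q)\le H(q'-q)=H(1-2q-\gamma)$; the hypothesis $q+\gamma>1/2$ rearranges to $1-2q-\gamma<\gamma$, so monotonicity of $H$ on $[0,1/2]$ together with the baseline bound finishes the argument.

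The only nontrivial step is the sign analysis of $\varphi$ used for the baseline bound; the surrounding reductions are immediate consequences of concavity and symmetry of $H$, so I do not anticipate any hidden obstacle.
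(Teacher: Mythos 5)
Your proof is correct, and its overall skeleton (reduce to the $q=0$ baseline $H(\gamma)<2\gamma\log\tfrac1\gamma$ via concavity, then verify the baseline) matches the paper's. However the details differ in two places, one of which adds unnecessary work. First, the baseline: the paper proves $H(\gamma)<2\gamma\log\frac{1}{\gamma}$ by writing $H(\gamma)=\gamma\log\frac{1}{\gamma}+(1-\gamma)\log\frac{1}{1-\gamma}$ and then applying two elementary pointwise bounds on the logarithm, namely $\log\frac{1}{1-x}\le 2x$ and $2(1-x)<\log\frac1x$ for $x\in[0,\tfrac12)$, both extracted from a single sign analysis of $f(x)=\log(1-x)+2x$. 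You instead package the whole baseline into the sign analysis of $\varphi(\gamma)=\gamma\log\tfrac1\gamma-(1-\gamma)\log\tfrac1{1-\gamma}$, using $\varphi(0^+)=\varphi(\tfrac12)=0$, $\varphi''<0$ on $(0,\tfrac12)$, and the endpoint signs of $\varphi'$. Both are routine calculus; yours is conceptually tighter (one function) while the paper's avoids second derivatives. Second, the reduction: the paper's one-line secant-slope inequality for a concave function already yields $H(q+\gamma)-H(q)\le H(\gamma)-H(0)$ for all admissible $q$, with no restriction to $q+\gamma\le\tfrac12$. Your integral formulation $\int_q^{q+\gamma}H'\le\int_0^\gamma H'$ is the same fact, and it likewise holds whenever $H'$ is decreasing on $(0,1)$, i.e.\ regardless of whether $q+\gamma$ exceeds $\tfrac12$. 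Your case split using symmetry of $H$ is therefore unnecessary; dropping it and simply invoking monotonicity of $H'$ on all of $(0,1)$ shortens your argument to essentially the paper's.
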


\begin{proof}
	To prove the lemma, we first show that
	\begin{align*}
		\log (1-x)+2x\geq 0
	\end{align*}
	for $x\in\left[ 0,\frac{1}{2}\right) $ and
	\begin{align*}
		\log (1-x)+2x<0
	\end{align*} 
	for $x\in\left( \frac{1}{2},1\right] $. 
	
	Let $f(x)=\log (1-x)+2x$ where $x\in\left[ 0,1\right] $. Then $f^{\prime}(x)=2-\frac{1}{(1-x)\ln 2}$. Solving $f^{\prime}(x)=0$, we obtain $x=1-\frac{1}{2\ln 2}<\frac{1}{2}$. Then for $x\in\left( 0,1-\frac{1}{2\ln 2}\right) $, $f^{\prime}(x)>0$ and for $x\in\left( 1-\frac{1}{2\ln 2},1\right) $, $f^{\prime}(x)<0$.
	
	Since $f(0)=f\left( \frac{1}{2}\right) =0$, then for $x\in\left[ 0,\frac{1}{2}\right) $ we have $\log (1-x)+2x\geq0$, and therefore,
	\begin{align}
		\label{eq:le-1}
		\log \frac{1}{1-x}\leq 2x
	\end{align}
	On the other hand, for $x\in\left( \frac{1}{2},1\right] $ we have $\log (1-x)+2x<f\left( \frac{1}{2}\right) =0$, and thus, replacing $(1-x)$ by $x$ we have for $x\in\left[ 0,\frac{1}{2}\right) $
	\begin{align}
		\label{eq:le-11}
		2(1-x)<\log\frac{1}{x}
	\end{align}
	
	Since the binary entropy function $H(q)$ is concave, namely, the second derivative of $H(q)$ is negative for $q\in\left( 0,1\right) $, then
	\begin{align*}
		\frac{H(q+\gamma)-H(q)}{q+\gamma-q}<\frac{H(\gamma)-H(0)}{\gamma-0}
	\end{align*}
	Therefore, we have
	\begin{align}
		H(q+\gamma)-H(q) &<H(\gamma)-H(0)\nonumber\\
		&= \gamma\log \frac{1}{\gamma}+(1-\gamma)\log \frac{1}{1-\gamma}\nonumber\\
		&\leq \gamma\log \frac{1}{\gamma}+(1-\gamma)2\gamma\label{eq:le-2}\\
		&< \gamma\log \frac{1}{\gamma}+\gamma\log \frac{1}{\gamma}\label{eq:le-3}\\
		&= 2\gamma\log \frac{1}{\gamma}\nonumber
	\end{align}
	where \eqref{eq:le-2} follows by \eqref{eq:le-1} and \eqref{eq:le-3} follows by \eqref{eq:le-11}.
\end{proof}

\subsection{The list decoding and energy bounding properties}

\begin{claim}
	\label{claim:1}
	Let $\erate>0$ and $\myp\in\intvquar$. Let $C_{\myp}=\myopt$ and $R=C_{\myp}-\erate$ where $\myfun=1-4(\myp-\pbar)$ and $\pbar\in\left[0,\myp\right]$. Let a codeword of length $n$ consist of $\chunknum$ chunks of sub-codewords where $\echunk=\echval$. Let $\setchend=\choiceschunk$. Then $\forall\leftcw\geq n(1-4\myp)$ and $\leftcw\in\setchend$ there exists $\phatt\in\intvone$ such that
	\begin{align}
		\label{eq:con-1}
		\leftcw\left( 1-H(\phatt)\right)-\frac{n\erate}{2} \geq nR
	\end{align}
	and that 
	\begin{align}
		\label{eq:con-2}
		n\myp-\leftcw\phatt+\elfrac\leq\frac{n-\leftcw}{4}
	\end{align}
\end{claim}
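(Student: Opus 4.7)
The plan is to take $\phatt$ to be precisely the quantity defined in Definition~\ref{def:p-hat-t} and verify both inequalities using the explicit closed form $\phatt = \frac{n\pbar_t}{t} + \frac{n^2\erate^2}{16t^2}$ derived in Remark~\ref{remark-p-hat-t}, together with the identities $\myfun[t] = t/n$ and $\pbar_t = \myp - \frac{1}{4} + \frac{t}{4n}$. First I would check that $\phatt$ is a legitimate argument for the binary entropy function and for Lemma~\ref{lemma-1}: for $t \geq n(1-4\myp)$ the base term $n\pbar_t/t$ sweeps $[0, \myp]$ as $t$ varies over $[n(1-4\myp), n]$, while the perturbation $n^2\erate^2/(16t^2)$ is $O(\erate^2)$, so for $\erate$ small enough $\phatt < 1/2$.

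The energy bounding condition~\eqref{eq:con-2} reduces to a direct computation. Substituting $t\phatt = n\pbar_t + \frac{n^2\erate^2}{16t}$ and using $\myp - \pbar_t = \frac{n-t}{4n}$, I obtain $n\myp - t\phatt = \frac{n-t}{4} - \frac{n^2\erate^2}{16t}$; adding $\frac{(n-t)\erate^2}{16}$ and demanding the result be at most $\frac{n-t}{4}$ reduces algebraically to $t(n-t) \leq n^2$, which is immediate.

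For the list decoding condition~\eqref{eq:con-1}, I would split $\phatt = q_t + \gamma_t$ with $q_t \triangleq \pbar_t / \myfun[t] = n\pbar_t/t$ and $\gamma_t \triangleq n^2\erate^2/(16t^2)$. Since $\pbar_t \in [0,\myp]$, the fact that $C_{\myp}$ is a minimum over this range yields $nC_{\myp} \leq n\myfun[t]\bigl(1-H(q_t)\bigr) = t\bigl(1-H(q_t)\bigr)$. Using $nR = nC_{\myp} - n\erate$, it therefore suffices to establish the entropy perturbation bound $t\bigl(H(\phatt) - H(q_t)\bigr) \leq n\erate/2$. Applying Lemma~\ref{lemma-1} with $q = q_t$ and $\gamma = \gamma_t$ (both below $1/2$) gives $t\bigl(H(\phatt) - H(q_t)\bigr) < \frac{n^2\erate^2}{8t}\log\frac{16t^2}{n^2\erate^2}$, and using $n/t \leq 1/(1-4\myp)$ together with $16t^2/(n^2\erate^2) \leq 16/\erate^2$, this is bounded by $\frac{n\erate^2}{8(1-4\myp)}\log(16/\erate^2) = o(n\erate)$. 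The main obstacle is precisely this last step: matching a logarithmic entropy loss in $\erate$ against the linear rate slack $n\erate/2$ is only possible when $\erate\log(1/\erate) \ll 1-4\myp$, which is the ``constant design parameter'' regime used throughout the paper; once that choice is made, combining the two bounds finishes the claim.
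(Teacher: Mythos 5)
Your proof follows the same route as the paper: take $\phatt$ from Definition~\ref{def:p-hat-t}, verify the energy bounding condition~\eqref{eq:con-2} by the direct algebraic computation $n\myp - t\phatt = \frac{n-t}{4} - \frac{n^2\erate^2}{16t}$ together with $t(n-t)\leq n^2$, and verify the list-decoding condition~\eqref{eq:con-1} by splitting $\phatt$ into the ``capacity-achieving'' piece $q_t = \pbar_t/\myfun[t]$ (handled via the definition of $C_{\myp}$ as a minimum) and the perturbation $\gamma_t = \efrac$. The one place you diverge from the paper is in bounding the entropy perturbation $t\bigl(H(q_t+\gamma_t)-H(q_t)\bigr)$: you apply Lemma~\ref{lemma-1} in its stated form, getting the $2\gamma\log(1/\gamma)$ bound and hence a residual loss of order $n\erate^2\log(1/\erate)/(1-4\myp)$, which you then must absorb by taking $\erate$ small enough relative to $1-4\myp$. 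The paper instead uses the cruder but cleaner variant $H(q+\gamma)<H(q)+2\sqrt{\gamma}$ (valid for $\gamma<1/16$, since there $\gamma\log(1/\gamma)<\sqrt\gamma$); because $\efrac$ was engineered so that $\myfun[t]\cdot 2\sqrt{\efrac}=\erate/2$ exactly, this makes the loss equal to $\erate/2$ on the nose, with no logarithmic overhead and only the milder implicit constraint $\erate<\myfun[t]$. Your version is correct in the regime the paper works in, but it is slightly wasteful and obscures why the perturbation $\efrac$ was chosen with that particular $\alpha^{-2}$ scaling; noticing that $\sqrt{\efrac}$ is designed to cancel $\myfun[t]$ would tighten your last step.
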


\begin{proof}
	From Remark~\ref{remark-p-hat-t}, for $\leftcw\in\left[n(1-4\myp),n-\chunk\right]$ we have $\myfun[t]=\frac{t}{n}$, and it follows that $n\myfun[t]=t=k\chunk$ for some integer $k$. Substituting $t=n\myfun[t]$ into \eqref{eq:con-1} and dividing both sides by $n$, we obtain
	\begin{align}
		\label{eq:con-1-reduced}
		\myfun[t]\left( 1-H(\phatt)\right) -\elist\geq R
	\end{align}
	
	Next, from Definition~\ref{def:p-hat-t}, substituting \eqref{eq:p-hat-t} into the \textit{left hand side} (LHS) of \eqref{eq:con-1-reduced} we have
	\begin{align}
		\lefteqn{\myfun[t]\left( 1-H\left( \frac{\pbar_{t}}{\myfun[t]}+\efrac\right) \right) -\elist}\nonumber\\
		& &>&\myfun[t]\left( 1-H\left( \frac{\pbar_{t}}{\myfun[t]}\right) -2\sqrt{\efrac}\right) -\elist \label{eq:claim-1-left-larger-1}\\
		& &=& \myfun[t]\left( 1-H\left( \frac{\pbar_{t}}{\myfun[t]}\right) -\frac{\erate}{2\myfun[t]}\right) -\elist\nonumber\\
		& &=& \myfun[t]\left( 1-H\left( \frac{\pbar_{t}}{\myfun[t]}\right) \right) -\erate \nonumber
	\end{align}
	where \eqref{eq:claim-1-left-larger-1} follows from a variant of Lemma~\ref{lemma-1}. More precisely, for $\gamma<\frac{1}{16}$ note that $\gamma\log\frac{1}{\gamma}<\sqrt{\gamma}$, and therefore, $H(q+\gamma)<H(q)+2\sqrt{\gamma}$.
	
	Let $\pbar_{0}$ be the optimum $\pbar$ that minimizes the expression $\mycapa[]$, namely, $$C_{\myp}=\mycapa$$ then
	\begin{align}
		\myfun[t]\left( 1-H\left( \frac{\pbar_{t}}{\myfun[t]}\right) \right) -\erate\geq\mycapa-\erate=R
	\end{align}
	and thus,
	\begin{align}
		\myfun[t]\left( 1-H\left( \frac{\pbar_{t}}{\myfun[t]}+\efrac\right) \right) -\elist\geq R
	\end{align}
	
	Thus far we have satisfied condition~\eqref{eq:con-1} in our claim. To see condition~\eqref{eq:con-2}, we substitute \eqref{eq:p-hat-t} into the LHS of \eqref{eq:con-2} and note that $\myfun[t]=\frac{t}{n}$ we have
	\begin{align}
		n\myp-\leftcw\left( \frac{\pbar_{t}}{\myfun[t]}+\efrac\right) +\elfrac & = n\myp-\leftcw\left( \frac{n\pbar_{t}}{\leftcw}+\frac{n^2\erate^2}{16\leftcw^2}\right) +\elfrac\nonumber\\
		& = n\myp-n\pbar_{t}-\frac{n^2\erate^2}{16\leftcw}+\elfrac\nonumber\\
		& < n\myp-n\pbar_{t}\nonumber\\
		& = n\myp-n\left( \myp-\frac{1}{4}+\frac{t}{4n}\right)\label{eq:con-2-less-1}\\
		& = \frac{n-t}{4}\nonumber
	\end{align}
	where \eqref{eq:con-2-less-1} follows by substituting \eqref{eq:p-bar-t} into $\pbar_{t}$.
\end{proof}

\begin{remark}
	Condition~\eqref{eq:con-1} in Claim~\ref{claim:1} corresponds to the list-decodability of our code.
	We will refer to it as the \textit{list-decoding condition}.
	We refer to Condition~\eqref{eq:con-2} as the \textit{energy bounding condition} as it bounds the number of bit-flips left for the adversary to impose on the remaining codeword. In Figure~\ref{fig:p-region}, the uppermost solid curve corresponds to the setting of $\phatt$ for which the list-decoding condition is tight; the lowermost solid curve corresponds to the setting of $\phatt$ for which the energy bounding condition is tight. Any curve between these two solid curves satisfies both the list-decoding condition and energy bounding condition. As is shown in Figure~\ref{fig:p-region}, the decoding reference trajectory $\phatt$ we selected is a bit above the lowest solid curve.  
	\begin{figure}[htbp]
		\centering
		\includegraphics[scale=0.8]{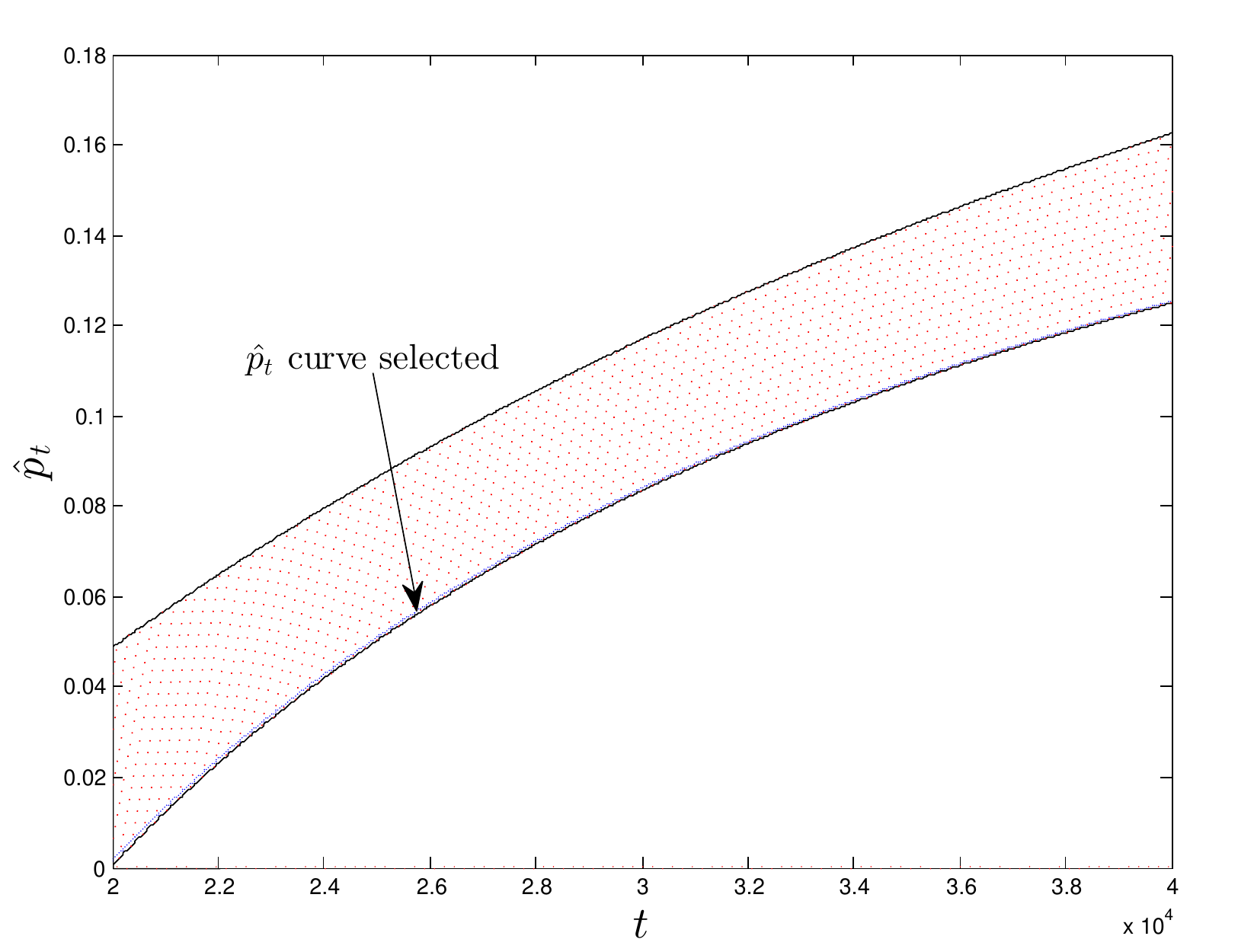}
		\caption{The region encompassed by the list-decoding and the energy bounding condition \zx{for the case where $n=40000$, $\myp=\frac{1}{8}$, $\erate=0.08$ and $t\in[20000,40000]$}.}
		\label{fig:p-region}
	\end{figure}
\end{remark}

\subsection{Establishing the existence of correct decoding point}
First we show that $\phatt$ must eventually be greater than $\pref{p}$.

\begin{claim}
	\label{claim-intersection}
	Let $\erate>0$ and $\myp\in\intvquar$. Let a codeword of length $n$ consist of $\chunknum$ chunks of sub-codeword where $\echunk=\echval$. Then we have
	\begin{align}
		\phat_{(n-\chunk)}(n-\chunk)\geq n\myp
	\end{align}
\end{claim}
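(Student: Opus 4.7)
The plan is to prove this by direct calculation, using the closed-form expression for $\phatt$ derived in Remark~\ref{remark-p-hat-t}, and then exploiting the specific choice $\echunk = \echval$ to verify the inequality.

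First I would invoke the identity
\[
\phatt = \frac{n^{2}\erate^{2}}{16}\cdot\frac{1}{t^{2}} - \frac{n(1-4\myp)}{4}\cdot\frac{1}{t} + \frac{1}{4},
\]
which is valid on $t \in [n(1-4\myp),n]$ and which certainly includes $t = n-\chunk$ provided $\erate$ and $\theta$ are small enough (since $n-\chunk = n(1-\echunk) \geq n(1-4\myp)$ is equivalent to $\echunk \leq 4\myp$, which holds because $\echunk = \erate^{2}(1-4\myp)/4 \leq 1/4 \cdot 4\myp$ for small $\erate$). Multiplying through by $t$ gives
\[
\phatt \cdot t = \frac{n^{2}\erate^{2}}{16 t} - \frac{n(1-4\myp)}{4} + \frac{t}{4}.
\]

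Next I would substitute $t = n-\chunk = n(1-\echunk)$ and regroup the constant and $\myp$ terms:
\[
\phat_{n-\chunk}(n-\chunk) = \frac{n\erate^{2}}{16(1-\echunk)} - \frac{n(1-4\myp)}{4} + \frac{n(1-\echunk)}{4} = n\myp + \frac{n\erate^{2}}{16(1-\echunk)} - \frac{n\echunk}{4}.
\]
Thus the claim reduces to showing that the remaining additive correction is nonnegative, i.e.\
\[
\frac{\erate^{2}}{16(1-\echunk)} \geq \frac{\echunk}{4}, \qquad \text{equivalently}\qquad \erate^{2} \geq 4\echunk(1-\echunk).
\]

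Finally I would close the argument by plugging in $\echunk = \erate^{2}(1-4\myp)/4$, which yields $4\echunk(1-\echunk) = \erate^{2}(1-4\myp)(1-\echunk) \leq \erate^{2}$, since both $(1-4\myp)$ and $(1-\echunk)$ lie in $[0,1]$ (the former because $\myp < 1/4$, the latter because $\echunk$ can be taken arbitrarily small). This gives the desired inequality. No step here is a real obstacle; the only thing to keep track of is making sure $t = n-\chunk$ really sits in the regime $[n(1-4\myp),n]$ so that the closed-form expression for $\phatt$ applies — but this follows from smallness of $\erate$ and $\echunk$, which is built into the parameter regime assumed throughout Section~\ref{app:flip}.
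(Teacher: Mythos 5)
Your proof is correct and follows essentially the same route as the paper: both start from the closed-form expression $\phatt = \frac{n^{2}\erate^{2}}{16t^{2}} - \frac{n(1-4\myp)}{4t} + \frac{1}{4}$, substitute $t = n-\chunk = n(1-\echunk)$, and reduce to a scalar inequality in $\echunk$ and $\erate$. Your final step is actually a bit cleaner than the paper's: you regroup to isolate $n\myp$ and observe that the residual $\frac{n\erate^{2}}{16(1-\echunk)} - \frac{n\echunk}{4} \geq 0$ reduces to $\erate^{2} \geq 4\echunk(1-\echunk) = \erate^{2}(1-4\myp)(1-\echunk)$, which is immediate since both factors lie in $[0,1]$; the paper instead passes through an intermediate lower bound $n\myp - \frac{n\erate^{2}}{16} + \frac{n\erate^{2}}{16-4\erate^{2}}$ whose justification requires a slightly more delicate two-term comparison. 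Both yield the claim.
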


\begin{proof}
	From \eqref{eq:p-bar-t} we have $\pbar_t=\myp-\frac{1}{4}+\frac{t}{4n}$. Here we have $t=n-\chunk$ so 
	\begin{align}
		\pbar_{n-\chunk}&=\myp-\frac{1}{4}+\frac{n-\chunk}{4n} \nonumber\\
		&=\myp-\frac{\echunk}{4}
	\end{align}
	From Remark~\ref{remark-p-hat-t}, we have $\myfun[t]=\frac{t}{n}$, and therefore, $\alpha\left( \myp,\pbar_{(n-\chunk)}\right) =\frac{n-\chunk}{n}=1-\echunk$. Then
	\begin{align}
		\phat_{(n-\chunk)} &=\frac{\pbar_{(n-\chunk)}}{\myfun[(n-\chunk)]}+\frac{\erate^2}{16\alpha^{2}\left( \myp,\pbar_{(n-\chunk)}\right) }\nonumber\\
		&=\frac{\pbar_{(n-\chunk)}}{1-\echunk}+\frac{\erate^2}{16\left( 1-\echunk\right) ^2}\nonumber\\
		&=\frac{\myp-\frac{\echunk}{4}}{1-\echunk}+\frac{\erate^2}{16\left( 1-\echunk\right) ^2}
	\end{align}
	Hence, we have
	\begin{align}
		\phat_{(n-\chunk)}(n-\chunk) &=\phat_{(n-\chunk)}(1-\echunk)n\nonumber\\
		&=\left( \myp-\frac{\echunk}{4}+\frac{\erate^2}{16\left( 1-\echunk\right) }\right) n\nonumber\\
		&>n\myp-\frac{n\erate^2}{16}+\frac{n\erate^2}{16-4\erate^2}\label{eq:claim-2-left-less-1}\\
		&>n\myp
	\end{align}
	where \eqref{eq:claim-2-left-less-1} follows by $\echunk=\echval<\frac{\erate^2}{4}$.
\end{proof}

\begin{definition}[Energy Bounding Benchmarking Trajectory $\tilde{p}_{t}$]
	\label{def:p-tilde-t}
	Let $\erate>0$ and $\myp\in\intvquar$. Let a codeword of length $n$ consist of $\chunknum$ chunks of sub-codewords. Let $\setchend=\choiceschunk$ and $\leftcw\in\setchend$. Let $\myfun[t]=1-4(\myp-\pbar_{t})$ for $\pbar_{t}\in\left[0,\myp\right]$. Then for $t\in\left[ n(1-4\myp),n\right] $, $\tilde{p}_{t}$ is defined as 
	\begin{align}
		\label{eq:p-tilde-t}
		\tilde{p}_{t}=\frac{\pbar_t}{\myfun[t]}+\frac{(n-t)\erate^2}{16t}
	\end{align} For $t<n(1-4\myp)$, $\tilde{p}_{t}=0$.
\end{definition}

\begin{remark}
	We plot $\tilde{p}_{t}$ in comparison with $\phatt$ in Figure~\ref{fig:p-tilde-t}. As is shown in Figure~\ref{fig:p-tilde-t}, the difference between $\tilde{p}_{t}$ and $\phatt$ is very small, which is analyzed in Claim~\ref{claim-ingap}.
	\begin{figure}[htbp]
		\centering
		\includegraphics[scale=0.8]{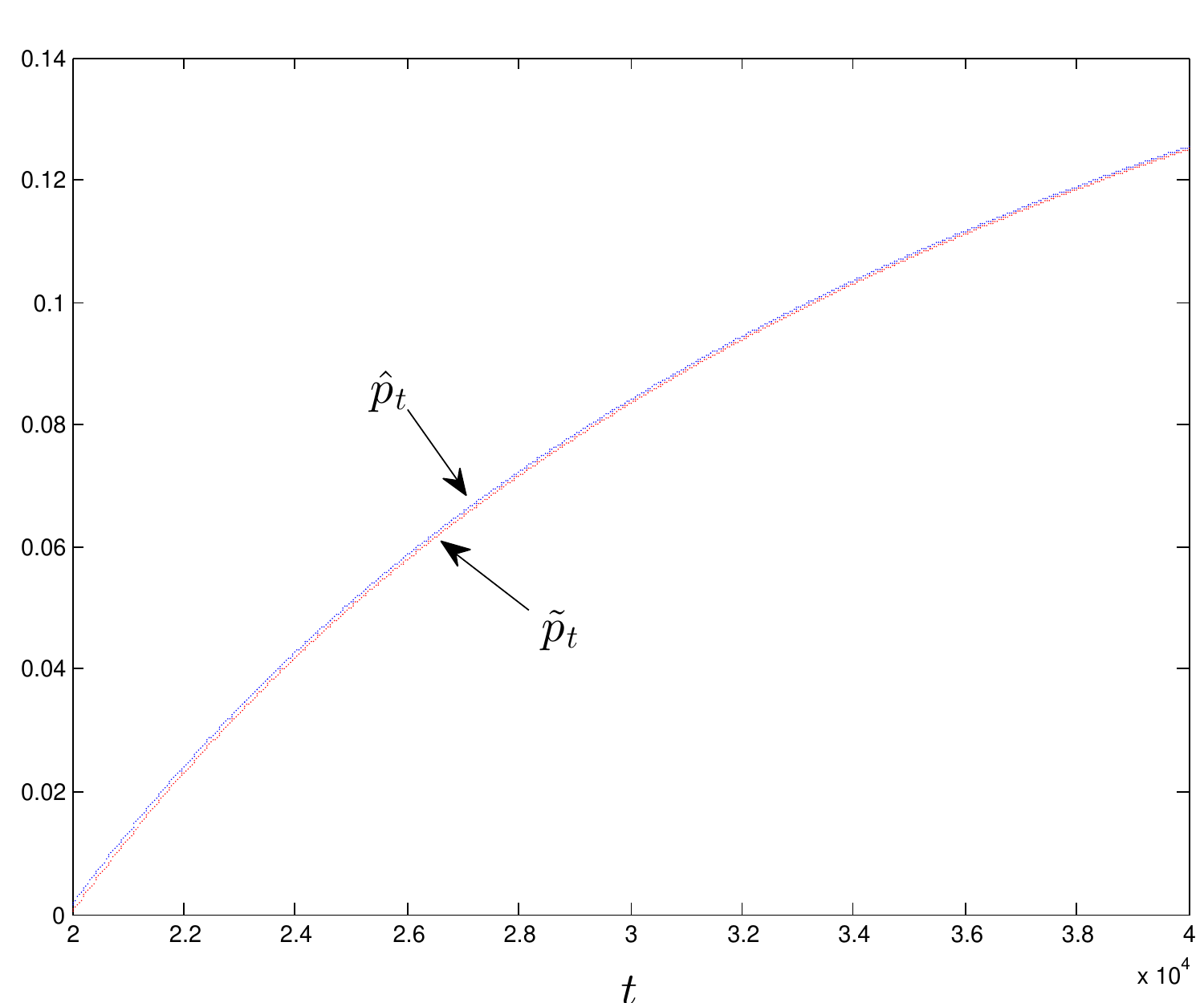}
		\caption{The trajectory of $\tilde{p}_{t}$ in comparison with $\phatt$ for $n=40000$, $\myp=\frac{1}{8}$, $\erate=0.08$ and $t\in[20000,40000]$.}
		\label{fig:p-tilde-t}
	\end{figure}
\end{remark}

\begin{claim}
	\label{claim-ingap}
	Let $\erate>0$ and $\myp\in\intvquar$. Let a codeword of length $n$ consist of $\chunknum$ chunks of sub-codeword where $\echunk=\echval$. Let $\setchend=\choiceschunk$. Let $\tstar\in\setchend$ and $\tstar\in\left[ n(1-4\myp),n-\chunk\right] $. If $p_{\tstar-\chunk}>\phat_{\tstar-\chunk}$ and $p_{\tstar}\leq\phat_{\tstar}$, then we have $p_{\tstar}>\tilde{p}_{\tstar}$.
\end{claim}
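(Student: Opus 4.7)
The plan is to exploit the chunk-wise monotonicity of Calvin's cumulative corruption count, together with the explicit formulas for $\phatt$ (Remark~\ref{remark-p-hat-t}) and $\tilde{p}_{t}$ (Definition~\ref{def:p-tilde-t}), to reduce the claim to a short algebraic verification. Since the number of bits Calvin has flipped in a prefix is non-decreasing in the prefix length, $p_{\tstar}\cdot \tstar \geq p_{\tstar-\chunk}\cdot(\tstar-\chunk)$; combining with the strict hypothesis $p_{\tstar-\chunk}>\phat_{\tstar-\chunk}$ gives $p_{\tstar}\cdot \tstar > \phat_{\tstar-\chunk}\cdot(\tstar-\chunk)$. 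It therefore suffices to prove the non-strict inequality $\phat_{\tstar-\chunk}\cdot(\tstar-\chunk) \geq \tilde{p}_{\tstar}\cdot \tstar$; dividing by $\tstar$ then yields $p_{\tstar}>\tilde{p}_{\tstar}$.

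Next I would compute both sides in closed form. For $t\in[n(1-4\myp),n]$ Remark~\ref{remark-p-hat-t} gives $\myfun[t]=t/n$, and Definition~\ref{def:p-bar-t} immediately yields the telescoping identity $\pbar_{\tstar-\chunk}=\pbar_{\tstar}-\echunk/4$. Plugging these into the formulas for $\phat$ and $\tilde{p}$ gives
\begin{align*}
\phat_{\tstar-\chunk}\cdot(\tstar-\chunk) &= n\pbar_{\tstar}-\frac{n\echunk}{4}+\frac{n^{2}\erate^{2}}{16(\tstar-\chunk)},\\
\tilde{p}_{\tstar}\cdot \tstar &= n\pbar_{\tstar}+\frac{(n-\tstar)\erate^{2}}{16}.
\end{align*}
Cancelling the common term $n\pbar_{\tstar}$ turns the required inequality into
\begin{equation*}
\frac{\erate^{2}}{16}\!\left[\frac{n^{2}}{\tstar-\chunk}-(n-\tstar)\right] \;\geq\; \frac{n\echunk}{4}.
\end{equation*}

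The last step uses the carefully chosen quantization size $\echunk=\erate^{2}(1-4\myp)/4$, which cancels the factor $\erate^{2}/16$ on both sides and reduces the whole claim to the clean geometric inequality
\begin{equation*}
\frac{n^{2}}{\tstar-\chunk}\;\geq\; n(1-4\myp)+(n-\tstar).
\end{equation*}
The two range hypotheses then finish the proof: $\tstar\leq n-\chunk$ implies $\tstar-\chunk<n$, so the left-hand side strictly exceeds $n$; meanwhile $\tstar\geq n(1-4\myp)$ implies $n(1-4\myp)+(n-\tstar)\leq n$. Chaining these with the monotonicity bound from the first paragraph yields $p_{\tstar}\cdot \tstar > \tilde{p}_{\tstar}\cdot \tstar$, i.e.\ $p_{\tstar}>\tilde{p}_{\tstar}$.

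The one conceptually non-routine aspect is recognising that the quantization parameter $\echunk$, defined back in Section~\ref{sec:model}, has been engineered precisely so that the single-chunk slack $n\echunk/4$ in $\phat_{\tstar-\chunk}\cdot(\tstar-\chunk)$ is absorbed by the gap between $\phatt$ and $\tilde{p}_t$; once this observation is made, the remainder is pure bookkeeping. An implicit hypothesis is $\tstar-\chunk\geq n(1-4\myp)$, needed to invoke $\myfun[\tstar-\chunk]=(\tstar-\chunk)/n$; the corner case $\tstar=t_{0}$ (where $\tstar-\chunk<n(1-4\myp)$ and hence $\phat_{\tstar-\chunk}=0$) falls outside this argument but is handled separately by Claim~\ref{claim-zero-starting}.
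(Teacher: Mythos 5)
Your proof is correct and follows essentially the same logical skeleton as the paper's: first use the monotonicity of Calvin's cumulative bit-flip count together with $p_{\tstar-\chunk}>\phat_{\tstar-\chunk}$ to get $p_{\tstar}\tstar > \phat_{\tstar-\chunk}(\tstar-\chunk)$, and then establish the algebraic inequality $\phat_{\tstar-\chunk}(\tstar-\chunk)\geq\tilde{p}_{\tstar}\tstar$ (equivalently $\frac{\kstar-1}{\kstar}\phat_{\tstar-\chunk}\geq\tilde{p}_{\tstar}$, which is exactly what the paper proves by bounding $\phat_{\tstar}-\frac{\kstar-1}{\kstar}\phat_{\tstar-\chunk}$ above by $\phat_{\tstar}-\tilde{p}_{\tstar}$). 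Your arrangement of the algebra — cancelling $n\pbar_{\tstar}$ and reducing to $\frac{n^2}{\tstar-\chunk}\geq n(1-4\myp)+(n-\tstar)$, then bounding the left side below by $n$ and the right side above by $n$ — is a cleaner bookkeeping of the same estimate, and your explicit flagging of the implicit hypothesis $\tstar-\chunk\geq n(1-4\myp)$ (needed to use the explicit formula for $\phat_{\tstar-\chunk}$) is a correct observation that the paper's proof also makes tacitly.
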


\begin{proof}
	Let $\tstar=\kstar\chunk$ where integer $\kstar\in\left[ \frac{1-4\myp}{\echunk},\frac{1-\echunk}{\echunk}\right] $. Since $p_{\tstar-\chunk}>\phat_{\tstar-\chunk}$, we have
	\begin{align}
		p_{\tstar}&\geq\frac{(\tstar-\chunk)p_{\tstar-\chunk}}{\tstar} \nonumber\\
		&>\frac{(\tstar-\chunk)\phat_{\tstar-\chunk}}{\tstar} \nonumber\\
		&=\frac{(\kstar\chunk-\chunk)\phat_{\tstar-\chunk}}{\kstar\chunk} \nonumber\\
		&=\frac{(\kstar-1)\phat_{\tstar-\chunk}}{\kstar}
	\end{align}
	
	From \eqref{eq:p-hat-t-2} we have $\phat_{\tstar}=\frac{n^2\erate^2}{16}\cdot\frac{1}{{\tstar}^2}-\frac{n(1-4\myp)}{4}\cdot\frac{1}{\tstar}+\frac{1}{4}$. Then the gap between $\phat_{\tstar}$ and $\frac{\kstar-1}{\kstar}\phat_{\tstar-\chunk}$ can be determined to be
	\begin{align}
		\lefteqn{\phat_{\tstar}-\frac{(\kstar-1)\phat_{\tstar-\chunk}}{\kstar}}\nonumber\\
		& &= &\frac{n^2\erate^2}{16(\kstar\chunk)^2}-\frac{n(1-4\myp)}{4(\kstar\chunk)}+\frac{1}{4}-\frac{\kstar-1}{\kstar}\left( \frac{n^2\erate^2}{16(\kstar\chunk-\chunk)^2}-\frac{n(1-4\myp)}{4(\kstar\chunk-\chunk)}+\frac{1}{4}\right) \nonumber\\
		& &= &\frac{\erate^2}{16(\kstar\echunk)^2}-\frac{\erate^2}{16\kstar(\kstar-1)\echunk^2}+\frac{1}{4\kstar}\nonumber\\
		& &< &\frac{\erate^2}{16(\kstar\echunk)^2}-\frac{\erate^2}{16\kstar\echunk}+\frac{1}{4\kstar} \label{eq:claim-gap-0}\\
		& &< &\frac{\erate^2}{16(\kstar\echunk)^2}-\frac{\erate^2}{16\kstar\echunk}+\frac{\echunk}{4(1-4\myp)} \label{eq:claim-gap-1}\\
		& &< &\frac{\erate^2}{16(\kstar\echunk)^2}-\frac{\erate^2}{16\kstar\echunk}+\frac{\erate^2}{16} \label{eq:claim-gap-2}
	\end{align}
	where \eqref{eq:claim-gap-0} follows by $\kstar(\kstar-1)\echunk^2<(\kstar\echunk)^2\leq \kstar\echunk$, \eqref{eq:claim-gap-1} follows by $\kstar\geq\frac{1-4\myp}{\echunk}$, and \eqref{eq:claim-gap-2} follows by $\echunk=\echval$.
	
	From \eqref{eq:p-hat-t} and \eqref{eq:p-tilde-t} we have $\phat_{\tstar}=\frac{\pbar_{\tstar}}{\myfun[\tstar]}+\frac{\erate^2}{16\alpha^2\left(\myp,\pbar_{\tstar}\right)}$ and $\tilde{p}_{\tstar}=\frac{\pbar_{\tstar}}{\myfun[\tstar]}+\frac{(n-\tstar)\erate^2}{16\tstar}$. Hence, we have 
	\begin{align}
		\phat_{\tstar}-\tilde{p}_{\tstar}&=\frac{n^2\erate^2}{16(\kstar\chunk)^2}-\frac{(n-\kstar\chunk)\erate^2}{16\kstar\chunk}\nonumber\nonumber\\
		&=\frac{\erate^2}{16(\kstar\echunk)^2}-\frac{\erate^2}{16\kstar\echunk}+\frac{\erate^2}{16}\\
		&>\phat_{\tstar}-\frac{(\kstar-1)\phat_{\tstar-\chunk}}{\kstar}\\
		&>\phat_{\tstar}-p_{\tstar}
	\end{align}
	Thus, we have $p_{\tstar}>\tilde{p}_{\tstar}$.
\end{proof}

\begin{claim}
	\label{claim:p-tilde-t}
	Let $\erate>0$ and $\myp\in\intvquar$. Let a codeword of length $n$ consist of $\chunknum$ chunks of sub-codewords. Let $\setchend=\choiceschunk$ and $\leftcw\in\setchend$. Let $\suff{p}$ be the portion of bit-flips in the right mega sub-codeword with respect to position $\leftcw$. Then if $p_t>\tilde{p}_t$ we have
	\begin{align}
		\suff{p}<\frac{1}{4}-\frac{\erate^2}{16}
	\end{align}
\end{claim}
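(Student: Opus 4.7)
My plan is to write Calvin's remaining bit-flip budget in the right mega sub-codeword as $\ppm n - t p_t$, use the hypothesis $p_t > \tilde{p}_t$ to strengthen this to $\ppm n - t \tilde{p}_t$, and then simplify $t\tilde{p}_t$ in closed form using Definitions~\ref{def:p-bar-t} and~\ref{def:p-tilde-t} in order to see that the resulting bound on $\suff{p}(n-t)$ matches the claim, even with a strict $\erate^2/16$ slack to spare. Since $\suff{p}(n-t)$ is at most the number of flips Calvin has not yet spent by position $t$, it suffices to prove $\ppm n - t\tilde{p}_t \le (n-t)\bigl(\tfrac14 - \tfrac{\erate^2}{16}\bigr)$ and apply the hypothesis strictly.

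The main computation is carried out in the non-trivial regime $t \ge n(1-4\myp)$. By Remark~\ref{remark-p-hat-t}, $\myfun[t] = t/n$, so Definition~\ref{def:p-tilde-t} gives $t\tilde{p}_t = n\pbar_t + (n-t)\erate^2/16$. Substituting $\pbar_t = \myp - \tfrac14 + \tfrac{t}{4n}$ from Definition~\ref{def:p-bar-t} yields $n\pbar_t = n\myp - (n-t)/4$, and therefore
\[
  \ppm n - t\tilde{p}_t \;=\; (\ppm - \myp)n + \tfrac{n-t}{4} - \tfrac{(n-t)\erate^2}{16} \;=\; -\tfrac{n\erate^2}{16} + (n-t)\bigl(\tfrac14 - \tfrac{\erate^2}{16}\bigr),
\]
where I used $\myp - \ppm = \erate^2/16$ by construction. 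Combining this with the strict hypothesis $p_t > \tilde{p}_t$ gives $\suff{p}(n-t) < -n\erate^2/16 + (n-t)\bigl(\tfrac14 - \tfrac{\erate^2}{16}\bigr)$, and dividing by $n-t$ delivers $\suff{p} < \tfrac14 - \tfrac{\erate^2}{16}$, as desired.

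For the boundary regime $t < n(1-4\myp)$ the trajectory $\tilde{p}_t$ vanishes, so the hypothesis $p_t > \tilde{p}_t$ is nearly vacuous and a direct argument is needed. Here $n-t > 4\myp n$ gives $\suff{p} \le \ppm n/(n-t) < \ppm/(4\myp)$, and a short algebraic check using $\myp = \ppm + \erate^2/16 < \tfrac14$ (equivalently $\ppm < \tfrac14 - \tfrac{\erate^2}{16}$) shows $\ppm/(4\myp) < \tfrac14 - \tfrac{\erate^2}{16}$. The only real obstacle is bookkeeping: the proof must carefully separate Calvin's true budget parameter $\ppm$ from the parameter $\myp = \ppm + \erate^2/16$ used to define $\pbar_t$, $\phatt$, and $\tilde{p}_t$, since it is precisely the $\erate^2/16$ slack between them that produces the matching $\erate^2/16$ gap in the final inequality.
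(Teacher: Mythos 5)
Your proof is correct and follows essentially the same route as the paper: compute $t\tilde{p}_t = n\myp - \tfrac{n-t}{4} + \tfrac{(n-t)\erate^2}{16}$ from Definitions~\ref{def:p-bar-t} and~\ref{def:p-tilde-t}, invoke $p_t > \tilde{p}_t$, rearrange, and divide by $n-t$. You are in fact slightly more careful than the paper in two places: you correctly bound $\suff{p}(n-t)$ by Calvin's \emph{actual} remaining budget $\ppm n - t p_t$ rather than by $\myp n - t p_t$ (the paper writes $\suff{p} = \frac{n\myp - t p_t}{n-t}$, which holds only as an inequality since Calvin's budget is $\ppm n < \myp n$; the conclusion is unchanged but your bookkeeping is the cleaner reading), which is the source of the extra $-\tfrac{n\erate^2}{16(n-t)}$ slack you observe; and you explicitly dispatch the regime $t < n(1-4\myp)$ where $\tilde{p}_t = 0$, which the paper's proof leaves implicit (that regime is never actually invoked by Claim~\ref{claim:prob-good-exist} since Bob only decodes from $t_0 \geq n(1-4\myp)$ onward, but the claim as stated allows it).
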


\begin{proof}
	From Remark~\ref{remark-p-hat-t} and Definition~\ref{def:p-bar-t}, we have $\myfun[t]=\frac{t}{n}$ and $\pbar_{t}=\myp-\frac{1}{4}+\frac{t}{4n}$. Then 
	\begin{align}
	\tilde{p}_t&=\frac{\pbar_t}{\myfun[t]}+\frac{(n-t)\erate^2}{16t} \nonumber\\
	&=\frac{n\myp}{t}-\frac{n-t}{4t}+\frac{(n-t)\erate^2}{16t}\label{eq:p-tilde-t-2}
	\end{align}
	Since $\pref{p}>\tilde{p}_t$ we have
	\begin{align}
	\label{eq:p-tilde-one-forth}
		n\myp-\leftcw \pref{p}+\elfrac<n\myp-\leftcw \tilde{p}_t+\elfrac=\frac{n-\leftcw}{4}
	\end{align}
	and thus, $$\suff{p}=\frac{n\myp-\leftcw\pref{p}}{n-\leftcw}<\frac{1}{4}-\frac{\erate^2}{16}$$.
\end{proof}

\begin{remark}
	It also follows by \eqref{eq:p-tilde-one-forth} that if $\pref{p}<\tilde{p}_t$ then $\suff{p}>\frac{1}{4}-\frac{\erate^2}{16}$.
\end{remark}

\begin{claim}
	\label{claim-zero-starting}
	Let $\erate>0$ and $\myp\in\intvquar$. Let a codeword of length $n$ consist of $\chunknum$ chunks of sub-codeword where $\echunk=\echval$. Let $k_0=\left\lceil\frac{1-4\myp}{\echunk}\right\rceil$ and $t_0=k_0\chunk$. Let $\ppm=\myp-\frac{\erate^2}{16}$. Then $\forall p_{t_0}\in\left[ 0,\tilde{p}_{t_0}\right]  $, we have 
	\begin{align}
		n\ppm-t_0 p_{t_0}+\frac{(n-t_0)\erate^2}{16}\leq\frac{n-t_0}{4}
	\end{align}
\end{claim}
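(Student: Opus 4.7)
The plan is to observe that the left-hand side of the desired inequality is affine in $p_{t_0}$ with coefficient $-t_0<0$, so it is monotonically decreasing in $p_{t_0}$. Thus over the range $p_{t_0}\in[0,\tilde{p}_{t_0}]$ the worst case is $p_{t_0}=0$, and it suffices to prove the single inequality
\begin{equation*}
n\ppm+\frac{(n-t_0)\erate^2}{16}\leq\frac{n-t_0}{4}.
\end{equation*}
Substituting $\ppm=\myp-\erate^2/16$ into the left-hand side, the $\erate^2$ terms partially cancel, leaving (after multiplying through by $16$ and rearranging) the equivalent algebraic statement
\begin{equation*}
t_0(4-\erate^2)\leq 4n(1-4\myp).
\end{equation*}

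The remaining work is to verify this using the definition of $t_0$. Since $k_0=\left\lceil (1-4\myp)/\echunk\right\rceil<(1-4\myp)/\echunk+1$, we get $t_0=k_0\chunk<n(1-4\myp)+n\echunk$. Plugging in the prescribed $\echunk=\erate^2(1-4\myp)/4$ yields
\begin{equation*}
t_0<\frac{n(1-4\myp)(4+\erate^2)}{4}.
\end{equation*}
Multiplying both sides of this bound by $4-\erate^2$ and comparing with $4n(1-4\myp)$, the desired inequality reduces to checking $(4+\erate^2)(4-\erate^2)\leq 16$, i.e., $16-\erate^4\leq 16$, which holds trivially.

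There is no substantive obstacle in this claim; it is essentially a bookkeeping verification that the chunk-quantization parameter $\echunk$ has been set small enough to absorb the slack introduced by rounding $(1-4\myp)/\echunk$ up to an integer. The only place any care is required is in tracking the signs and using the strict inequality $k_0\chunk<n(1-4\myp)+n\echunk$ arising from the ceiling, and then matching the resulting $(4+\erate^2)$ factor against the $(4-\erate^2)$ factor on the other side via the identity $(4+\erate^2)(4-\erate^2)=16-\erate^4$.
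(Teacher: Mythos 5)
Your proof is correct and follows essentially the same route as the paper: drop the $-t_0 p_{t_0}\le 0$ term, substitute $\ppm=\myp-\erate^2/16$, and use the ceiling bound on $t_0$ together with $\echunk=\erate^2(1-4\myp)/4$. The only cosmetic difference is that you collect all $t_0$-terms on one side, which lets you get by with just the upper bound $t_0<(1-4\myp+\echunk)n$ and the factoring $(4+\erate^2)(4-\erate^2)=16-\erate^4\le 16$, whereas the paper keeps them split and therefore also invokes the lower bound $t_0\ge(1-4\myp)n$; both reduce to the same slack absorption by $\echunk$.
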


\begin{proof}
	Since $\frac{1-4\myp}{\echunk}\leq\left\lceil\frac{1-4\myp}{\echunk}\right\rceil<\frac{1-4\myp+\echunk}{\echunk}$, we have $(1-4\myp)n\leq t_0<(1-4\myp+\echunk)n$.
	\begin{align}
		n\ppm-t_0 p_{t_0}+\frac{(n-t_0)\erate^2}{16}&\leq n\ppm+\frac{(n-t_0)\erate^2}{16}\nonumber\\
		&=n\left(\myp-\frac{\erate^2}{16}\right)+\frac{(n-t_0)\erate^2}{16}\label{eq:zero-starting-larger-1}\\
		&=n\myp-\frac{t_0\erate^2}{16}\nonumber\\
		&\leq n\myp-\frac{(1-4\myp)n\erate^2}{16}\label{eq:zero-starting-larger-2}\\
		&=n\myp-\frac{n\echunk}{4}\label{eq:zero-starting-larger-3}\\
		&=\frac{n-(1-4\myp+\echunk)n}{4}\nonumber\\
		&<\frac{n-t_0}{4}\label{eq:zero-starting-larger-4}
	\end{align}
	where \eqref{eq:zero-starting-larger-1} follows by substituting $\ppm=\myp-\frac{\erate^2}{16}$, \eqref{eq:zero-starting-larger-2} follows by $t_0\geq(1-4\myp)n$, \eqref{eq:zero-starting-larger-3} follows by $\echunk=\echval$ and \eqref{eq:zero-starting-larger-4} follows by $t_0<(1-4\myp+\echunk)n$.
\end{proof}

\subsection{List decoding properties}
\begin{claim}
	\label{thm:b-list-decodable}
	Let $\dl>0$ and $S\ml{=}\frac{\echunk^3}{8}$. Provided that $\leftcw\left( 1-H(\phatt)\right)-\frac{n\erate}{2} \geq nR$ as in Claim~\ref{claim:1} and $t=k\chunk$, with probability over code design at least $1-2^{-\dl}$, the code $\megacode$ is list-decodable with radius $t\phatt$ and list size 

	\begin{align*}
		L=\frac{1+\frac{\dl}{t}}{1-H\left( \phatt\right) -\frac{nR}{t}-\frac{nS}{t\echunk}}
	\end{align*}
	
\end{claim}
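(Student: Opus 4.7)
The plan is to prove list-decodability via a standard double union bound argument—first over all candidate received words and then over all $(L+1)$-subsets of messages—exploiting two independence properties of the ensemble: since each sub-code $\mathcal{C}_i$ is drawn uniformly at random from stochastic codes $\msg \times \secr \to \{0,1\}^{\chunk}$, every codeword $\mathcal{C}_1(m,s_1)\circ\cdots\circ\mathcal{C}_k(m,s_k)$ is uniformly distributed on $\{0,1\}^t$, and, crucially, the codeword families for distinct messages are mutually independent (even though codewords within a single message's family are correlated through the stochastic code).

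First I would fix an arbitrary received word $\mathbf{y}\in\{0,1\}^t$ and bound the probability that a given message $m$ enters the list at $\mathbf{y}$, i.e., that some secret tuple $(s_1,\ldots,s_k)$ produces a codeword within Hamming distance $t\phatt$ of $\mathbf{y}$. By a union bound over the $2^{nSk}$ possible secret tuples together with the standard volume bound $|B(\mathbf{y},t\phatt)|\leq 2^{tH(\phatt)}$, this probability is at most
\[
q \;:=\; 2^{nSk+tH(\phatt)-t}.
\]
Since these events are mutually independent across messages, the number of messages in the list is stochastically dominated by $\mathrm{Bin}(2^{nR},q)$, and so
\[
\Pr[\text{list size at }\mathbf{y}\text{ exceeds }L]\;\leq\;\binom{2^{nR}}{L+1}\,q^{L+1}\;\leq\;2^{(L+1)[nR+nSk-t(1-H(\phatt))]}.
\]

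A final union bound over all $2^t$ received words gives an overall failure probability at most $2^{\,t+(L+1)[nR+nSk-t(1-H(\phatt))]}$. The list-decoding condition \eqref{eq:con-1} from Claim~\ref{claim:1} guarantees that $t(1-H(\phatt))-nR \geq n\erate/2$, and the choice $S=\echunk^3/8$ with $\echunk=\echval$ ensures this slack dominates the $nSk$ term, so that $a := t(1-H(\phatt))-nR-nSk>0$. Demanding that the failure probability be at most $2^{-\dl}$ then reduces to $(L+1)\,a \geq t+\dl$, i.e.\ $L+1 \geq (t+\dl)/a$; dividing numerator and denominator by $t$ gives exactly the value of $L$ stated in the claim.

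The main obstacle is bookkeeping rather than anything conceptual: verifying positivity of the denominator $1-H(\phatt)-\tfrac{nR}{t}-\tfrac{nSk}{t}$ under the stated parameter scalings so that $L$ is well-defined and of constant order, and ensuring that the independence assumption on the ensemble—which relies on each $\mathcal{C}_i$ being drawn uniformly over \emph{all} stochastic codes, so that codewords corresponding to different messages are truly independent uniform strings—fully justifies the $(L+1)$-subset counting step. No subtle probabilistic tool beyond the volume bound and the independence across messages is required.
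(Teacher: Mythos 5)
Your proposal matches the paper's proof essentially step for step: fix a received word, volume-bound the Hamming ball, union-bound the $2^{nSk}$ secret tuples per message, use independence across messages to bound the probability of an $(L+1)$-subset by $\binom{2^{nR}}{L+1}q^{L+1}$, union-bound over the $2^{t}$ received words, and solve for $L$. One trivial algebraic note: after dividing by $t$ your $nSk/t$ equals $S/\echunk$, whereas the stated $L$ carries $\frac{nS}{t\echunk}$; the paper obtains the latter by first loosening $2^{tS/\echunk}\le 2^{nS/\echunk}$, so your bound is actually marginally tighter and still yields the claim's $L$.
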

	
\begin{proof}
	The proof follows ideas in \cite[Thm. 10.3]{guruswami2001list}, 
	\ml{and is modified slightly to correspond to stochastic codes. We stress that although the code is stochastic and each message corresponds to several codewords, we analyze the number $L$ of different {\em messages} with codewords that fall into a Hamming ball of limited radius.}
	The number of words of length $t$ in the Hamming ball with radius $r_e=t\phatt$ can be determined as
	\begin{align}
		\sum^{r_e}_{i=0}\binom{t}{i} & <2^{tH\left( \phatt\right) }
	\end{align}
	
	\ml{
	The number of potential codewords in $k=\frac{t}{\chunk}$ chunks is $\left( 2^{\chunk}\right) ^k=2^{k\chunk}=2^t$.
	We study the number of different messages corresponding to codewords that may lie in such a ball.
	Each message $m$ corresponds to $2^{tS/\theta}$ codewords of length $t$. Since the encoding of each message is independent of other messages, the probability that there exists more than $L$ messages with corresponding codewords of length $t$ all of which lie in the Hamming ball of radius $r_e$ centered at a received word is at most}

\begin{align}
		\ml{\binom{2^{nR}}{L+1}\cdot \left(2^{tS/\echunk}\right)^{L+1}}\cdot \left( \frac{2^{tH\left( \phatt\right) }}{2^t}\right) ^{(L+1)} & <2^{\left( nR+\frac{nS}{\echunk}\right) (L+1)}\left( \frac{2^{tH\left( \phatt\right) }}{2^t}\right) ^{(L+1)}\\
		& =2^{\left( nR+\frac{nS}{\echunk}\right) (L+1)}2^{-t\left( 1-H\left( \phatt\right) \right) (L+1)}\nonumber\\
		& =2^{-t\left(  1-H\left( \phatt\right)  -\frac{nR}{t}-\frac{nS}{t\echunk}\right) (L+1)}
	\end{align}

	Thus, the probability that the word of $k=\frac{t}{\chunk}$ chunks received is list-decoded to a list of size greater than $L$ is at most 
	\begin{align}
		2^{t}\cdot2^{-t\left( \left( 1-H\left( \phatt\right) \right) -\frac{nR}{t}-\frac{nS}{t\echunk}\right) (L+1)}\label{eq:thm-1-prob-large-list}
	\end{align}
	
	To quantify \eqref{eq:thm-1-prob-large-list}, we study
	\begin{align}
		t-t\left( \left( 1-H\left( \phatt\right) \right) -\frac{nR}{t}-\frac{nS}{t\echunk}\right) (L+1)<-\dl\label{eq:thm-1-prob-large-list-pow}
	\end{align}
	
	Notice that from \eqref{eq:con-1} we have
	\begin{align}
		1-H\left( \phatt\right) -\frac{nR}{t}-\frac{nS}{t\echunk} &\geq\frac{n\erate}{2t}-\frac{nS}{t\echunk}\nonumber\\
		&=\frac{n\erate}{2t}-\frac{n\echunk^2}{8t}\label{eq:thm-list-gap}\\
		&>0\nonumber
	\end{align}
	Hence, solving \eqref{eq:thm-1-prob-large-list-pow} for $L$ we have
	\begin{align}
		L > \frac{1+\frac{\dl}{t}}{1-H\left( \phatt\right) -\frac{nR}{t}-\frac{nS}{t\echunk}}-1\label{eq:thm-1-list-size}
	\end{align}
	Therefore, if $L$ satisfies \eqref{eq:thm-1-list-size} the code $\megacode$ is $L$-list decodable with probability at least $1-2^{-\dl}$.
\end{proof}

\begin{corollary}
	\label{coro:b-list-decodable}
	Let $\dl=\log\left(\ppm n^3\right) $. Then with probability over code design at least $1-\frac{1}{\ppm n}$, for any $t=k\chunk$ and $k\in\left[ \frac{1-4\myp}{\echunk},\frac{1-\echunk}{\echunk}\right] $ such that $\leftcw\left( 1-H(\phatt)\right)-\frac{n\erate}{2} \geq nR$, the code $\megacode$ is $L$-list decodable with radius $t\phatt$ and list size
	
	\begin{align}
		L=\frac{1+\frac{\log\left(\ppm n^3\right) }{t}}{1-H\left( \phatt\right) -\frac{nR}{t}-\frac{nS}{t\echunk}}=\blistord
	\end{align}
	
\end{corollary}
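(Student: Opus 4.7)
The plan is to obtain Corollary~\ref{coro:b-list-decodable} as a direct consequence of Claim~\ref{thm:b-list-decodable} via a union bound over the finitely many candidate positions $t$. First I would instantiate Claim~\ref{thm:b-list-decodable} with $\dl = \log(\ppm n^3)$, so that for any single $t = k\chunk$ satisfying the list-decoding hypothesis $t(1-H(\phatt)) - \tfrac{n\erate}{2} \geq nR$, the corresponding left mega sub-code fails to be list-decodable (with radius $t\phatt$ and the stated list size $L_t$) with probability at most $2^{-\dl} = \tfrac{1}{\ppm n^3}$.

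Next, I would count the values of $t$ to union bound over. Since $k$ ranges in $\left[\frac{1-4\myp}{\echunk}, \frac{1-\echunk}{\echunk}\right]$, there are at most $\chunknum = 1/\echunk$ such values, and this quantity is a constant depending only on the fixed parameters $\erate$ and $\myp$ (recall $\echunk=\echval$). A union bound then yields total failure probability at most $\frac{1}{\echunk \ppm n^3}$, which for all sufficiently large $n$ is bounded above by $\frac{1}{\ppm n}$, as required.

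Finally I would verify that the resulting list size is indeed $\blistord$. From the derivation inside the proof of Claim~\ref{thm:b-list-decodable} (equation~\eqref{eq:thm-list-gap}) the denominator satisfies
\[
1 - H(\phatt) - \frac{nR}{t} - \frac{nS}{t\echunk} \;\geq\; \frac{n\erate}{2t} - \frac{n\echunk^2}{8t}.
\]
Because $\echunk = \echval$ is of order $\erate^2$, the subtracted term $\echunk^2/8$ is negligible compared with $\erate/2$, so the denominator is $\Omega(n\erate/t)$. The numerator $1 + \log(\ppm n^3)/t$ is $O(1)$ since $t \geq n(1-4\myp)$ is linear in $n$ whereas $\log(\ppm n^3) = O(\log n)$. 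Combining these two estimates yields $L_t = O(t/(n\erate)) = O(1/\erate) = \blistord$, using $t \leq n$.

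There is no serious obstacle in this argument; the only thing worth flagging is the observation that makes the union bound work, namely that the number of chunk ends $t$ under consideration is a \emph{constant} (depending only on $\erate$ and $\myp$) rather than growing with $n$. This is precisely why the modest choice $\dl = \log(\ppm n^3)$ suffices to beat the union bound while still keeping the list-size bound intact.
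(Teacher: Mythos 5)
Your proof is correct and follows essentially the same route as the paper: instantiate Claim~\ref{thm:b-list-decodable} with $\dl=\log(\ppm n^3)$, union bound over the admissible chunk ends $t$, and then use \eqref{eq:thm-list-gap} together with $t\leq n$ and $\echunk=\echval$ to simplify the list size to $\blistord$. The only cosmetic difference is in the union bound count: you correctly note there are only $\chunknum=1/\echunk$ (a constant number of) chunk ends, while the paper more loosely bounds the number of events by $n\cdot n$ (via ``$t<n$ and $t\phatt<n$''); either count gives the claimed $1-\frac{1}{\ppm n}$, so the choice $\dl=\log(\ppm n^3)$ is in fact robust even to a polynomial number of union-bound events, not only a constant number as you suggest.
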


\begin{proof}
	By Claim~\ref{thm:b-list-decodable}, with probability $1-2^{-\log(\ppm n^3)}$ the code $\megacode$ is $L$-list decodable with list size $L$ being
	\ml{
	\begin{align*}
		\frac{1+\frac{\log\left(\ppm n^3\right) }{t}}{1-H\left( \phatt\right) -\frac{nR}{t}-\frac{nS}{t\echunk}}
	\end{align*}}
	Therefore, the probability that the code has a list size greater than $L$ is at most $2^{-\log(\ppm n^3)}=\frac{1}{\ppm n^3}$.
	
	Since $t<n$ and $t\phatt<n$, the probability that the code $\megacode$ is $L$-list decodable for any $k=\frac{t}{\chunk}$ is at least
	\begin{align*}
		1-n\cdot n\cdot\frac{1}{\ppm n^3}=1-\frac{1}{\ppm n}
	\end{align*}
	
	In addition, since $\leftcw\left( 1-H(\phatt)\right)-\frac{n\erate}{2} \geq nR$, from \eqref{eq:thm-list-gap} we have $1-H(\phatt)-\frac{nR}{t}-\frac{nS}{t\echunk}\geq\frac{n\erate}{2t}-\frac{n\echunk^2}{8t}>\frac{\erate}{4}$. For large enough $n$, we have $\frac{\log(\ppm n^3)}{t}=O\left( \frac{\log n}{n}\right)$. Thus, we obtain
	
	\begin{align*}
	L<\frac{1+O\left( \frac{\log n}{n}\right)}{\frac{\erate}{4}}=\blistord
	\end{align*}
\end{proof}


\subsection{Utilizing the energy bounding condition}
\zx{Unless otherwise specified, for any integer $t\in\setchend$ where $\setchend=\choiceschunk$, integer $k=\frac{t}{\chunk}$ is the number of chunks of a left mega sub-code (or sub-codeword) with respect to position $t$ and $l=\chunknum-\frac{t}{\chunk}$ is the number of chunks of a right mega sub-code (\zt{or }sub-codeword) with respect to position $t$.}

\begin{definition}
	\label{def:dist}
	A right mega sub-codeword $\rmegacodw$ is of \textit{distance} $d$ from a set of right mega sub-codewords if the Hamming distance between the right mega sub-codeword $\rmegacodw$ and any right mega sub-codeword in the given set is at least $d$.
\end{definition}
\sj{In what follows we will define properties of our code with respect to a list of {\em right mega sub-codewords} $\clist$. This list consists of all the {right mega sub-codewords} corresponding to the $\mlists$ messages in $\mlist$ obtained by Bob in the list decoding phase of his decoding, {\it excluding} the true message $m$ Alice wishes to communicate to Bob, if it is indeed in the list $\mlist$ (it may not be, if $p_t > \phatt$ for the $t$ under consideration). Hence the size $\clists$ of $\clist$ is at most  $2^{nSl} \cdot \mlists$ (if the true message $m \notin \mlist$), and is at most $2^{nSl} \cdot (\mlists-1)$ (if the true message $m \in \mlist$). 
}

\begin{definition}
	\label{def:good-1}
	A right mega sub-code $\rmegacode$ is \textit{good} with respect to a list \sj{$\clist$ of right mega sub-codewords, \zx{a message $m$}} and a sequence of $l=\chunknum-\frac{t}{\chunk}$ secrets $\left( s_{k+1},s_{k+2},\cdots,s_{\chunknum}\right) $ if the right mega sub-codeword $\rmegacodw$ is of distance more than $\frac{n-t}{2}-\frac{(n-t)\erate^2}{8}$ from the list \sj{$\clist$}. 
\end{definition}

\begin{definition}
	\label{def:good-2}
	A right mega sub-code $\rmegacode$ is \textit{$\sigma$-good} with respect to a list \sj{$\clist$ of right mega sub-codewords and \zx{a message $m$}} if the right mega sub-code $\rmegacode$ is good with respect to the \zx{message $m$}, the list \sj{$\clist$} and a $(1-\sigma)$ portion of sequences of $l=\chunknum-\frac{t}{\chunk}$ secrets in the set $\secr^{l}$.
\end{definition}

\begin{claim}
	\label{claim:good-1}
	Given a sequence of $l=\chunknum-\frac{t}{\chunk}$ secrets $(s_{k+1},s_{k+2},\cdots,s_{\chunknum})\in\secr^{l}$, with probability larger than $1-2^{-\delta(n-t)}$ over code design, a right mega sub-code $\rmegacode$ is good with respect to the \sj{message $m$, the list $\clist$} and the sequence $(s_{k+1},s_{k+2},\cdots,s_{\chunknum})$ of secrets, where $\delta=\frac{\echunk^2}{4}$ and $S\ml{=}\frac{\echunk^3}{8}$.
\end{claim}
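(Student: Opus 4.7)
The plan is to exploit the fundamental independence structure of the random code distribution $\Phi$ and then apply a standard Chernoff/union-bound estimate. Fix $m$ and the sequence $(s_{k+1}, \ldots, s_{\chunknum})$, and write $\mathbf{x}_r := \mathcal{C}_{k+1}(m, s_{k+1}) \circ \cdots \circ \mathcal{C}_{\chunknum}(m, s_{\chunknum})$ for the resulting right mega sub-codeword. Under $\Phi$ each $\mathcal{C}_i(m, s_i)$ is uniform on $\binaryset{\chunk}$, so $\mathbf{x}_r$ is uniformly distributed on $\binaryset{n-t}$. Moreover, every codeword in $\clist$ is built from chunks $\mathcal{C}_i(m', s')$ with $m' \neq m$, and under $\Phi$ these chunks are mutually independent of the $\mathcal{C}_i(m, \cdot)$'s. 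Hence, conditioning on all code values outside the $m$-slice leaves $\mathbf{x}_r$ uniform on $\binaryset{n-t}$ while $\clist$ becomes a fixed set.

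Under this conditioning, for any fixed $\mathbf{x}' \in \clist$ the Hamming distance $d(\mathbf{x}_r, \mathbf{x}')$ is the sum of $n-t$ i.i.d.\ Bernoulli$(1/2)$ variables, so the standard entropy bound on Hamming-ball volumes gives
\begin{align*}
\Pr\!\left[d(\mathbf{x}_r, \mathbf{x}') \leq \tfrac{n-t}{2} - \tfrac{(n-t)\erate^2}{8}\right] \leq 2^{-(n-t)\left(1 - H\left(\frac{1}{2} - \frac{\erate^2}{8}\right)\right)}.
\end{align*}
Each message in $\mlist \setminus \{m\}$ contributes at most $|\secr|^l = 2^{nSl}$ right mega sub-codewords; combined with the list-size bound $|\mlist| \leq \blistord$ from Corollary~\ref{coro:b-list-decodable}, this gives $|\clist| \leq \blistord \cdot 2^{nSl}$. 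Substituting $S = \theta^3/8$ and using $l \chunk = n-t$ yields $nSl = (n-t)\theta^2/8$, so a union bound over $\mathbf{x}' \in \clist$ bounds the failure probability by $\blistord \cdot 2^{-(n-t)\left[1 - H(1/2 - \erate^2/8) - \theta^2/8\right]}$.

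Finally, one checks that the exponent dominates $\delta = \theta^2/4$. Using the standard concavity inequality $1 - H(1/2 - \gamma) \geq (2/\ln 2)\gamma^2$ with $\gamma = \erate^2/8$ gives a slack of at least $\erate^4/(32 \ln 2)$; combined with $\theta \leq \erate^2/4$ (which follows from $\echunk = \echval$ and $\myp < 1/4$), one has $\theta^2/8 \leq \erate^4/128$ and $\theta^2/4 \leq \erate^4/64$. A direct numerical comparison confirms $\erate^4/(32 \ln 2) - \erate^4/128 \geq \erate^4/64$, so the exponent is at least $\delta$, and the $\blistord$ prefactor is sub-exponential in $n-t$ and is absorbed for $n$ large enough. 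The main obstacle is precisely this balancing act: the chunk parameter $\theta$ and secret rate $S = \theta^3/8$ are tuned just so that $\log|\clist|/(n-t) \approx \theta^2/8$ plus the target decay $\delta = \theta^2/4$ together fit strictly inside the entropy slack $1 - H(1/2 - \erate^2/8)$, and any looser choice of these parameters would break the inequality. The independence of $\mathbf{x}_r$ from the adversarially determined list $\clist$ is what turns this into a clean Chernoff-style estimate rather than a much harder combinatorial argument.
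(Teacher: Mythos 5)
Your proof is correct and takes essentially the same route as the paper's: bound the volume of the union of Hamming balls of radius $r$ centered at elements of $\clist$, then use that the right mega sub-codeword $\rmegacodw$ is uniform on $\{0,1\}^{n-t}$ and independent of $\clist$ to conclude via a union bound. You make the independence/conditioning step explicit where the paper leaves it implicit, and your parameter-balancing arithmetic (bounding $1-H(1/2-\erate^2/8)$ by $\erate^4/(32\ln 2)$ and then converting through $\echunk\le\erate^2/4$) is phrased slightly differently from the paper's chain $H(1/2-\erate^2/8)<H(1/2-\echunk/2)<1-\echunk^2/(2\ln 2)$, but both land comfortably above $\delta=\echunk^2/4$.
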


\begin{proof}
	
	\sj{Let $\cwlist$ be the list of right mega sub-codewords $\clist$.
	Note that $\clists=2^{nSl} \cdot \blistord$.
}	Define the forbidden region with respect to the list \sj{$\clist$} as
	\begin{align*}
		\sj{F_{\clist}}=\bigcup_{i=1}^{L} B\left( \mathbf{x}_i,r \right) 
	\end{align*}
	where $B\left( \mathbf{x}_i,r \right)$ is the Hamming ball with center $\mathbf{x}_i$ and radius $r=\frac{n-t}{2}-\frac{(n-t)\erate^2}{8}$. 
	\zx{We depict the notion of the forbidden region in Figure~\ref{fig:forbid-1}.}
	\begin{figure}[htbp]
		\centering
		\includegraphics[scale=0.7]{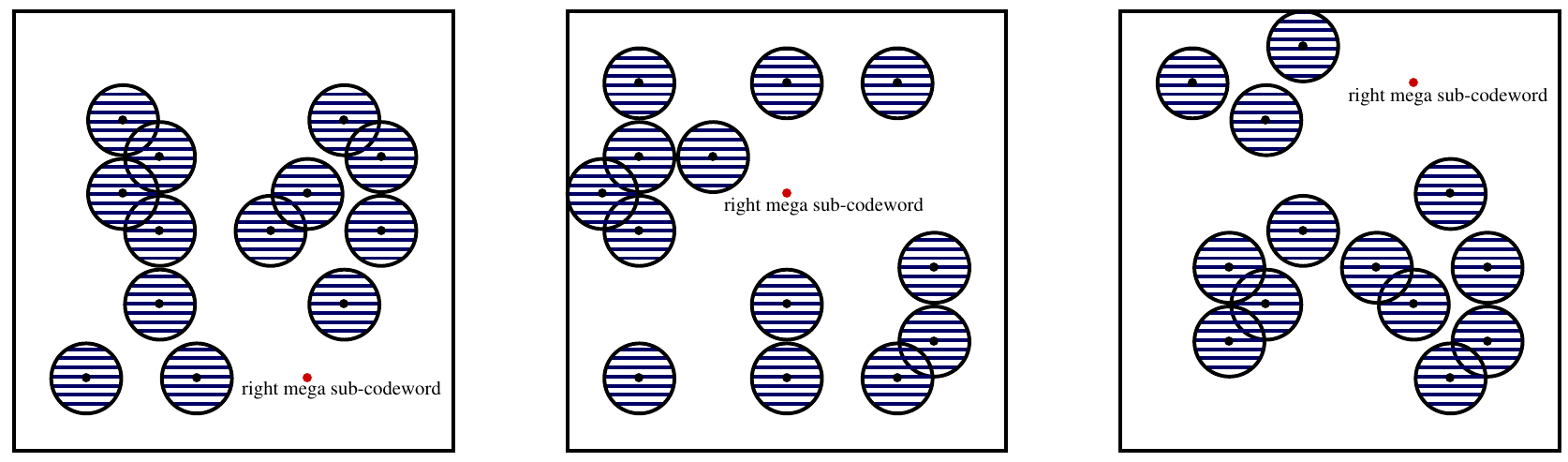}
		\caption[Three realizations of forbidden regions]{Three realizations of forbidden regions: In each realization, shaded disks correspond to the forbidden region and the isolated red point is a right mega sub-codeword outside the forbidden region.}
		\label{fig:forbid-1}
	\end{figure}
	
	Since the size of the list \sj{$\clist$} is \sj{$\clists$}, the number of \zx{words of length $n-t$ in the forbidden region $F_{\sj{\clist}}$} can be determined to be
	\sj{
	\begin{align}
		{\clists}\sum_{i=0}^{r}\binom{n-t}{i}&< {\clists} 2^{(n-t) H\left( \frac{1}{2}-\frac{\erate^2}{8}\right) }\label{eq:size-forbidden-region1}\\
		&= 2^{(n-t)\left( \frac{\log {\clists}}{n-t}+H\left(\frac{1}{2}-\frac{\echunk}{2(1-4\myp)}\right) \right) }\nonumber\\
		&< 2^{(n-t)\left( \frac{\log {\clists}}{n-t}+H\left(\frac{1}{2}-\frac{\echunk}{2}\right) \right) }\nonumber\\
		&< 2^{(n-t)\left( \frac{\log {\clists}}{n-t}+\left(1-\frac{\echunk^2}{2\ln2} \right) \right) }\label{eq:size-forbidden-region2}
	\end{align}
	}
	where \eqref{eq:size-forbidden-region1} follows by the upper bound on the volume of a Hamming ball, 
	\eqref{eq:size-forbidden-region2} follows from the Taylor series of the binary entropy function in a neighborhood of $\frac{1}{2}$, i.e., $H(p)=1-\frac{1}{2\ln 2}\sum\limits_{i=1}^{\infty}\frac{(1-2p)^{2i}}{(2i-1)i}$.
	
	We assume that \ml{$\sj{\clists}=2^{nSl}\cdot \blistord$}. Since $t\in\setchend$, we have $t\leq n-\chunk$. Then by choosing $n$ to be sufficiently large and $S\ml{=}\frac{\echunk^3}{8}$, \ml{we have for some constant $c$ that
	$$
	\frac{\echunk^2}{2\ln2}-\frac{\log \sj{\clists}}{n-t}=\frac{\echunk^2}{2\ln2}-\frac{\log (c/\epsilon)}{n-t}-\frac{S}{\echunk}>\frac{\echunk^2}{4}.
	$$}
	Thus, we can choose $\delta=\frac{\echunk^2}{4}$ and obtain \ml{$\delta<\frac{\echunk^2}{2\ln2}-\frac{\log \sj{\clists}}{n-t}$.}
	It follows that
	\ml{
	\begin{align}
		\frac{\log \sj{\clists}}{n-t}+\left(1-\frac{\echunk^2}{2\ln2} \right) \leq1-\delta \label{eq:size-forbidden-dl}
	\end{align}
	}
	Substituting \eqref{eq:size-forbidden-dl} into \eqref{eq:size-forbidden-region2}, we have
	\ml{
	\begin{align}
		\sj{\clists}\sum_{i=0}^{r}\binom{n-t}{i}<2^{(n-t)(1-\delta)}\label{eq:size-forbidden-region}
	\end{align}
	}
	
	Let $\rmegacodw$ be a right mega sub-codeword \ml{corresponding to $m$}. If the right mega sub-codeword $\rmegacodw$ is not in the region $F_{\sj{\clist}}$, then by Definition~\ref{def:good-1}, the code $\rmegacode$ is good with respect to \ml{$m$}, the list $\sj{\clist}$ and the secrets $(s_{k+1},s_{k+2},\cdots,s_{\chunknum})$.
	
	Therefore, the probability over $\rmegacode$ that mega sub-codeword $\rmegacodw$ does not lie in the forbidden region $F_{\sj{\clist}}$ is 
	
	\begin{align}
		\mathbb{P}\left[ \rmegacodw\notin F_{\sj{\clist}}\right] &>\frac{2^{n-t}-2^{(n-t)(1-\delta)}}{2^{n-t}}\nonumber\\
		&=1-2^{-(n-t)\delta}
	\end{align}
\end{proof}

\begin{claim}
	\label{claim:good-2}
	With probability larger than $1-2^{-n^2}$ over code design, a right mega sub-code $\rmegacode$ of length $l=\chunknum-\frac{t}{\chunk}$ is $\sigma$-good with respect to \ml{$m$} and the list \ml{$\clist$}, where \sj{$\sigma=2^{-\frac{nS}{4}}$} and $S=\frac{\echunk^3}{8}$.
\end{claim}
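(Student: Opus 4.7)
The plan is to amplify Claim~\ref{claim:good-1} from a single secret sequence to almost every sequence via a concentration argument on the random code. I realize the random right mega sub-code as a family of mutually independent, uniformly random $\chunk$-bit strings $\{W_{i,j}\}$ indexed by chunk positions $i\in\{k+1,\ldots,\chunknum\}$ and secrets $j\in\secr$, so that the right mega sub-codeword for $s=(s_{k+1},\ldots,s_{\chunknum})$ is the concatenation $W_{k+1,s_{k+1}}\circ\cdots\circ W_{\chunknum,s_{\chunknum}}$. Let $X_s$ denote the indicator that this sub-codeword lies in the forbidden region $F_{\clist}$ from the proof of Claim~\ref{claim:good-1}, and set $Y=\sum_{s\in\secr^l}X_s$. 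The code fails to be $\sigma$-good exactly when $Y>\sigma\cdot|\secr|^l=\sigma\cdot 2^{nSl}$, so the task is to bound $\mathbb{P}[Y>\sigma\cdot 2^{nSl}]$ by $2^{-n^2}$.

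For the expectation, Claim~\ref{claim:good-1} together with linearity gives $\mathbb{E}[Y]\leq 2^{nSl-\delta(n-t)}$ with $\delta=\echunk^2/4$. Since $n-t\geq l\chunk=ln\echunk$ one has $\delta(n-t)\geq ln\echunk^3/4$, whereas $\sigma\cdot 2^{nSl}=2^{nSl-n\echunk^3/32}$; comparing exponents and using $l\geq 1$ yields $\mathbb{E}[Y]\leq\tfrac{1}{2}\sigma\cdot 2^{nSl}$. Therefore the bad event $\{Y>\sigma\cdot 2^{nSl}\}$ is contained in the deviation event $\{Y-\mathbb{E}[Y]>\tfrac{1}{2}\sigma\cdot 2^{nSl}\}$.

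To control this deviation I apply McDiarmid's bounded-differences inequality to $Y$ viewed as a function of the $l\cdot 2^{nS}$ independent atoms $\{W_{i,j}\}$. Replacing a single $W_{i,j}$ can only flip $X_s$ for those $|\secr|^{l-1}=2^{nS(l-1)}$ sequences with $s_i=j$, so each atom has bounded-differences constant $c_{i,j}=2^{nS(l-1)}$. McDiarmid then yields
\[
\mathbb{P}\!\left[Y-\mathbb{E}[Y]>\tfrac{1}{2}\sigma\cdot 2^{nSl}\right]\leq\exp\!\left(-\frac{(\sigma\cdot 2^{nSl})^2}{2l\cdot 2^{nS(2l-1)}}\right)=\exp\!\left(-\frac{2^{nS/2}}{2l}\right),
\]
which is $\exp(-2^{\Omega(n)})$ and hence far smaller than $2^{-n^2}$ for sufficiently large $n$.

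The main obstacle is the heavy dependence among the $2^{nSl}$ indicators $\{X_s\}$: two sub-codewords whose secret sequences agree in some chunk literally share the underlying $W_{i,j}$, so a naive Chernoff or union bound over sequences would fail. The resolution is to pass to the genuinely independent atoms $W_{i,j}$ and to absorb, via McDiarmid, the fact that each atom influences $|\secr|^{l-1}$ sequences; the choice $\sigma=2^{-nS/4}$ is calibrated so that $(\sigma\cdot 2^{nSl})^2$ still dominates $\sum c_{i,j}^2=l\cdot 2^{nS(2l-1)}$ by a factor $2^{\Omega(nS)}/l$, which comfortably absorbs the required $n^2$.
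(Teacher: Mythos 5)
Your proof is correct, and it takes a genuinely different route from the paper. The paper's own argument is combinatorial: it partitions $\secr^l$ into $|\secr|^{l-1}$ ``diagonal'' families $\secr_{\mathbf{s}^*}=\{(a,s^*_{k+2}+a,\dots,s^*_{1/\theta}+a)\mid a\in[2^{Sn}]\}$, observes that any two sequences in the same family differ in \emph{every} coordinate so their right mega sub-codewords are mutually independent, bounds the probability that any one family has a bad $\sigma$-fraction by $\bigl(2^{-\delta(n-t)}\bigr)^{\sigma 2^{nS}}\cdot\binom{2^{nS}}{\sigma 2^{nS}}$, and finally union bounds over the $|\secr|^{l-1}$ families. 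You instead treat $Y=\sum_s X_s$ as a Lipschitz function of the genuinely independent atoms $W_{i,j}=\mathcal{C}_i(m,j)$, show $\mathbb{E}[Y]\leq\tfrac12\sigma 2^{nSl}$ via Claim~\ref{claim:good-1} plus $n-t=l\chunk$, and close with McDiarmid. Both are valid. What your approach buys: the union over $\sigma$-subsets is absorbed automatically by the concentration inequality, the ``diagonalization'' trick (which is the one genuinely clever move in the paper's proof) is replaced by a standard bounded-differences bookkeeping, and your final probability, $\exp(-2^{nS/2}/(2l))$, is in fact much stronger than the paper's $2^{-n^2}$ (doubly versus singly exponential), though either is more than enough for the downstream union bound in Claim~\ref{claim:good-3}. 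What the paper's approach buys: it stays entirely within elementary counting and the existing Lemma~\ref{lemma-1}, avoiding any appeal to an external concentration inequality. The only nit in your write-up is cosmetic: writing $n-t\geq l\chunk$ where equality $n-t=l\chunk$ actually holds, which does not affect the argument.
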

\begin{proof}
	\ml{Let $\secr=[2^{Sn}]$ be the set of integers between 0 and $2^{Sn}-1$. We start by considering a partition of the set of right mega sub-codewords corresponding to $m$ into $\secr^{l-1}$ disjoint subsets.  Specifically, we partition the set of secrets $\secr^l$ into $\secr^{l-1}$ disjoint sets. 
	Each set is indexed by an element $(s_{k+2},\dots,s_{\frac{1}{\theta}})$ in $\secr^{l-1}$.
The set $\secr_\mathbf{s^*}$ corresponding to $\mathbf{s^*}=(s^*_{k+2},\dots,s^*_{\frac{1}{\theta}})$ equals:
$$
\secr_\mathbf{s^*}=\{\mathbf{s}=(a,s^*_{k+2}+a,\dots,s^*_{\frac{1}{\theta}}+a) \mid a \in [2^{Sn}]\}
$$
where addition is done modulo $2^{Sn}$.
It holds that
$$
\secr^l=\bigcup_{\mathbf{s^*} \in \secr^{l-1}}\secr_\mathbf{s^*}.
$$
Let $\mathbf{s^*} \in \secr^{l-1}$. In our analysis below we use the fact that any two $l$-tuples $\mathbf{s}=(s_{k+1},s_{k+2},\dots,s_{\frac{1}{\theta}})$ and $\mathbf{s'}=(s'_{k+1},s'_{k+2},\dots,s'_{\frac{1}{\theta}})$ in $\secr^l$ that appear in $\secr_\mathbf{s^*}$ have the property that all their coordinates differ. Namely that $s_{k+1}\ne s'_{k+1}, \dots, s_{\frac{1}{\theta}} \ne s'_{\frac{1}{\theta}}$.
	
Now consider the set of $2^{Sn}$ right mega sub-codewords $\rmegacodw$ corresponding to $l$-tuples $\mathbf{s}=(s_{k+1},s_{k+2},\dots,s_{\frac{1}{\theta}})$ from a certain set $\secr_\mathbf{s^*}$ in the partition specified above. 
Each such mega sub-codeword consists of $l$ chunks.
By our construction, the set of $l\cdot 2^{Sn}$ chunks of the right mega sub-codewords corresponding to $\mathbf{s}=(s_{k+1},s_{k+2},\dots,s_{\frac{1}{\theta}}) \in \secr_\mathbf{s^*}$ are independent and uniformly distributed.
This follows directly from our code construction and the property of  $\secr_\mathbf{s^*}$ discussed above.
Thus, for $\mathbf{s}=(s_{k+1},s_{k+2},\dots,s_{\frac{1}{\theta}})$ and $\mathbf{s'}=(s'_{k+1},s'_{k+2},\dots,s'_{\frac{1}{\theta}})$ in $\secr_\mathbf{s^*}$,} the event that a right mega \ml{sub-code $\rmegacode$}
is \textit{not} good with respect to \ml{$m$}, the list $\sj{\clist}$ and the secrets $(s_{k+1},s_{k+2},\cdots,s_{\chunknum})$ is independent from the event that a right mega 
\ml{sub-code $\rmegacode$}
is \textit{not} good with respect to \ml{$m$}, the list \sj{$\clist$} and the secrets $(s^{\prime}_{k+1},s^{\prime}_{k+2},\cdots,s^{\prime}_{\chunknum})$.
	
	From Claim~\ref{claim:good-1}, a right mega sub-code $\rmegacode$ is \textit{not} good with respect to \ml{$m$}, the list \sj{$\clist$} and a sequence of secrets $(s_{k+1},s_{k+2},\cdots,s_{\chunknum})$ with probability less than $2^{-(n-t)\delta}$. Thus, the probability that a right mega sub-code $\rmegacode$ is \textit{not} good with respect to \ml{$m$}, the list \sj{$\clist$} and a certain $\sigma$ portion of sequences of $l$ secrets in the set \ml{$\secr_\mathbf{s^*}$} is less than 
	\ml{
	\begin{align*}
		\left( 2^{-(n-t)\delta}\right) ^{\sigma 2^{nS}}=2^{-(n-t)\delta\sigma 2^{nS}}.
	\end{align*}
	}
	
	The number of all possible $\sigma$-portion of the set \ml{$\secr_\mathbf{s^*}$} is 
	\ml{
	\begin{align*}
		\binom{2^{nS}}{\sigma 2^{nS}}<2^{2^{nS}H(\sigma)}.
	\end{align*}
	}
	\ml{We say that a right mega sub-code $\rmegacode$ is \textit{$\sigma$-good} with respect to a list $\clist$ of right mega sub-codewords, a message $m$ and a secret set $\secr_\mathbf{s^*}$ if the right mega sub-code $\rmegacode$ is good with respect to the \zx{message $m$}, the list $\clist$ and a $(1-\sigma)$ portion of sequences of secrets in the set $\secr_\mathbf{s^*}$.}
Therefore, the probability over code design that a right mega sub-code $\rmegacode$ is \textit{not} $\sigma$-good with respect to \ml{$m$, list \sj{$\clist$}, and $\secr_\mathbf{s^*}$} is
	\ml{\begin{align}
		\mathbb{P}\left[ \rmegacode\text{ is not }\sigma\text{-good w.r.t.} \ m,\sj{\clist},\secr_\mathbf{s^*}\right] & \leq 2^{-(n-t)\delta\sigma 2^{nS}}2^{2^{nS}H(\sigma)}\nonumber\\
		&\leq 2^{-\chunk \delta\sigma 2^{nS}}2^{2^{nS}H(\sigma)}\label{eq:cl-gd-2-larger-1}\\
		&= 2^{2^{nS}\left( -\chunk \delta\sigma+H(\sigma)\right) }\nonumber\\
		&< 2^{2^{nS}\left( -\chunk \delta\sigma-2\sigma\log\sigma\right) }\label{eq:cl-gd-2-larger-2}\\
		&= 2^{2^{\frac{3nS}{4}}\left( -\echunk^3+2S\right) \frac{n}{4}}\label{eq:cl-gd-2-larger-3}\\
		&= 2^{2^{\frac{3n\echunk^3}{32}}\left( -\frac{3}{4}\right) \frac{n\echunk^3}{4}}\label{eq:cl-gd-2-larger-4}\\
		&< 2^{-n^3}\label{eq:cl-gd-2-larger-5}
	\end{align}}
	where \eqref{eq:cl-gd-2-larger-1} follows by \ml{$n-t\geq\chunk $}, \eqref{eq:cl-gd-2-larger-2} follows by Lemma~\ref{lemma-1}, \eqref{eq:cl-gd-2-larger-3} follows by substituting \ml{$\sigma=2^{-\frac{nS}{4}}$} and $\delta=\frac{\echunk^2}{4}$, \eqref{eq:cl-gd-2-larger-4} follows by substituting $S=\frac{\echunk^3}{8}$, and \eqref{eq:cl-gd-2-larger-5} follows for sufficiently large $n$.
	
	\ml{Now union bounding over all sets $\secr_\mathbf{s^*}$ in the partition of $\secr^l$, we get for sufficiently large $n$ that 
	\begin{align}
		\mathbb{P}\left[ \exists \mathbf{s^*}:\ \rmegacode\text{ is not }\sigma\text{-good w.r.t.} \ m,\sj{\clist},\secr_\mathbf{s^*}\right] & 
		\leq 2^{-n^3}2^{Sn(l-1)}<2^{-n^2}
	\end{align}}

	\ml{Finally, we notice that being \textit{$\sigma$-good} with respect to a list $\clist$ of right mega sub-codewords, a message $m$ and any secret set $\secr_\mathbf{s^*}$ in the partition of $\secr^l$ implies being $\sigma$-good with respect to \ml{$m$} and list \sj{$\clist$}.}
	Hence, the probability over code design that a right mega sub-code $\rmegacode$ is $\sigma$-good with respect to \ml{$m$} and list \sj{$\clist$} is
	
	\begin{align}
		\mathbb{P}\left[ \rmegacode\text{ is }\sigma\text{-good w.r.t.}\sj{ m, {\clist}}\right]>1-2^{-n^2}.
	\end{align}

\end{proof}



\begin{remark}
	\label{re:coro-good-2}
	\zx{The goodness of a right mega sub-code is what guarantees that the consistency check in the decoding process succeeds. Specifically, if a code is good with respect to a certain list and a certain \ml{message $m$}; and in addition the right mega sub-codeword has few bit-flips;  then if \ml{$m$} is in the list it will be (w.h.p.) the unique element that passes the consistency checking phase of Bob, and if it is not in the list the consistency phase of Bob will not return any message (w.h.p.).}
\end{remark}


\begin{claim}
	\label{claim:good-3}
	Let \ml{$\sigma=2^{-\frac{nS}{4}}$} and $S=\frac{\echunk^3}{8}$. With probability larger than $1-2^{-n}$ over code design, for every \sj{message $m$, every list $\clist$}, and every chunk end $t\in\setchend$, the right mega sub-code is \textit{$\sigma$-good} with respect to \sj{$m$ and  $\clist$}.
\end{claim}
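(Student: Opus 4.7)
The plan is to derive Claim~\ref{claim:good-3} from Claim~\ref{claim:good-2} by a straightforward union bound over all triples $(m,t,\clist)$. For each fixed triple, Claim~\ref{claim:good-2} already provides the key bound: the probability over code design that the right mega sub-code fails to be $\sigma$-good with respect to $m$ and $\clist$ is at most $2^{-n^2}$. The doubly-exponential buffer in the exponent (i.e.\ $n^2$ rather than $n$) is precisely what leaves room for the union bound to go through, even when the number of admissible lists is single-exponential in $n$.

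First I would carefully enumerate the choices. The number of messages $m\in\msg$ is $2^{nR}$, and the number of chunk ends $t\in\setchend$ is at most $1/\echunk$, a constant depending only on $\epsilon$ and $\myp$. The critical count is the number of admissible lists $\clist$. Here I invoke the structural observation (already noted in the paragraph defining $\clist$) that $\clist$ is forced to be the union of the $2^{Snl}$ right mega sub-codewords corresponding to some subset of at most $L=\blistord$ messages from $\msg\setminus\{m\}$. Consequently $\clist$ is fully determined by this message subset, and the number of distinct admissible lists is at most $\binom{2^{nR}}{L}\leq 2^{nRL+O(\log L)}$.

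Applying the union bound then yields an overall failure probability of at most
$$
2^{nR}\cdot\frac{1}{\echunk}\cdot 2^{nRL+O(1)}\cdot 2^{-n^2}
\;\leq\; 2^{-n^2+nR(L+1)+O(\log(1/\echunk))}.
$$
Since $R\leq 1$, $L=O(1/\epsilon)$, and $\echunk$ are fixed constants (independent of $n$), the exponent is dominated by $-n^2$ for sufficiently large $n$, and in particular is at most $-n$, establishing the claim.

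The main obstacle is resisting the temptation to use a naive enumeration for $\clist$: ranging over arbitrary subsets of $\{0,1\}^{n-t}$, or even over arbitrary subsets of the $2^{n(R+Sl)}$ right mega sub-codewords, would produce a doubly-exponential bound that $2^{-n^2}$ cannot absorb. The structural constraint that any admissible $\clist$ is a union of complete ``secret fibers'' over at most $L$ messages is what trims the count to $2^{nRL+O(1)}$ and makes the union bound close the gap between $2^{-n^2}$ and the desired $2^{-n}$.
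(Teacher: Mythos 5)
Your proof is correct and takes essentially the same route as the paper: union bound over the $2^{nR}$ messages, the $\chunknum$ chunk ends, and the at most $\binom{2^{nR}}{L}\leq 2^{nRL}$ admissible lists (since each $\clist$ is determined by a subset of at most $L=O(1/\erate)$ messages), with Claim~\ref{claim:good-2}'s $2^{-n^2}$ failure probability absorbing the single-exponential multiplicity. Your explicit remark that naively enumerating arbitrary subsets of sub-codewords would not work is a helpful elaboration of what the paper leaves implicit.
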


\begin{proof}
	The number of possible lists that can be obtained at a certain left mega chunk end $t$ \sj{depends on a set of messages of size is $\frac{c}{\epsilon}$ for some constant $c$ and is thus at most of size 
	\begin{align}
		\label{eq:list-upper-bound}
		{{2^{Rn}} \choose {\frac{c}{\epsilon}}}\leq 2^{Rcn/\epsilon}	
	\end{align}}
	
	From Claim~\ref{claim:good-2} we know that for \ml{$\sigma=2^{-\frac{nS}{4}}$} and $S=\frac{\echunk^3}{8}$, the probability that a right mega sub-code $\rmegacode$ is $\sigma$-good with respect to \sj{all $m$, any list $\clist$}, and every left mega chunk end $t$ is at least 
	\ml{
	\begin{align}
		1-2^{nR}\cdot 2^{Rcn/\epsilon}	\cdot \chunknum\cdot 2^{-n^2} &> 1-2^{-n^2+3cn/\epsilon}\nonumber\\
		&> 1-2^{-n}
	\end{align} 
	}
	for sufficiently large $n$.
\end{proof}

\subsection{Summary and proof Theorem~\ref{the:main:flip}}

\begin{claim}
	\label{claim:prob-good-exist}
	Let $\myp\in\intvquar$ and $\erate>0$. Let $\echunk=\echval$ and $\phatt=\frac{n^2\erate^2}{16}\cdot\frac{1}{t^2}-\frac{n(1-4\myp)}{4}\cdot\frac{1}{t}+\frac{1}{4}$. Let $\ppm=\myp-\frac{\erate^2}{16}$ be the fraction of a codeword Calvin flips. With probability at least $1-\frac{1}{\ppm n}-2^{-n}$ over code design, there exists a good code $\mcode$ such that the following properties are satisfied
	\begin{itemize}
		\item For any bit-flip pattern of the adversary, there exists a position $\tstar=\kstar\chunk$ such that the left mega sub-code with respect to position $\tstar$, $\megacode[\kstar]$, is list decodable with list size $L=\blistord$ and that the transmitted \sj{message $m$} 
		is in $\mlist$. 
		\sj{Let $\clist$ be the list of right mega sub-codewords corresponding to $\mlist \setminus\{m\}$.}
		\item \zx{For any position $t$ for which $t_0\leq t\leq\tstar$}, the right mega sub-word received with respect to position $t$ has no more than $\frac{1}{4}-\frac{\erate^2}{16}$ of its bits flipped; \zx{and, in addition, the right mega sub-code $\rmegacode$} is $\sigma$-good with respect to \ml{the message $m$ transmitted} and the list \sj{$\clist$} where \sj{$\sigma=2^{-\frac{nS }{4}}$} and $S=\frac{\echunk^3}{8}$.
	\end{itemize}
\end{claim}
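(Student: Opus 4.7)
My plan is to obtain the claim as a union-bound consequence of two code-design guarantees already established, followed by a purely deterministic trajectory-analysis argument that produces $\tstar$ for every adversarial action.

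First I will set up the probability. By Corollary~\ref{coro:b-list-decodable}, with probability at least $1-\frac{1}{\ppm n}$ over the random choice of $\mcode$, every left mega sub-code $\megacode[k]$ with $k\chunk\ge n(1-4\myp)$ is $L$-list-decodable with radius $(k\chunk)\phat_{k\chunk}$ and list size $L=\blistord$. By Claim~\ref{claim:good-3}, with probability at least $1-2^{-n}$ the code has the property that for every message $m$, every list $\clist$ of right mega sub-codewords, and every chunk end $t\in\setchend$, the right mega sub-code $\rmegacode$ is $\sigma$-good with respect to $m$ and $\clist$ for $\sigma=2^{-nS/4}$. A union bound gives that with probability at least $1-\frac{1}{\ppm n}-2^{-n}$ the code $\mcode$ simultaneously enjoys both properties; I fix such a code for the rest of the argument.

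Next I will argue that, for \emph{every} bit-flip pattern Calvin may choose, a position $\tstar$ with the stated guarantees exists. I will pick $\tstar$ according to Definition~\ref{def:t-star} and split into the two cases. In the \emph{Low Type} case ($p_{t_0}<\tilde{p}_{t_0}$), set $\tstar=t_0$. A direct comparison of Definition~\ref{def:p-hat-t} and Definition~\ref{def:p-tilde-t} using $\myfun[t_0]=t_0/n$ shows $\tilde{p}_{t_0}<\phat_{t_0}$, so $p_{t_0}<\phat_{t_0}$, and the list-decodable left mega sub-code necessarily returns a list $\mlist$ containing $m$; Claim~\ref{claim-zero-starting} then yields the energy bound on the right mega sub-word at $t_0$, and the condition $t_0\le t\le\tstar$ is trivial here. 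In the \emph{High Type} case ($p_{t_0}\ge\tilde{p}_{t_0}$), Claim~\ref{claim-intersection} guarantees $\phat_{n-\chunk}(n-\chunk)\ge n\myp\ge np_{n-\chunk}(n-\chunk)$, so $p_t$ must fall at or below $\phat_t$ before the last chunk, making $\tstar$ well-defined with $p_{\tstar}\le\phat_{\tstar}$; this again places $m$ in the list obtained by list-decoding $\megacode[\kstar]$ with radius $\tstar\phat_{\tstar}$. By Claim~\ref{claim-ingap} we obtain $p_{\tstar}>\tilde{p}_{\tstar}$, and Claim~\ref{claim:p-tilde-t} converts this into the required bit-flip fraction bound $\suff{p}<\frac{1}{4}-\frac{\erate^2}{16}$ on the right mega sub-word at $\tstar$. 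For intermediate chunk ends $t_0\le t<\tstar$, the minimality of $\tstar$ forces $p_t>\phat_t>\tilde{p}_t$, so Claim~\ref{claim:p-tilde-t} applies at these positions as well, establishing the energy bound uniformly on the range $t_0\le t\le\tstar$. Finally, $\sigma$-goodness of the right mega sub-code at each such $t$ with respect to $m$ and the list $\clist$ of right mega sub-codewords of $\mlist\setminus\{m\}$ is immediate from the code-design property established in the first paragraph.

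The main obstacle I anticipate is not in any single step but in keeping the three curves $\pref{p}$, $\phatt$, and $\tilde{p}_{t}$ properly aligned: the list-decoding radius is $\phatt$, the energy bound requires a comparison with $\tilde{p}_{t}$, and the definition of $\tstar$ is phrased in terms of $\phatt$. The interval-by-interval argument above crucially uses that $\phatt>\tilde{p}_t$ strictly and that the jump of $p_t$ at a chunk end is small enough (which is what the chunk-size choice $\echunk=\echval$ in Claim~\ref{claim-ingap} absorbs) so that once $\pref{p}$ crosses $\phatt$ it remains above $\tilde{p}_{t}$ at the rounded chunk end $\tstar$. Once this reconciliation is in place, the conclusion is a direct bookkeeping of the two high-probability code events together with the deterministic trajectory claims.
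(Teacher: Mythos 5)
Your proposal is correct and follows essentially the same approach as the paper: a union bound over the two code-design events of Corollary~\ref{coro:b-list-decodable} and Claim~\ref{claim:good-3}, followed by a case split into Low and High Type trajectories with $\tstar$ chosen per Definition~\ref{def:t-star} and the energy bound delivered by Claims~\ref{claim-zero-starting}, \ref{claim-intersection}, \ref{claim-ingap}, and \ref{claim:p-tilde-t}. You spell out the intermediate-$t$ argument ($p_t>\phat_t>\tilde{p}_t$ for $t_0\le t<\tstar$ via minimality of $\tstar$) a bit more explicitly than the paper does, which is a welcome clarification; the only blemish is the stray factor of $n$ in ``$n\myp\ge np_{n-\chunk}(n-\chunk)$'', which should read $n\myp\ge p_{n-\chunk}(n-\chunk)$.
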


\begin{proof}
	We consider all possible bit-flip patterns of the adversary by analyzing all of Calvin's possible trajectories. As mentioned above, all possible trajectories of Calvin can be classified into two types, the High Type trajectory and the Low Type Trajectory. 
	
	For any Low Type Trajectory, we have $p_{t_0}<\tilde{p}_{t_0}<\phat_{t_0}$. Notice that by our choice of $\phatt$, the list-decoding condition in Claim~\ref{claim:1} is always satisfied. Therefore, with list decoding radius $t_{0}\phat_{t_{0}}$, by Corollary~\ref{coro:b-list-decodable}, the left mega sub-code $\megacode[\frac{t_{0}}{\chunk}]$ is list decodable with list size $\blistord$ with probability $1-\frac{1}{\ppm n}$ over code design. In addition, since $t_{0}p_{t_0}<t_{0}\phat_{t_0}$, we have \ml{$m\in\mlist$}. So far the first property is satisfied for any Low Type Trajectory.
	
	On the other hand, for any $\ppm=\myp-\frac{\erate^2}{16}$, by Claim~\ref{claim-zero-starting} the right mega sub-word with respect to position $t_{0}$ has no more than $\frac{1}{4}-\frac{\erate^2}{16}$ of its bits flipped. Then by Claim~\ref{claim:good-3}, the right mega sub-code $\rmegacode[\frac{t_{0}}{\chunk}]$ is $\sigma$-good with respect to \ml{$m$ and list \sj{$\clist$}} with probability $1-2^{-n}$ over code design. Hence, for any Low Type Trajectory, our code design possesses the two properties stated in the claim. Moreover, in this case we have $\tstar=t_0$ (Shown in Figure~\ref{fig:decoding-case-1}).
	\begin{figure}[htb]
		\centering
		\includegraphics[scale=0.8]{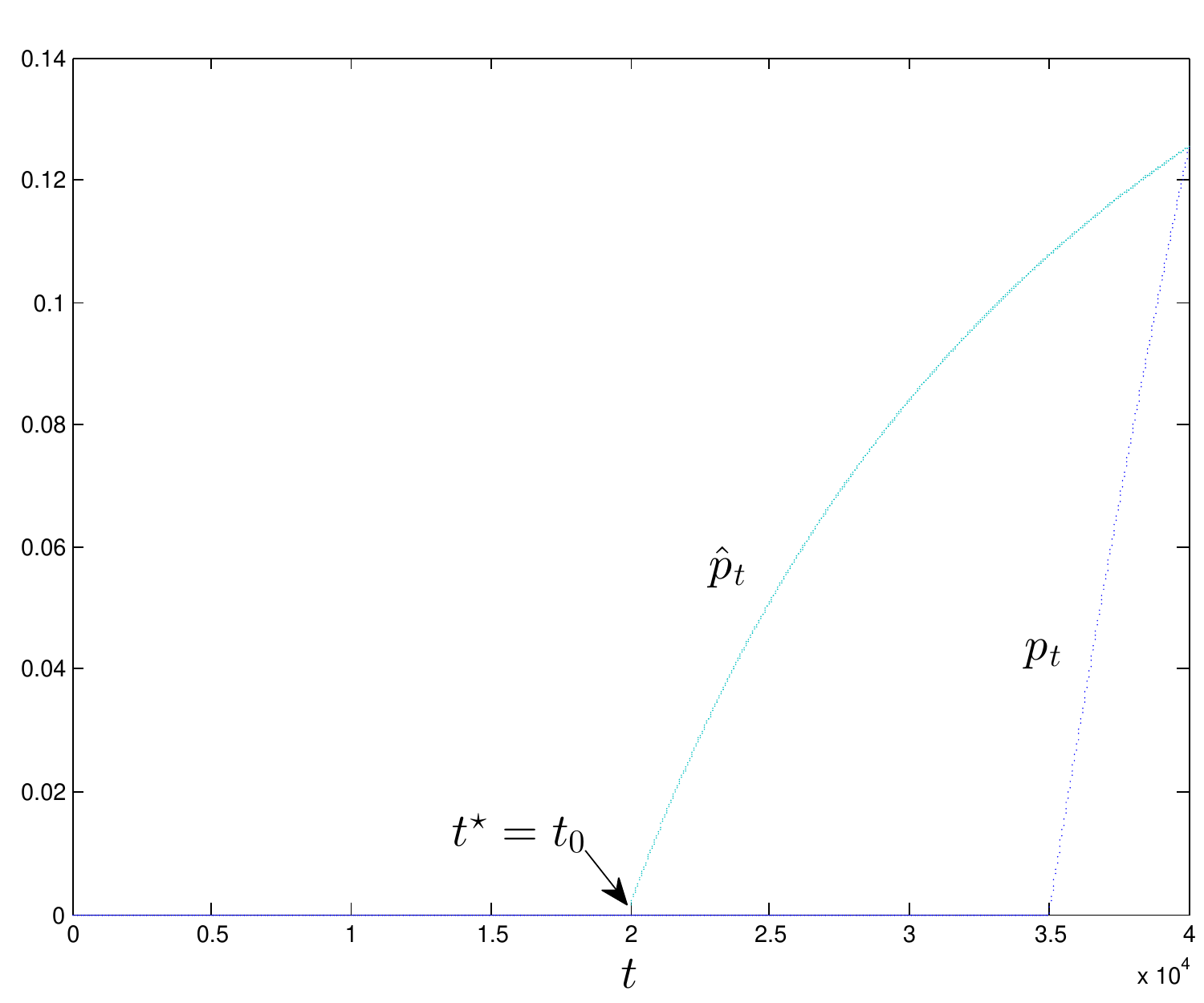}
		\caption{Case $1$: Stop decoding at $\tstar=t_{0}$ \zx{under the setting where $n=40000$, $\myp=\frac{1}{8}$, $\erate=0.08$ and $t\in[20000,40000]$.}}
		\label{fig:decoding-case-1}
	\end{figure}
	
	For any High Type Trajectory, we have $p_{t_0}>\tilde{p}_{t_0}$. By Claim~\ref{claim-intersection}, given any trajectory $\pref{p}$ of High Type, the trajectory $\pref{p}$ always intersects with $\phatt$ no later than the second to last chunk end. Let $\tstar$ be the chunk end immediately after the first intersection point (Shown in Figure~\ref{fig:decoding-case-2}). 
	Then at any position $t\leq \tstar$, by Corollary~\ref{coro:b-list-decodable} with list decoding radius $t\phat_{t}$, the left mega sub-code $\megacode[\frac{t}{\chunk}]$ is list decodable with list size $\blistord$ with probability $1-\frac{1}{\ppm n}$ over code design. Also, for $\tstar$, since $\tstar p_{\tstar}<\tstar\phat_{\tstar}$, the transmitted \ml{message $m$} is in the list $\mlist$. 
	
	From Claim~\ref{claim-ingap} and Claim~\ref{claim:p-tilde-t}, for any trajectory $\pref{p}$  of Calvin of High Type, if $t \leq \tstar$ then 
	the right mega sub-word with respect to position $t$ has no more than $\frac{1}{4}-\frac{\erate^2}{16}$ of its bits flipped. Furthermore, the right mega sub-code with respect to position $t$, $\rmegacode[\frac{t}{\chunk}]$, is $\sigma$-good with respect to \ml{$m$ and list \sj{$\clist$}} with probability $1-2^{-n}$ over code design. Thus far, for any High Type Trajectory, both the properties in the claim are also satisfied by our code design.  

	\begin{figure}[htb]
		\centering
		\includegraphics[scale=0.8]{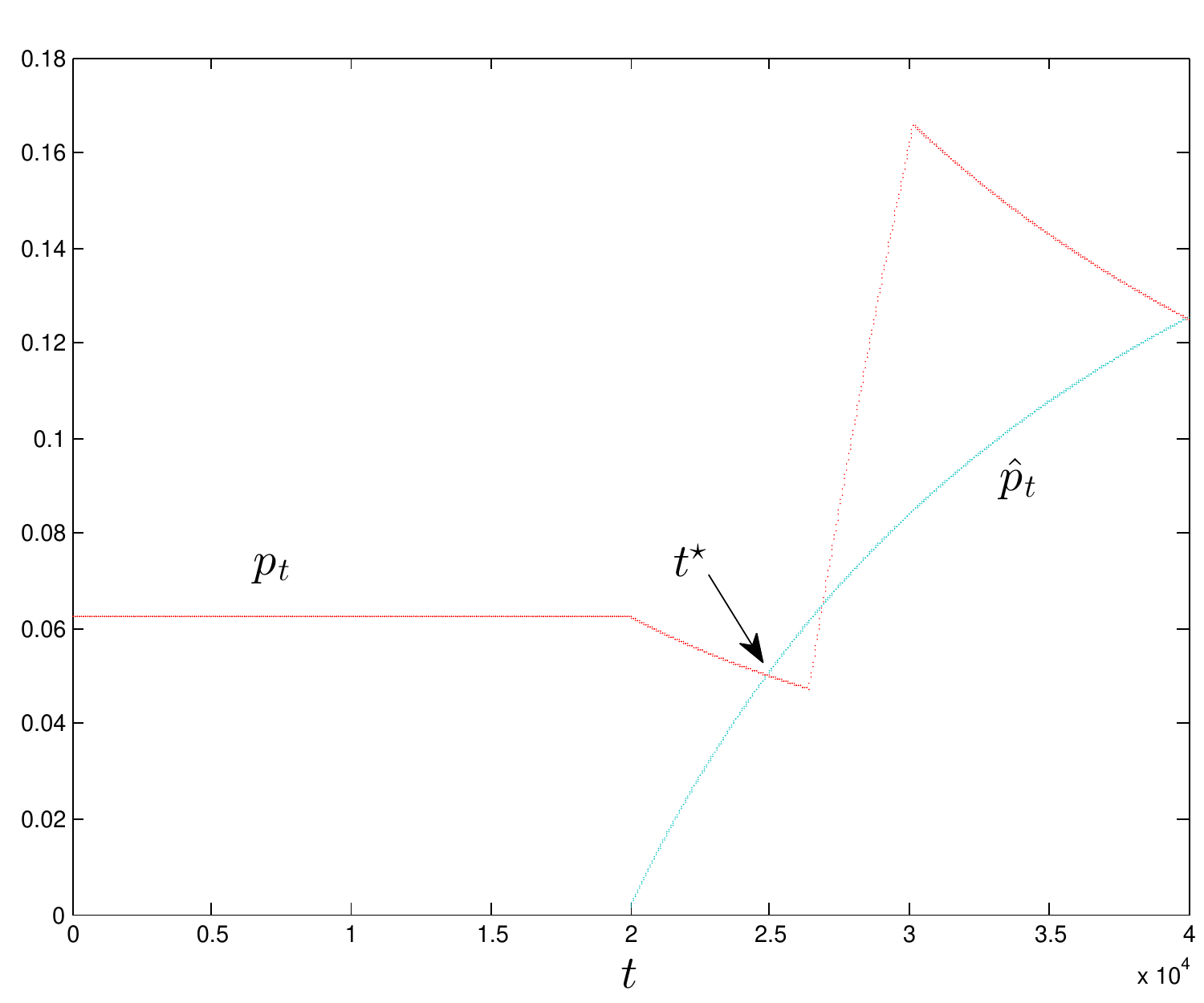}
		\caption{Case $2$: Stop decoding at $\tstar$ \zx{under the setting where $n=40000$, $\myp=\frac{1}{8}$, $\erate=0.08$ and $t\in[20000,40000]$.}}
		\label{fig:decoding-case-2}
	\end{figure}
	
	In conclusion, the probability that the code $\mcode$ possesses the two properties is at least $1-\frac{1}{\ppm n}-2^{-n}$.
\end{proof}

\begin{remark}
\label{rem:decode}
	Note that, using the code from Claim~\ref{claim:prob-good-exist}, the position $\tstar$ can found by Bob through an iterative decoding process starting from the position $t_{0}$, and therefore, the coding process of Bob can stop at $\tstar$ correctly. More precisely, Claim~\ref{claim:prob-good-exist} ensures that every time Bob obtains a list of codewords, then no matter if the transmitted \ml{message $m$} is in the list \sj{$\mlist$} or not, the right mega sub-code with respect to position $t \leq \tstar$ is $\sigma$-good with respect to \ml{$m$} and the \sj{right mega sub-codeword} list \sj{$\clist$}. In other words, if $t$ is strictly smaller than $\tstar$ then the consistency decoding of Bob will not return any message, and when $t=\tstar$ the consistency decoding will return the correct message (all with high probability over the randomness of Alice). Thus, Bob can correctly determine whether to continue the decoding process or not.
\end{remark}

\begin{claim}
	\label{claim:prob-decode}
	Let $\myp\in\intvquar$ and $\erate>0$. Let $\echunk=\echval$ and $\phatt=\frac{n^2\erate^2}{16}\cdot\frac{1}{t^2}-\frac{n(1-4\myp)}{4}\cdot\frac{1}{t}+\frac{1}{4}$. Let $\ppm=\myp-\frac{\erate^2}{16}$ be the fraction of a codeword Calvin flips. Let $R=C_{p'}-\erate$ where $C_{p'}=\myopt$ and $\myfun=1-4(p'-\pbar)$. For any message $m\in\msgspace$ and its corresponding encoding $\mathbf{x}\in\binaryset{n}$ using the code established in Claim~\ref{claim:prob-good-exist} and the encoder of Section~\ref{sec:model}, the decoding procedures described in Section~\ref{sec:model} allows Bob to correctly decode the \ml{message $m$} with probability at least $1-n2^{-\frac{n\echunk^3}{32}}$ over the random secrets $s\in\secr$ available to Alice. 
\end{claim}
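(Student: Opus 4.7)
The plan is to combine the deterministic geometric guarantee supplied by Claim~\ref{claim:prob-good-exist} with a union bound over Bob's at most $n$ consistency iterations. The deterministic guarantee, distilled to its essence, asserts that if Alice's residual secrets yield a right mega sub-codeword that is sufficiently far in Hamming distance from every right mega sub-codeword produced by messages other than the transmitted one, then the consistency phase of Bob's decoder behaves correctly at the iteration in question. Since $\sigma$-goodness (with $\sigma=2^{-n\echunk^3/32}$) quantifies the probability that Alice's residual secrets fail this geometric condition at a single iteration, union bounding over all iterations Bob performs will produce the desired bound.

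Fix a code furnished by Claim~\ref{claim:prob-good-exist} and any iteration $t\in\{t_0,t_0+\chunk,\ldots,\tstar\}$. Let $y$ denote the right mega sub-word of the received word at position $t$, and let $x_m$ be the right mega sub-codeword Alice actually transmits, determined by her residual secrets $(s_{k+1},\ldots,s_{\chunknum})$. By the energy bounding clause of Claim~\ref{claim:prob-good-exist} we have $d(y,x_m)\leq\bigl(\tfrac14-\tfrac{\erate^2}{16}\bigr)(n-t)$. Suppose Alice's residual secrets are good with respect to $m$ and the list $\clist$ of right mega sub-codewords obtained from Bob's list decoder; then Definition~\ref{def:good-1} forces $d(x_m,x_{m'})>\bigl(\tfrac12-\tfrac{\erate^2}{8}\bigr)(n-t)$ for every $x_{m'}\in\clist$. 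A single triangle-inequality step then yields $d(y,x_{m'})>\bigl(\tfrac14-\tfrac{\erate^2}{16}\bigr)(n-t)$, so no $m'\in\mlist\setminus\{m\}$ passes the consistency test. Hence at iteration $t$: if $m\in\mlist$ (which is guaranteed at $t=\tstar$), then $m$ is the unique message passing consistency and Bob decodes correctly; if $m\notin\mlist$, no message passes and Bob advances to $t+\chunk$.

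By the $\sigma$-goodness clause of Claim~\ref{claim:prob-good-exist}, at each fixed $t$ the set of residual-secret tuples $(s_{k+1},\ldots,s_{\chunknum})\in\secr^{l}$ that violate the goodness condition has measure at most $\sigma=2^{-n\echunk^3/32}$; since Alice's secrets are uniform on $\secr^{\chunknum}$, the marginal on $(s_{k+1},\ldots,s_{\chunknum})$ is also uniform, so the probability of violation at $t$ is at most $\sigma$. Taking a union bound over the at most $n$ chunk ends at which Bob might iterate produces an overall failure probability of at most $n\cdot 2^{-n\echunk^3/32}$, which matches the claim. The main obstacle I anticipate is not the geometric step, which is clean, but rather verifying that the premises of the deterministic argument (energy bounding together with $\sigma$-goodness) genuinely apply at \emph{every} $t<\tstar$ that Bob visits en route to $\tstar$---this is where the careful case analysis in Claims~\ref{claim-intersection}, \ref{claim-ingap}, \ref{claim:p-tilde-t} and \ref{claim-zero-starting}, already packaged inside Claim~\ref{claim:prob-good-exist}, becomes essential, since only then is Bob guaranteed not to halt with a spurious message at some intermediate iteration.
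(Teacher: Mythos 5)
Your proposal is correct and tracks the paper's own argument: the paper likewise conditions on the deterministic guarantees of Claim~\ref{claim:prob-good-exist}, invokes Remark~\ref{rem:decode} (which encapsulates the triangle-inequality step you make explicit) to conclude the consistency decoder returns nothing for $t<\tstar$ and returns $m$ at $t=\tstar$, and then union bounds the $\sigma$-bad secret fractions over at most $n$ iterations. Your explicit distance computation, using the threshold $\left(\tfrac14-\tfrac{\erate^2}{16}\right)(n-t)$ from the definition of ``consistent'' together with the goodness radius $\left(\tfrac12-\tfrac{\erate^2}{8}\right)(n-t)$, is precisely the reasoning the paper compresses into Remark~\ref{rem:decode}.
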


\begin{proof}
	A decoding error occurs if in Step (3) of the decoding procedure the consistency decoder fails to return a single message or if the decoder returns a message that is not equal to the transmitted message. For all $t$ strictly less than $\tstar$ of Claim~\ref{claim:prob-good-exist}, we have by property (2) of Claim~\ref{claim:prob-good-exist}, Remark~\ref{rem:decode}, and by the definition of Step (3) of our decoding procedure that the consistency check in the decoding process will not return any \ml{message} (with probability $1-\sigma$ over the randomness of the encoding). In addition, for $t=\tstar$, with the same probability, the consistency check of the decoding process will return the correct message.
	In both cases, the success probability is obtained by the probability that the sequence of $l$ secrets used in the right mega sub-codeword is not chosen from the particular $\sigma$ portion of $\secr^{l}$ that may cause a decoding failure. From Claim~\ref{claim:good-3}, we have \ml{$\sigma=2^{-\frac{nS}{4}}$} where \ml{$S=\frac{\theta^3}{8}$}.
	
	Therefore, the probability of successful decoding is at least
	\ml{
	\begin{align*}
		1-n\sigma=1-n2^{-\frac{nS}{4}}=1-n2^{-\frac{n\echunk^3}{32}}.
	\end{align*}
	}
\end{proof}

\begin{theorem}[Rephrasing of Theorem~\ref{the:main:flip}]
	\label{thm:capacity}
	Let $\myp\in\intvquar$, $\erate>0$ and $\ppm=\myp-\frac{\erate^2}{16}$. The capacity of the binary causal adversary bit-flip channel $C_{p}$ is $\capopt$ where $\alfun=1-4(p-\pbar)$.
\end{theorem}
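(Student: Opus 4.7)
The plan is to assemble Theorem~\ref{thm:capacity} from the machinery already developed in the appendix: the achievability (lower bound) follows directly from Claims~\ref{claim:prob-good-exist} and~\ref{claim:prob-decode}, while the matching converse is cited from the prior work of Dey {\it et al.}~\cite{dey2012improved,dey2013upper}. The only new ingredient at this stage is a continuity argument relating the rate $C_{\myp}-\erate$ (which our codes achieve against adversarial power $p=\myp-\erate^2/16$) to the target capacity value $C_{p}$.

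Concretely, I would fix arbitrary $\xi,\beta>0$ and then choose $\erate>0$ small enough that $\erate<\beta/2$ and $|C_{p+\erate^2/16}-C_{p}|<\beta/2$; this relies on the continuity of $C_{p'}$ in $p'$, which I would justify by noting that $\alfun(1-H(\bar{p}/\alfun))$ is jointly continuous in $(p',\bar{p})$ over the compact region $\bar{p}\in[0,p']$, so its minimum is a continuous function of $p'$. With these choices, set $\myp=p+\erate^2/16$, $R=C_{\myp}-\erate$, and $\echunk=\echval$, so that $R>C_{p}-\beta$. Claim~\ref{claim:prob-good-exist} then guarantees that, with probability at least $1-\frac{1}{pn}-2^{-n}$ over the random code construction, a deterministic code $\mcode$ of block length $n$ and rate $R$ exists that possesses the list-decoding and energy-bounding properties; for sufficiently large $n$ this probability is positive, so one can fix such a code.

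Next, applying Claim~\ref{claim:prob-decode} to this fixed $\mcode$, for every message $m\in\msg$ the iterative decoder of Section~\ref{sec:model} recovers $m$ correctly with probability at least $1-n\cdot2^{-n\echunk^3/32}$ over Alice's private secrets, which exceeds $1-\xi$ for $n$ large. Since this bound is uniform over $m$, the maximum probability of error is at most $\xi$, and the code transmits $2^{nR}>2^{n(C_{p}-\beta)}$ distinct messages, so rate $C_{p}-\beta$ is achievable. Letting $\beta\downarrow 0$ shows $C_{p}\geq\capopt$. Combining with the matching upper bound from~\cite{dey2012improved,dey2013upper} yields $C_{p}=\capopt$.

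I expect no serious obstacle at this step: the heavy lifting is already done in Claims~\ref{claim:prob-good-exist} and~\ref{claim:prob-decode}. The one subtlety worth stating carefully is the simultaneous quantifier management, namely that after fixing the code design using a union bound of $\frac{1}{pn}+2^{-n}$ (over the existence of a good code), the error probability of $n\cdot 2^{-n\echunk^3/32}$ must be controlled for every message and every adversarial trajectory, using that the good-code properties hold against all of Calvin's bit-flip patterns and that the bound on Alice's failure probability holds per message. Combined with the continuity argument for $C_{p'}$, this yields the claimed capacity.
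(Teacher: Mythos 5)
Your proposal is correct and follows essentially the same route as the paper: cite the converse from Dey et al., invoke Claim~\ref{claim:prob-good-exist} for the existence of a good code, apply Claim~\ref{claim:prob-decode} for the per-message decoding success probability, and use a continuity argument to relate $R=C_{\myp}-\erate$ (with $\myp=p+\erate^2/16$) to $C_p-\beta$. You spell out the continuity step (joint continuity of $\alpha(p',\bar p)(1-H(\bar p/\alpha(p',\bar p)))$ over the compact domain, hence continuity of the minimum in $p'$) in more detail than the paper, which merely appeals to continuity of the entropy function, but the argument is the same in substance.
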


\begin{proof}
Let $\xi>0$ and $\beta>0$.
The converse is proven in \cite{dey2013upper}. Namely, in \cite{dey2013upper} it is shown that for any code $\mcode$ with stochastic encoding of rate $R=C_{p}+\beta$, the average error probability is lower bounded by $\beta^{O\left(\frac{1}{\beta}\right)}$.

The achievability proof (our main result) follows from the claim above. Specifically, for sufficiently large $n$ it holds by Claim~\ref{claim:prob-decode} that the decoding error is bounded by above by $\xi$.
In addition, for sufficiently small $\epsilon$, by the definition of $p'$ and by the continuity of the entropy function, the code rate $R=C_{p'}-\erate$ of Claim~\ref{claim:prob-decode} is at least $C_p - \beta$.
Therefore, for sufficiently large $n$, $2^{nR}=2^{n\left(C_{p}-\beta\right)}$ distinct messages can be reliably transmitted over our channel with error probability at most $\xi$. Hence, the channel capacity of the binary causal adversarial bit-flip channel is $C_{p}$.

\end{proof}

\begin{table}[p] 
\vspace{-1in}
	\renewcommand{\arraystretch}{1.5}
	\centering
	\caption{Table of Parameters (for the bit-flip case)}
	\begin{tabular}{|c|c|c|}
		\hline
		symbol & description & equality/range \\ \hline
		$m$ & message & $m\in\msg$ \\ \hline
		$s$ & secret & $s\in\secr$ \\ \hline
		$\mathbf{x}$ & codeword & $\mathbf{x}\in\mathcal{X}^{n}$\\ \hline
		$\msg$ & message set & $\msg=\left[2^{nR}\right]$ \\ \hline
		$\secr$ & secret set & $\secr=\left[2^{nS}\right]$\\ \hline
		$\mathcal{X}$ & input alphabet & $\binaryset{}$ \\ \hline
		$\mathcal{Y}$ & output alphabet & $\binaryset{}$ \\ \hline
		$\mathcal{T}$ & set of chunk ends & $\choiceschunk$ \\ \hline
		$\mcode$ & code & \eqref{code-design} \\ \hline
		$\Phi$ & uniform distribution of stochastic codes & \\ \hline
		$C_{p}$ & capacity & $\capopt$ \\ \hline
		$R$ & message rate & \zx{$R=C_{\myp}-\erate$} \\ \hline 
		$S$ & secret rate & \ml{$S=\frac{\theta^3}{8}$} \\ \hline
		$\erate$ & gap in the rate from $C_{p}$ & \\ \hline
		$\echunk$ & $\chunknum$ is the number of chunks in a codeword & $\echunk=\echval$\\ \hline
		$\myp$ & assumed fraction of a codeword that can be flipped & $\myp=\ppm+\frac{\erate^2}{16}$\\ \hline
		$\ppm$ & actual fraction of a codeword that can be flipped & $\ppm\in\intvquar$\\ \hline
		$\pref{p}$ & Calvin's trajectory & Definition~\ref{def:p-t} \\ \hline
		$\pbar_{t}$ & for deriving $\phatt$ & \eqref{eq:p-bar-t} \\ \hline
		$\phatt$ & decoding reference trajectory & \eqref{eq:p-hat-t} \\ \hline
		$\tilde{p}_{t}$ & energy bounding benchmarking trajectory & \eqref{eq:p-tilde-t} \\ \hline
		$n$ & block length & \\ \hline
		$k$ & number of chunks in the left mega sub-codeword & $k=\frac{t}{\chunk}$\\ \hline
		$l$ & number of chunks in the right mega sub-codeword & $l=\chunknum-\frac{t}{\chunk}$\\ \hline
		$t$ & length of left mega sub-codeword & $t\in\setchend$ \\ \hline
		$\tstar$ & correct decoding point & Definition~\ref{def:t-star} \\ \hline
		$\mlist$ & a list of \sj{messages} & \\ \hline
		$\clist$ & a list of \sj{right mega sub-codewords} & \\ 
			& \sj{excluding codewords corresponding to $m$} & \\ \hline
		$\mlists$ & list size \sj{of $\mlist$} & $\blistord$ \\ \hline
		\sj{$\clists$ }& \sj{list size of $\clist$} & \sj{$2^{nSl} \cdot \blistord$} \\ \hline
		$\sigma$ & fraction of bad secret sequences in $\secr^{l}$ & \ml{$\sigma=2^{-\frac{nS}{4}}$} \\ \hline
	\end{tabular}
	\label{tab:params}
\end{table}

\newpage
\section{Code Analysis for the Erasure Channel} 
\label{app:erasure}
In this section we address the bit erasure case.
For completeness, we repeat parts of the model.

\subsection{Model}
\label{sec:model:erase}

\paragraph*{Channel Model:}
For any positive integer $i$, let $\left[ i\right] $ denote the set $\left\lbrace 1,2,\cdots,i\right\rbrace $. For a transmission duration of $n$ bits, a binary causal adversarial erasure channel can be characterized by a parameter $p\in\left[ 0,1\right] $ and a triple $\left( \mathcal{X}^n,\textsf{Adv},\mathcal{Y}^n\right) $. Here, $p$ is the fraction of bits that Calvin can erase in a codeword, $\mathcal{X}=\left\lbrace 0,1\right\rbrace $ and $\mathcal{Y}=\left\lbrace 0,1,\Lambda\right\rbrace $ is the input and output alphabet of the channel, and $\textsf{Adv}=\left\lbrace \text{Adv}^i|i\in\left[ n\right] \right\rbrace $ is a sequence of mappings such that by the end of transmitting $n$ bits, the number of erasures in the codeword is at most $np$. More precisely, each map $\text{Adv}^i:\mathcal{X}^i\times\mathcal{Y}^{i-1}\to\mathcal{Y}$ is a (randomized) function that, at the time of transmitting the $i$-th bit, maps the sequence of channel inputs up to time $i$, $\left( x_1,x_2,\cdots,x_i\right) \in\mathcal{X}^i$, together with the sequence of all previous channel outputs up to time $i-1$, $\left( y_1,y_2,\cdots,y_{i-1}\right) \in\mathcal{Y}^{i-1}$, to an output symbol $y_i\in\left\lbrace x_i,\Lambda\right\rbrace $.
The functions  $\text{Adv}^i$ must satisfy the adversarial power constraint, namely that at no point in time does the total number of bit-flips exceed $pn$.

\paragraph*{Random Code Distribution:} Follows the bit-flip case.

\paragraph*{Encoder:} Follows the bit-flip case.

\paragraph*{Decoding Process:}
The decoding process of Bob is divided into two phases. Firstly, upon receiving the entire codeword with erasures, Bob identifies the smallest integer value of $\kstar$ such that the number of erasures $\lambda$ in the first $\kstar$ chunks of the received word satisfy: $\tstar\geq Rn+\lambda+\ethree \tstar$ and $np-\lambda+\frac{(n-\tstar)\etwo}{2}\leq\frac{n-\tstar}{2}$, where $\tstar=\kstar\chunk$ is the number of bits in the first $\kstar$ chunks. In our proof to come, we show that indeed such a $\kstar$ exists.
Then Bob decodes the left mega sub-codeword $\megacodw[\kstar]$  \zx{for $\lambda$ erasures} by using a list decoder \sj{(of messages)} with list size $L$. As we will show, the list size $L$ is at most $\elistord$. At this phase, the list decoded by Bob includes $L$ \sj{messages}, one of which corresponds to the transmitted message $m$. Let $\mlist$ denote the set of $L$ messages in the decoded list.


In the second phase of decoding, Bob now considers the right mega sub-codeword $\rmegacodw[\kstar]$ and uses the natural \textit{consistency} decoder (defined below) to decode a message $\hat{m} \in \mlist$. As we will show, the right mega sub-codeword with respect to position $\tstar=\kstar\chunk$ has at most a fraction of $\left( \frac{1}{2}-\efour\right)$ of its entries erased.

\begin{definition}
	Let $k\in\left[\chunknum-1\right]$. Let $\mathbf{x}\in\mathcal{X}^{k\chunk}$ be a mega sub-codeword and $\mathbf{x}^\prime\in\binaryera{k\chunk}$ be a mega sub-codeword with erasures, where $\Lambda$ denotes erasure. The mega sub-codeword $\mathbf{x}$ is \textit{consistent} with the erased mega sub-codeword $\mathbf{x}^\prime$ if and only if $\mathbf{x}$ agrees with $\mathbf{x}^\prime$ in every unerased position.
\end{definition}

\begin{definition}
	A consistency decoder applied to a right mega sub-code $\rmegacode[\frac{t}{\chunk}]$ with respect to position $t$ is a decoder that takes the right mega sub-codeword of a received codeword $\mathbf{x}^{\prime}$ and returns a unique message $\hat{m}$ corresponding to \sj{a message} in the list $\mlist$, at least one of whose right mega sub-codewords is consistent with that of $\mathbf{x}^{\prime}$. If more than one such message exists, then a decoding error is declared.
\end{definition}

Formally, the decoder process of Bob can be described as follows:
\begin{itemize}
	\item \zx{Identify an integer $\kstar$ with $\tstar=\kstar\chunk$ such that $\tstar\geq Rn+\lambda+\ethree \tstar$ and $np-\lambda+\frac{(n-\tstar)\etwo}{2}\leq\frac{n-\tstar}{2}$}
	\item List decode the left mega sub-code $\megacode[\kstar]$ with respect to position $\tstar$ to obtain a list $\mlist$ of \sj{messages} of size $L$, \zx{with list decoding radius $\lambda$}.
	\item \zx{Verify the right mega sub-codewords of the \sj{messages} in the list $\mlist$ through a} consistency decoder to recover a message $\hat{m}$ \sj{which is} in $\mlist$ or declare a decoding error.
\end{itemize}

Bob will decode correctly if his estimate $\hat{m}$ is equal to $m$. 
\sj{As $m$ is in $\mlist$}, correct decoding will happen if the only right mega sub-codewords with respect to position $\tstar$ \ml{consistent with} that of the received codeword correspond to the message $m$. In our analysis, we will show that this indeed happens with high probability over the random secrets used by Alice for encoding the right mega sub-codeword.

If Bob's estimate $\hat{m}$ is not equal to $m$, Bob is said to make a \textit{decoding error}. The \textit{probability of error} for a message $m$ is defined as the probability over Alice's private secrets that Bob decodes incorrectly. The probability of error for the code $\mcode$ is defined as the maximum over all messages $m\in\msg$ of the probability of error for message $m$.

A rate $R$ is said to be \textit{achievable} if for every $\xi>0$, $\beta>0$ and every sufficiently large $n$ that there exists a code of block length $n$ that allows Alice to communicate $2^{n(R-\beta)}$ distinct messages with Bob with probability of error at most $\xi$. The supremum over $n$ of all achievable rates is the capacity of the channel, denoted by $C_p$.

\paragraph*{Adversarial Behavior:}
The behavior of Calvin is specified by the channel model above. Nevertheless, in our analysis we assume that Calvin has certain capabilities that may be beyond those available to a causal adversary. This can be done without loss of generality as we are studying lower bounds on the achievable rate in this work, and thus our bounds will also hold for the original (more restricted) causal adversary Calvin. 

To be more specific, in our analysis, we model the behavior of Calvin in two phases corresponding to the phases of Bob. In the first phase, Calvin behaves causally and erases certain bits of the transmitted codeword. At some point in time, no matter which bits Calvin decides to erase, we show (by Claim~\ref{claim:e-1} and Claim~\ref{thm:e-list-decodable}) that there will be an integer $\tstar$ for which two properties hold:
\begin{itemize}
	\item \zx{For any erasure pattern of the adversary,} there exists a position $\tstar$ such that the left mega sub-code with respect to $\tstar$, $\megacodw[\kstar]$ is list-decodable with list size $L=\elistord$,  and that the \sj{transmitted message $m\in\mlist$}.
	\item \zx{At position $\tstar$, the right mega sub-word received with respect to position $\tstar$ has no more than $\frac{1}{2}-\efour$ of its bits erased.}
\end{itemize}
We will assume throughout our analysis that the value $\tstar$ that Bob uses in his decoding and the list $\mlist$ of \sj{messages} obtained through Bob's list decoding process can be determined explicitly by Calvin. Moreover, we assume that Calvin knows the message $m$ all along.

The second phase of Calvin starts after $\tstar$ bits have been transmitted. At this point in time we show in Claim~\ref{claim:e-1} that no matter how Calvin behaves there will be at most $\left( \frac{1}{2}-\efour\right) \left(n-\tstar\right)$ bits that can be erased. We stress that in the beginning of Calvin's second phase, the secrets corresponding to the right mega sub-codeword are unknown to Calvin. Indeed, given the causal nature of Alice's encoding, the secrets have not even been chosen by Alice at this point in time. The fact that those secrets are hidden from Calvin imply that the secrets of the right mega sub-codeword are completely independent of the list $\mlist$ determined by Calvin. This fact is crucial to our analysis.

In the second phase, we strengthen Calvin by allowing him to choose which bits to erase in a non-causal manner. Namely, we assume that Calvin chooses his erasure pattern after seeing (or looking ahead at) all the remaining bits of the transmitted codeword. As mentioned above, no matter how these erasures are chosen, there will be at most $\left( \frac{1}{2}-\efour\right) \left(n-\tstar\right)$ erased bits. The fact that the distribution of the secrets is independent from the list $\mlist$ will allow us to show that Bob succeeds in his decoding.

\subsection{Code Analysis for the Erasure Channel}
\label{sec:code-analysis:erase}

We start by summarizing several definitions and claims. The detailed presentations of the definitions and claims are followed by the summary.
\begin{enumerate}[label=\textbf{\arabic*}.]
	\item {\bf The list decoding and energy bounding properties and the existence of a correct decoding point}
	\begin{itemize}
		\item Claim~\ref{claim:e-1}: This is a central claim which establish the existence of the correct decoding point $\tstar$ such that, no matter what the erasure pattern chosen by Calvin is, the list-decoding condition and the energy bounding condition are satisfied.
	\end{itemize}
	\item {\bf List decoding properties}
	\begin{itemize}
		\item Claim~\ref{thm:e-list-decodable}: A left mega sub-code can be list decoded to a list \ml{of message }of size $\blistord$ with high probability.
		\item Corollary~\ref{coro:e-decodable}: Every left mega sub-code can be list decoded to a list \ml{of message }of size $\blistord$ with high probability.
	\end{itemize}
	\item {\bf Utilizing the energy bounding condition}
	\begin{itemize}
		\item Definition~\ref{def:dist:erasure}: Defines the distance between a right mega sub-codeword and a list of codewords.
		\item Definition~\ref{def:e-good-1}: Defines certain {\em goodness} property of a right mega sub-code with respect to \sj{a message, a list of right mega sub-codewords (of messages excluding the transmitted message), and a sequence of secrets.}
		\item Definition~\ref{def:e-good-1}: Defines the $\sigma$-goodness property of a right mega sub-code with respect to \sj{a message and a list of right mega sub-codewords (of messages excluding the transmitted message}).
		\item Claim~\ref{claim:e-good-1}: A right mega sub-code is good with respect to 
\sj{a message, a list of right mega sub-codewords (of messages excluding the transmitted message)}, \ml{and a sequence of secrets.}
		\item Claim~\ref{claim:e-good-2}: A right mega sub-code is $\sigma$-good with respect to 
\sj{a message and a list of right mega sub-codewords (of messages excluding the transmitted message).}
		\item Claim~\ref{claim:e-good}: A right mega sub-code is $\sigma$-good with respect to 
		\ml{every transmitted message and every list of right mega sub-codewords (of messages excluding the transmitted message).}
	\end{itemize}
	\item {\bf Summary and proof of Theorem~\ref{the:main:erase}}
	\begin{itemize}
		\item Claim~\ref{claim:good-code-exist}: With high probability our code $\mcode$ possesses the needed properties.
		\item Claim~\ref{thm:e-error-prob}: With high probability Bob succeeds in decoding.
		\item Theorem~\ref{thm:e-cap}: Rephrasing of Theorem~\ref{the:main:erase} (channel capacity).
	\end{itemize}
\end{enumerate}

We now define several properties for our code $\mathcal{C}$ defined in Section~\ref{sec:model:erase} that once established will allow communication between Alice and Bob. The properties defined are governed by the discussion given in Section~\ref{sec:model:erase}.  Throughout, for a list $\mlist$ \ml{of messages, we denote the list of \sj{right mega sub-}codewords corresponding to $\mlist$ which exclude the \sj{right mega sub-}codewords corresponding to a specific message $m$ as $\exlist$.}

\subsubsection{The list decoding and energy bounding properties and the existence of correct decoding point}

\begin{claim}
	\label{claim:e-1}
	For any erasure pattern chosen by the adversary, let $R=1-2p-\eone$ for $\eone\in\left( 0,1-2p\right) $ and $p\in\rangep$. Let a codeword of length $n$ consists of $\frac{1}{\etwo}$ chunks of sub-codewords and $4\etwo=\eone$. Let $\lambda$ be the number of erasures up to position $\tstar$ where $\lambda\leq np$. Let $\setchend=\choiceschunk$ and $\tstar\in\setchend$. Then there exists $\tstar=\kstar\chunk$ such that
	\begin{align}
	\label{eq:erasure-con-1}
	Rn+\lambda+\ethree \tstar\leq \tstar
	\end{align}
	and that
	\begin{align}
	\label{eq:erasure-con-2}
	np-\lambda+\frac{(n-\tstar)\etwo}{2}\leq\frac{n-\tstar}{2}
	\end{align}
\end{claim}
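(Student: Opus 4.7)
The plan is to rewrite both conditions as constraints on $\lambda(t)$, viewed as a function of $t \in \setchend$, and then pick $\tstar$ as the smallest chunk end at which the list-decoding condition~\eqref{eq:erasure-con-1} holds. Concretely, using $\ethree=\etwo$, condition~\eqref{eq:erasure-con-1} rewrites as $\lambda(t) \le U(t) := (1-\etwo)t - Rn$, and condition~\eqref{eq:erasure-con-2} rewrites as $\lambda(t) \ge L(t) := np - (1-\etwo)(n-t)/2$. The goal is therefore to find $t \in \setchend$ with $\lambda(t) \in [L(t),U(t)]$. The intuition is that Calvin faces a tradeoff: consuming many erasures early pushes $\lambda(t)$ above $L(t)$ but risks violating the upper bound $U(t)$, while deferring erasures does the opposite; existence of $\tstar$ amounts to showing these monotone trends must cross inside $\setchend$.

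First I would verify that the set $\{t \in \setchend : \lambda(t) \le U(t)\}$ is non-empty by checking $t = n-n\etwo$: using $\lambda(t) \le np$ and substituting $R = 1-2p-4\etwo$, one gets $U(t)-\lambda(t) \ge n(p+2\etwo+\etwo^2) \ge 0$. Let $\tstar$ be the minimum element of this set. Since $\lambda(\tstar) \ge 0$, the bound $U(\tstar) \ge 0$ yields the crucial lower estimate $\tstar \ge Rn/(1-\etwo) = n(1-2p-4\etwo)/(1-\etwo)$.

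Next I would verify $\lambda(\tstar) \ge L(\tstar)$ via two cases. If $\tstar > n\etwo$, then $\tstar - n\etwo \in \setchend$ and by minimality of $\tstar$ condition~\eqref{eq:erasure-con-1} fails there: $\lambda(\tstar-n\etwo) > (1-\etwo)(\tstar-n\etwo) - Rn$. Since $\lambda$ is non-decreasing, $\lambda(\tstar) > (1-\etwo)\tstar - Rn - n\etwo(1-\etwo)$. A short algebraic manipulation reduces the desired inequality $\lambda(\tstar) \ge L(\tstar)$ to the sufficient bound $\tstar \ge n(1-2p-5\etwo - 2\etwo^2)/(1-\etwo)$, which is implied by the lower estimate on $\tstar$ already established (since $-4\etwo \ge -5\etwo - 2\etwo^2$). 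In the corner case $\tstar = n\etwo$, I would use the hypothesis $4\etwo = \eone < 1-2p$: it gives $2p < 1-4\etwo \le (1-\etwo)^2$ because $(1-\etwo)^2 - (1-4\etwo) = \etwo^2 + 2\etwo \ge 0$, so $L(n\etwo) = np - n(1-\etwo)^2/2 < 0$ and~\eqref{eq:erasure-con-2} holds trivially from $\lambda \ge 0$.

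The main obstacle is the quantization loss of order $n\etwo$ incurred by forcing $\tstar$ to land on a chunk end rather than the continuous crossover where $L$ and $U$ coincide; one must verify that the slack $\ethree\tstar$ on the list-decoding side and $(n-\tstar)\etwo/2$ on the energy-bounding side are large enough to absorb this discretization. The resulting computation is routine but requires careful bookkeeping of terms of order $\etwo$ and $\etwo^2$ and a check that the edge case $\tstar = n\etwo$ is covered by the relation between $\etwo$ and $p$.
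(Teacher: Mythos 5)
Your proof is correct, and it is built around the same basic insight as the paper's: rewrite the two conditions as the requirement $\lambda(t)\in[L(t),U(t)]$ and show that the "window" is wide enough (width on the order of $\chunk$) so that the trajectory of cumulative erasures must fall into it at some chunk end. The arithmetic you do — verifying $U(n-\chunk)\ge\lambda(n-\chunk)$ using $R=1-2p-4\etwo$, deriving the lower estimate $\tstar\ge Rn/(1-\etwo)$, and checking the corner case $\tstar=n\etwo$ via $2p<1-4\etwo\le(1-\etwo)^2$ — all checks out and is equivalent in substance to the paper's inequality chain ending at the requirement $p\le\frac{(1-\etwo)(\eone-2\etwo+\etwo^2)}{2\etwo}$.

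Where your argument genuinely differs, and improves on the paper, is in how it handles the fact that $\lambda$ is itself a function of $\tstar$. The paper's proof says "let $\tstar$ be any value in the set $\setchend$ such that $\tstar\ge(1-2p-\eone+\efive)n+\lambda$" and then argues that a certain interval depending on $\lambda$ has width at least $\chunk$; it treats $\lambda$ as a free parameter and never explicitly resolves the circularity that $\lambda=\lambda(\tstar)$ must be consistent with the chosen $\tstar$. (Indeed, "any value" in that range does not work — for $\tstar$ close to $n-\chunk$ the energy-bounding side can fail.) Your formulation fixes this cleanly: you define $\tstar$ as the \emph{minimum} chunk end with $\lambda(\tstar)\le U(\tstar)$, and then use (a) the failure of the list-decoding constraint at $\tstar-n\etwo$ together with (b) the monotonicity of $\lambda$ to obtain a lower bound on $\lambda(\tstar)$ that implies the energy-bounding constraint. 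This minimality argument, together with the explicit use of $\lambda$'s monotonicity, is exactly the missing glue in the paper's write-up; your version is the more rigorous rendering of the same idea.

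One tiny caution for the write-up: you should state explicitly that $\lambda(\cdot)$ is non-decreasing (this is the only property of the erasure pattern your argument uses, and it is what makes the step "$\lambda(\tstar)\ge\lambda(\tstar-n\etwo)>U(\tstar-n\etwo)$" valid). It is obvious but worth saying since it is the hinge of the case $\tstar>n\etwo$.
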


\begin{proof}
	First we note that $(1-2p-\eone+\efive)n+\lambda\leq n-\chunk$. This holds for
	\begin{align}
	\label{eq:e5-upper}
	-\etwo^2+2\etwo-\eone\leq 2p-\frac{\lambda}{n}
	\end{align}
	where the \textit{right hand side}(RHS) is positive as $2p>\frac{\lambda}{n}$ and $\eone=4\etwo$.
	
	Next let $\tstar$ be any value in the set $\setchend=\choiceschunk$ such that 
	\begin{align}
	\label{nprimelower}
	\tstar\geq(1-2p-\eone+\efive)n+\lambda
	\end{align}
	That is to say, $\tstar$ is in the range $\left[ (1-2p-\eone+\efive)n+\lambda,n-\chunk\right] $. It follows that $\kstar$ is in the range $\left[ \frac{\left( 1-2p-\eone+\efive\right) n+\lambda}{\chunk},\frac{1-\etwo}{\etwo}\right] $. Notice that by our setting of parameters, such an $\tstar$ exists. Namely, as $4\etwo=\eone$ we have
	\begin{align*}
	(n-\chunk)-(1-2p-\eone+\efive)n-\lambda>\chunk
	\end{align*}
	
	Since $\tstar\in\left[ (1-2p-\eone+\efive)n+\lambda,n-\chunk\right] $, we have
	\begin{align*}
	Rn+\lambda+\ethree\tstar & \leq Rn+\lambda+\ethree(n-\chunk)\\
	& = (1-2p-\eone+\efive)n+\lambda\\
	& \leq \tstar
	\end{align*}
	
	Thus far, we have satisfied Condition \eqref{eq:erasure-con-1} in our claim. Next we will that $\tstar$ also satisfies Condition \eqref{eq:erasure-con-2}. Actually, it is equivalent to showing that
	\begin{align}
	\label{nprimeupper}
	\tstar\leq n-\frac{np-\lambda}{\frac{1}{2}-\efour}=\left( 1-\frac{p}{\frac{1}{2}-\efour}\right)n +\frac{\lambda}{\frac{1}{2}-\efour}
	\end{align}
	or in terms of $\kstar$ we need to show
	\begin{align}
	\label{t-upper}
	\kstar\leq\frac{1}{\etwo}-\frac{np-\lambda}{\left( \frac{1}{2}-\efour\right) \chunk}
	\end{align}
	
	Note that to satisfy Condition~\eqref{eq:erasure-con-1} we need $\tstar$ to be between $(1-2p-\eone+\efive)n+\lambda$ and $n-\chunk$ for any $\etwo>0$. Here, in addition, we also require $\tstar$ to be smaller than $\left( 1-\frac{p}{\frac{1}{2}-\efour}\right)n +\frac{\lambda}{\frac{1}{2}-\efour}$. Hence, it suffices to show that
	\begin{align}
	\label{nprimerange}
	\left( 1-\frac{p}{\frac{1}{2}-\efour}\right)n +\frac{\lambda}{\frac{1}{2}-\efour}-\left( 1-2p-\eone+\efive\right) n-\lambda\geq\chunk
	\end{align}
	
	Since the \textit{left hand side}(LHS) of \eqref{nprimerange} can be viewed as a linear function of $\lambda$, we minimize the LHS of \eqref{nprimerange} over $\lambda$ and get
	\begin{align*}
	\text{LHS of \eqref{nprimerange}}\geq\left( 1-\frac{p}{\frac{1}{2}-\efour}\right)n-\left( 1-2p-\eone+\efive\right) n
	\end{align*}
	Now for the above to be at least $\chunk$, we need
	\begin{align}
	\label{nprimerange_p}
	p\leq\frac{1}{2}\cdot\frac{\left( 1-\etwo\right) \left( \eone-2\etwo+\etwo^2\right)}{\etwo}
	\end{align}
	Notice that if $\frac{\left( 1-\etwo\right) \left( \eone-2\etwo+\etwo^2\right)}{\etwo}>1$ then for any $p\in\left[ 0,\frac{1}{2}\right) $, \eqref{nprimerange_p} always holds and so does \eqref{nprimerange}.
	
	Now for a given $\eone\leq 1-2p<1$ and $4\etwo=\eone$, we have
	\begin{align*}
	\left( 1-\etwo\right) \left( \eone-2\etwo+\etwo^2\right) & \geq \frac{1}{2}\left( \eone-2\etwo+\etwo^2\right) \\
	& = \frac{1}{2}\left( 2\etwo+\etwo^2\right) \\
	& > \etwo
	\end{align*}
\end{proof}

\begin{remark}
	Note that if the rate $R>1-2p$ then there will not exist $\tstar$ such that both conditions in Claim~\ref{claim:e-1} are simultaneously satisfied.
\end{remark}

\subsubsection{List decoding properties}

\begin{lemma}
	\label{le:prob_consistent}
	Let $m\in\msg$, $k\in\left[ \chunknum-1\right]$ and $(s_1,s_2,\cdots,s_k)\in\secr^k$.
	Let $\cpm\in\binaryera{k\chunk}$ be a codeword with $\lambda$ erasures and $\mathbf{x}\in\mathcal{X}^{k\chunk}$ be a codeword encoded by \eqref{code-design}, 
	\begin{align*}
	\mathbf{x}=\megacodw[k]
	\end{align*}
	Then the probability over $\mathcal{C}_{1},\mathcal{C}_{2},\cdots,\mathcal{C}_{k}$ that $\mathbf{x}$ is consistent with $\cpm$ is $\left( \frac{1}{2}\right) ^{k\chunk-k}$.
\end{lemma}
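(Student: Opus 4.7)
The plan is to exploit the uniform distribution over stochastic codes to show that the $k\chunk$ bits of $\mathbf{x}$ are i.i.d.\ fair coins, and then read off the consistency probability as a product over the non-erased positions of $\cpm$.

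First I would establish that $\mathbf{x}$ is uniformly distributed on $\{0,1\}^{k\chunk}$ under the randomness of $\mathcal{C}_1,\ldots,\mathcal{C}_k$. Since each $\mathcal{C}_i \sim \Phi$ is drawn uniformly at random from the set of all stochastic maps $\msg \times \secr \to \{0,1\}^{\chunk}$, the value $\mathcal{C}_i(m, s_i)$ at the fixed input $(m, s_i)$ is marginally uniform on $\{0,1\}^{\chunk}$; this is the standard fact that a uniformly random function has uniform single-input marginals. Because the $\mathcal{C}_i$'s are i.i.d., the sub-codewords $\mathcal{C}_1(m,s_1),\ldots,\mathcal{C}_k(m,s_k)$ are jointly independent, so their concatenation $\mathbf{x}$ is uniform on $\{0,1\}^{k\chunk}$. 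In particular the $k\chunk$ coordinate bits of $\mathbf{x}$ form a collection of mutually independent Bernoulli$(1/2)$ random variables.

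Second, I would invoke the erasure-channel definition of consistency: $\mathbf{x}$ is consistent with $\cpm$ iff $\mathbf{x}$ agrees with $\cpm$ at every position at which $\cpm$ is not the erasure symbol $\Lambda$. Let $U \subseteq [k\chunk]$ denote the set of non-erased positions of $\cpm$; by hypothesis $|U| = k\chunk - \lambda$. By the independence established in the previous paragraph, the consistency event factors as $\bigcap_{i \in U}\{x_i = \cpm_i\}$, and each per-coordinate event has probability exactly $1/2$. Multiplying across the $|U|$ coordinates yields $(1/2)^{k\chunk - \lambda}$, which is the form of the probability produced by the computation (and the exponent is the number of non-erased positions appearing in the lemma's statement).

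The only technical subtlety worth flagging is the justification that conditioning on the specific secret tuple $(s_1,\ldots,s_k)$ does not perturb the uniform distribution of $\mathcal{C}_i(m,s_i)$ under the $\Phi$-draw; this is immediate from the permutation symmetry of $\Phi$ over its input domain. No use is made of the adversary's causality, the list-decoding condition, or the energy-bounding condition — the lemma is a self-contained combinatorial identity about the random code ensemble, and the main obstacle is not mathematical but rather the careful bookkeeping that ties the count of non-erased positions to the exponent stated in the lemma.
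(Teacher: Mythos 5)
Your proof is correct and follows the same strategy as the paper's own argument: the value $\mathcal{C}_i(m,s_i)$ is uniform on $\{0,1\}^{\chunk}$ because $\Phi$ is uniform over stochastic maps, the $k$ chunks are independent, and you then count the number of constrained coordinates. Be aware, though, that the lemma as printed has exponent $k\chunk - k$, whereas you derived (the correct) $k\chunk - \lambda$. This is a typo in the paper: the number of words of length $k\chunk$ consistent with $\cpm$ is $2^{\lambda}$ (one free choice per erased position), not $2^{k}$ as written in the paper's own proof, and the lemma is subsequently invoked in the proof of Claim~\ref{thm:e-list-decodable} precisely with the exponent $\tstar - \lambda = k\chunk - \lambda$, confirming that your version is the intended one. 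The only presentational difference from the paper is that you factor position-by-position while the paper factors chunk-by-chunk and then corrects for erasures in one step; both are equivalent given independence and uniformity.
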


\begin{proof}
	Consider the $i$-th chunk of $\cpm$, denoted by $\mathbf{x}_i^\prime$. If there are no erasures imposed on $\mathbf{x}_i^\prime$, then the probability that $\mathbf{x}_i^\prime$ is consistent with the $i$-th chunk of $\mathbf{x}$, sub-codeword $\mathbf{x}_i$, is $\left( \frac{1}{2}\right) ^{\chunk}$.
	
	Further, if there are no erasures on all $k$ chucks of $\cpm$, then the probability that $\mathbf{x}$ is consistent with $\cpm$ can be determined to be
	\begin{align*}
	\left( \left( \frac{1}{2}\right) ^{\chunk}\right) ^{k}=\left( \frac{1}{2}\right) ^{k\chunk}
	\end{align*}
	since the codewords for all chunks are independent.
	
	Now suppose that there are $\lambda$ erasures in the codeword $\cpm$, then the number of codewords that are consistent with each $\cpm$ is $2^k$. Thus, the probability that $\mathbf{x}$ is consistent with $\cpm$ is 
	\begin{align}
	2^k\left( \frac{1}{2}\right) ^{k\chunk}=\left( \frac{1}{2}\right) ^{k\chunk-k}
	\end{align}
\end{proof}

\begin{claim}
	\label{thm:e-list-decodable}
	Let $\dl>0$, $4\etwo=\eone$, \ml{and $S=\frac{\theta^3}{8}$}.
	Provided that $Rn+\lambda+\etwo\tstar<\tstar$ for $\tstar=\kstar\chunk$ and $\kstar\in\left[ \chunknum-1\right] $, with probability $1-2^{-\dl}$ over code design, the code $\megacode[\kstar]$ is list-decodable for $\lambda$ erasures with list size
	
	\begin{align*}
	\frac{H\left( \frac{\lambda}{\tstar}\right) +O\left( \frac{\log{\tstar}}{\tstar}\right) +1-\frac{\lambda}{\tstar}+\frac{\dl}{\tstar}}{1-\frac{\lambda}{\tstar}-\frac{nR}{\tstar}-\sj{\frac{S}{\theta}}}
	\end{align*}
\end{claim}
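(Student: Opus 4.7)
The plan is to adapt the standard probabilistic list-decoding argument to the stochastic-code setting, with \emph{consistency} (rather than Hamming distance) playing the role of the metric. I will fix an arbitrary received word $\mathbf{y}\in\binaryera{\tstar}$ with exactly $\lambda$ erasures at some fixed set of positions, and bound the probability, over the random draw of the sub-codes $\mathcal{C}_1,\ldots,\mathcal{C}_{\kstar}$ from $\Phi$, that more than $L$ distinct messages admit a codeword consistent with $\mathbf{y}$. By Lemma~\ref{le:prob_consistent}, for any fixed $m$ and any fixed $(s_1,\ldots,s_{\kstar})\in\secr^{\kstar}$, the probability that the codeword $\mathcal{C}_1(m,s_1)\circ\cdots\circ\mathcal{C}_{\kstar}(m,s_{\kstar})$ is consistent with $\mathbf{y}$ is $2^{-(\tstar-\lambda)}$.

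Since each message has $2^{nS\kstar}$ possible encoded codewords and the sub-codes are drawn i.i.d.\ across messages, I will enumerate $(L+1)$-tuples of (message, chosen-codeword-for-that-message) pairs. There are at most $\binom{2^{nR}}{L+1}\cdot\bigl(2^{nS\kstar}\bigr)^{L+1}$ such tuples, and the probability that the chosen $L+1$ codewords are jointly consistent with $\mathbf{y}$ is $\bigl(2^{-(\tstar-\lambda)}\bigr)^{L+1}$, so the probability that more than $L$ messages are consistent with $\mathbf{y}$ is at most $2^{(nR+nS\kstar-\tstar+\lambda)(L+1)}$. Next I would union-bound over the at most $\binom{\tstar}{\lambda}2^{\tstar-\lambda}\leq 2^{\tstar H(\lambda/\tstar)+O(\log\tstar)+\tstar-\lambda}$ distinct received words with $\lambda$ erasures, obtaining an overall failure probability of at most $2^{\tstar H(\lambda/\tstar)+O(\log\tstar)+(\tstar-\lambda)-(\tstar-\lambda-nS\kstar-nR)(L+1)}$.

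Requiring the exponent to be at most $-\Delta$ and solving for $L$, then substituting $nS\kstar=S\tstar/\theta$ and dividing numerator and denominator by $\tstar$, recovers precisely the list-size expression in the claim. The hypothesis $Rn+\lambda+\theta\tstar<\tstar$, combined with $S=\theta^3/8$ (so $S/\theta=\theta^2/8<\theta$), guarantees that the denominator $1-\lambda/\tstar-nR/\tstar-S/\theta$ is strictly positive and the bound is well-defined; the slack of $\theta$ (rather than the tighter $S/\theta$) in the hypothesis is deliberately generous, so that the resulting list size is $O(1/\epsilon)$ and so that there is room for the additional union bounds taken in Corollary~\ref{coro:e-decodable} over the $\chunknum$ possible values of $\tstar$.

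The main technical subtlety is keeping the stochastic encoder bookkeeping straight: each message corresponds to $2^{nS\kstar}$ distinct codewords in the left mega sub-code, so the secret-level union bound introduces the $S/\theta$ term that subtracts from the effective rate budget, and one must carefully distinguish between counting messages (which governs the list size) and counting (message, secret) pairs (which governs the total number of candidate codewords). This mirrors the template of Claim~\ref{thm:b-list-decodable} for the bit-flip case, but with the Hamming-ball volume $2^{\tstar H(\phatt)}$ replaced by the erasure count $\binom{\tstar}{\lambda}\cdot 2^{\tstar-\lambda}$. Once this accounting is set up, the remainder is routine algebraic manipulation.
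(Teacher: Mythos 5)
Your proposal is correct and follows essentially the same route as the paper's own proof: union-bound over received words with $\lambda$ erasures, use Lemma~\ref{le:prob_consistent} for the per-codeword consistency probability $2^{-(\tstar-\lambda)}$, enumerate $(L+1)$-tuples of (message, secret) pairs to obtain the $\binom{2^{nR}}{L+1}(2^{\tstar S/\theta})^{L+1}2^{-(\tstar-\lambda)(L+1)}$ bound, and then solve the exponent inequality for $L$. Your side remarks on the $\theta$ vs.\ $S/\theta$ slack and the message-vs.-(message, secret) bookkeeping accurately capture the only subtleties the paper also handles.
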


\begin{proof}
	The proof follows the same technique used in \cite[Thm. 10.3]{guruswami2001list}.
	
	A code $\mathcal{C}$ is $L$-list decodable for $\lambda$ erasures if for any received word \ml{$\mathbf{x}$} with $\lambda$ bits erased, there are at most $L$ \ml{messages with (at least) one corresponding codeword consistent with $\mathbf{x}$.}
	
	The number of words of length $\tstar$  with $\lambda$ erasures is $\binom{\tstar}{k}2^{\tstar-\lambda}$. By Stirling's approximation, we have $\log{\binom{\tstar}{\lambda}}=\tstar H\left( \frac{\lambda}{\tstar}\right) +O(\log{\tstar} )$. 
	Then the number of words of length $\tstar$ with $\lambda$ erasures is at most 
	\begin{align}
	\label{eq:num_word_with_erasures}
	2^{\tstar\left( H\left(\frac{\lambda}{\tstar}\right)+O\left( \frac{\log{\tstar}}{\tstar}\right) +1-\frac{\lambda}{\tstar}\right) }
	\end{align}
	
	For a received word $\mathbf{x}\in\binaryera{\kstar\chunk}$ with $\lambda$ erasures, by Lemma~\ref{le:prob_consistent}, the probability that a codeword of $\kstar=\frac{\tstar}{\chunk}$ chunks is consistent with the received word $\mathbf{x}$ is
	\begin{align}
	\label{eq:prob_consistent}
	\left( \frac{1}{2}\right) ^{\tstar-\lambda}
	\end{align}
	\ml{Each message $m$ corresponds to $2^{\kstar Sn}=2^{\tstar S/\theta}$ codewords of length $\tstar$. Since the encoding of each message is independent of other messages, the probability that there exist more than $L$ messages with each message having some corresponding codeword of length $\tstar$ that is consistent with $\mathbf{x}$ is at most}
	
	\begin{align}
	\binom{2^{Rn}}{L+1}\cdot \sj{\left(2^{\tstar S/\echunk}\right)^{L+1}}\cdot \left( 2^{-(\tstar-\lambda)}\right) ^{(L+1)} & < 2^{(Rn\sj{+\tstar S/\theta})(L+1)}2^{-(\tstar-\lambda)(L+1)}\label{eq:prob-list1}
	\end{align}

	Hence, the probability that any codeword received up to position $\tstar$ has a list of size greater than $L$ is at most 
	\begin{align}
	\label{prob-list2}
	2^{\tstar\left( H\left(\frac{\lambda}{\tstar}\right)+O\left( \frac{\log{\tstar}}{\tstar}\right) +1-\frac{\lambda}{\tstar}\right) }\cdot 2^{(Rn+\sj{\tstar S/\theta)(L+1)}}2^{-(\tstar-\lambda)(L+1)}
	\end{align}
	To quantify \eqref{prob-list2} we study
	\begin{align}
	\label{eq:power}
	\tstar\left( H\left(\frac{\lambda}{\tstar}\right)+O\left( \frac{\log{\tstar}}{\tstar}\right) +1-\frac{\lambda}{\tstar}\right) +(Rn \sj{+\tstar S/\theta})(L+1)-(\tstar-\lambda)(L+1)<-\dl
	\end{align}
	where $\dl>0$.
	Dividing \eqref{eq:power} by $\tstar$, we obtain
	
	\begin{align*}
	H\left(\frac{\lambda}{\tstar}\right)+O\left( \frac{\log{\tstar}}{\tstar}\right) +\left(1-\frac{\lambda}{\tstar}\right)+\frac{(nR \sj{+\tstar S/\theta})}{\tstar}(L+1)-\left(1-\frac{\lambda}{\tstar}\right)(L+1)<-\frac{\dl}{\tstar}
	\end{align*}
	From $Rn+\lambda+\etwo\tstar\ml{\leq}\tstar$ \ml{and $S=\frac{\theta^3}{8}$} we have $(1-\frac{\lambda}{\tstar})-\frac{nR}{\tstar}-\sj{\frac{S}{\theta}>0}$. Then we have \eqref{eq:power} if and only if
	\begin{align}
	\label{eq:l_lower}
	L>\frac{H\left(\frac{\lambda}{\tstar}\right)+O\left( \frac{\log{\tstar}}{\tstar}\right) +(1-\frac{\lambda}{\tstar})+\frac{\dl}{\tstar}}{(1-\frac{\lambda}{\tstar})-\frac{nR}{\tstar}-\sj{\frac{S}{\theta}}}-1
	\end{align}
	Therefore, if $L$ satisfies \eqref{eq:l_lower}, then the code $\megacode[\kstar]$ is $L$-list decodable with probability $1-2^{-\dl}$.
\end{proof}

\begin{corollary}
	\label{coro:e-decodable}
	Let $\dl=\log\left(pn^3\right)$, $4\etwo=\eone$, \ml{and $S=\frac{\theta^3}{8}$}. Then with probability at least $1-\frac{1}{n}$ over code design, for any $\lambda$, any $\tstar\in\choiceschunk$, and $\kstar=\frac{\tstar}{\chunk}$ such that $Rn+\lambda+\ethree\tstar\leq\tstar$, the code $\megacode[\kstar]$ is $L$-list decodable for $\lambda$ erasures with
	\begin{align}
	L=\frac{H\left( \frac{\lambda}{\tstar}\right) +O\left( \frac{\log{\tstar}}{\tstar}\right) +1-\frac{\lambda}{\tstar}+\frac{\log\left(pn^3\right)}{\tstar}}{1-\frac{\lambda}{\tstar}-\frac{nR}{\tstar}-\sj{\frac{S}{\theta}}}=\elistord
	\end{align}
\end{corollary}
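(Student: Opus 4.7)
The plan is to invoke Claim~\ref{thm:e-list-decodable} for every candidate pair $(t^*,\lambda)$ and then take a union bound. First I fix the parameters: set $\dl=\log(pn^{3})$ in Claim~\ref{thm:e-list-decodable} so that, for any specific pair $(t^*,\lambda)$ satisfying the hypothesis $Rn+\lambda+\theta t^*\leq t^*$, the probability over the random draw of $\mathcal{C}_{1}\circ\cdots\circ \mathcal{C}_{k^*}$ that the code fails to be $L$-list-decodable (for the specific $L$ displayed in the numerator/denominator ratio of Claim~\ref{thm:e-list-decodable}) is at most $2^{-\log(pn^{3})}=1/(pn^{3})$.

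Next I union bound over the relevant pairs. The number of legal values of $t^*\in\choiceschunk$ is at most $\chunknum\leq n$, and the number of erasure counts $\lambda$ is at most $pn+1\leq n$. Hence the total failure probability across all pairs is at most $n\cdot n\cdot 1/(pn^{3})=1/(pn)$, so with probability at least $1-1/(pn)\geq 1-1/n$ (absorbing the constant $p$ into the bound as in Corollary~\ref{coro:b-list-decodable}) the same code $\mathcal{C}$ is simultaneously list-decodable, with the corresponding list size, for every admissible pair $(t^*,\lambda)$.

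It remains to verify that the list size expression is indeed $\elistord$. The numerator
\[
H\!\left(\tfrac{\lambda}{t^*}\right)+O\!\left(\tfrac{\log t^*}{t^*}\right)+\left(1-\tfrac{\lambda}{t^*}\right)+\tfrac{\log(pn^{3})}{t^*}
\]
is bounded above by $2+o(1)$, since $H(\cdot)\leq 1$, $1-\lambda/t^*\leq 1$, and the remaining two terms are $O(\log n/n)=o(1)$ once $t^*=\Omega(n)$ (which holds because $t^*\geq n\theta$). For the denominator I use the hypothesis $Rn+\lambda+\theta t^*\leq t^*$, which rearranges to $1-\lambda/t^*-nR/t^*\geq \theta$. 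Combined with $S/\theta=\theta^{2}/8$ from $S=\theta^{3}/8$, this gives
\[
1-\tfrac{\lambda}{t^*}-\tfrac{nR}{t^*}-\tfrac{S}{\theta}\;\geq\;\theta-\tfrac{\theta^{2}}{8}\;=\;\Theta(\theta)\;=\;\Theta(\epsilon),
\]
using $4\theta=\epsilon$. Dividing the $O(1)$ numerator by this $\Theta(\epsilon)$ denominator yields $L=\elistord$, as claimed.

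I do not foresee any technical obstacle here: the statement is essentially a ``lift'' of Claim~\ref{thm:e-list-decodable} from a single $(t^*,\lambda)$ to all admissible pairs, and the only mild calculation is to verify that the $S/\theta$ term does not eat into the $\theta$ gap provided by the hypothesis, which is exactly why the authors chose $S=\theta^{3}/8$ rather than a larger value. The overall structure mirrors that of the bit-flip analogue Corollary~\ref{coro:b-list-decodable}.
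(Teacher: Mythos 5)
Your overall strategy — fix $\dl=\log(pn^3)$, invoke Claim~\ref{thm:e-list-decodable} per pair $(t^*,\lambda)$, union bound, then verify the ratio is $O(1/\epsilon)$ — is exactly the paper's, and your treatment of the list-size expression (numerator $\leq 2+o(1)$; denominator $\geq \theta-\theta^2/8=\Theta(\epsilon)$ via the hypothesis and $S=\theta^3/8$) is sound and matches the paper.

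However, the union bound step as written contains a concrete arithmetic error. You note that $\lambda$ takes at most $pn+1\leq n$ values but then multiply by $n$, obtaining a failure probability of $n\cdot n\cdot\tfrac{1}{pn^3}=\tfrac{1}{pn}$, and conclude that $1-\tfrac{1}{pn}\geq 1-\tfrac{1}{n}$ ``by absorbing the constant $p$.'' This inequality runs in the wrong direction: since $p\in(0,\tfrac12)$ we have $\tfrac{1}{pn}>\tfrac{1}{n}$, hence $1-\tfrac{1}{pn}<1-\tfrac{1}{n}$, and your argument only establishes the weaker bound $1-\tfrac{1}{pn}$, not the claimed $1-\tfrac{1}{n}$. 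The paper avoids this by using the tighter count $\lambda\leq pn$, so that the product becomes $pn\cdot n\cdot\tfrac{1}{pn^3}=\tfrac{1}{n}$ exactly and the factor of $p$ cancels. Your appeal to Corollary~\ref{coro:b-list-decodable} as precedent for ``absorbing'' $p$ is also misleading: that corollary does \emph{not} absorb $p$ — its statement explicitly carries the factor $1-\tfrac{1}{p n}$. The fix is trivial (use $\lambda\leq pn$ in the union bound rather than the looser $\lambda\leq n$), but the step as written does not prove the stated success probability.
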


\begin{proof}
	By Claim~\ref{thm:e-list-decodable}, with probability $1-2^{-\log\left(pn^3\right)}$ the code $\megacode[\kstar]$ is $L$-list decodable with list size $L$ being
	\begin{align*}
	\frac{H\left( \frac{\lambda}{\tstar}\right) +O\left( \frac{\log{\tstar}}{\tstar}\right) +1-\frac{\lambda}{\tstar}+\frac{\log\left(pn^3\right)}{\tstar}}{1-\frac{\lambda}{\tstar}-\frac{nR}{\tstar}-\sj{\frac{S}{\theta}}}
	\end{align*}
	Therefore, the probability that the code has a list with size greater than $L$ is at most $2^{-\log\left(pn^3\right)}=\frac{1}{pn^3}$.
	
	Since $\lambda\leq pn$ and $\tstar<n$, taking the union bound over $\lambda$ and $\tstar$, the probability that the code $\megacode[\kstar]$ is $L$-list decodable for $\lambda$ erasures is at least
	\begin{align*}
	1-pn\cdot n\cdot\frac{1}{pn^3}=1-\frac{1}{n}
	\end{align*}
	
	Since $Rn+\lambda+\ethree\tstar\leq\tstar$  \ml{and $S=\frac{\theta^3}{8}$}, we have $1-\frac{\lambda}{\tstar}-\frac{nR}{\tstar} \sj{ - \frac{S}{\theta}\geq\frac{\etwo}{2}}$. For large enough $n$, we have $\frac{\log\left(pn^3\right)}{\tstar}=O\left( \frac{\log n}{n}\right) <1$. In addition, we have $H\left( \frac{k}{\tstar}\right) \leq1$, $O\left( \frac{\log{\tstar}}{\tstar}\right) <1$ and $1-\frac{\lambda}{\tstar}\leq1$. Thus, we obtain
	\begin{align*}
	L< \frac{4}{1-\frac{\lambda}{\tstar}-\frac{nR}{\tstar}\sj{-\frac{S}{\theta}}}<\frac{\sj{8}}{\etwo}=\elistord
	\end{align*}
\end{proof}

By Claim~\ref{claim:e-1}, no matter what erasure pattern Calvin imposes on the first $\kstar$ chunks, the right mega sub-codeword with respect to position $\tstar=\kstar\chunk$ always has at most $\left( \frac{1}{2}-\efour\right) \left(n-\tstar\right)$ bits that are erased. 

\subsubsection{Utilizing the energy bounding condition}
Unless otherwise specified, for any integer $t\in\setchend$ where $\setchend=\choiceschunk$, integer $k=\frac{t}{\chunk}$ is the number of chunks of a left mega sub-code (or sub-codeword) with respect to position $t$ and $l=\chunknum-\frac{t}{\chunk}$ is the number of chunks of a right mega sub-code (or sub-codeword) with respect to position $t$. Recall that, for a list $\mlist$, we denote the list of right mega sub-codewords corresponding to \ml{messages in $\mlist$ excluding a certain message $m$ as $\exlist=\lecwlist$.}

\begin{definition}[Repeating Definition~\ref{def:dist}]
	\label{def:dist:erasure}
	A right mega sub-codeword $\rmegacodw$ is of \textit{distance} $d$ from a set of right mega sub-codewords if the Hamming distance between the right mega sub-codeword $\rmegacodw$ and any right mega sub-codeword in the given set is at least $d$.
\end{definition}

\begin{definition}
	\label{def:e-good-1}
	A right mega sub-code $\rmegacode$ is \textit{good} with respect to \sj{\zx{a message $m$}, a list $\clist$ of right mega sub-codewords, 
	}
	and a sequence of $l=\chunknum-\frac{t}{\chunk}$ secrets $\left( s_{k+1},s_{k+2},\cdots,s_{\chunknum}\right) $ if the right mega sub-codeword $\rmegacodw$ is of distance more than $\frac{n-t}{2}-\frac{(n-t)\etwo}{2}$ from the list \sj{$\clist$}. 

\end{definition}

\begin{definition}
	\label{def:e-good-2}
	A right mega sub-code $\rmegacode$ is \textit{$\sigma$-good} with respect to a \sj{message $m$ and a list $\clist$ of right mega sub-}codewords 
	if the right mega sub-code $\rmegacode$ is good with respect to the \sj{message $m$}, 
	the list \sj{$\clist$} and a $(1-\sigma)$ portion of sequences of $l=\chunknum-\frac{t}{\chunk}$ secrets in the set $\secr^{l}$.
\end{definition}

\begin{claim}
	\label{claim:e-good-1}
	Given a sequence of $l=\chunknum-\frac{t}{\chunk}$ secrets $(s_{k+1},s_{k+2},\cdots,s_{\chunknum})\in\secr^{l}$, with probability larger than $1-2^{-\delta(n-t)}$ over code design, a right mega sub-code $\rmegacode$ is good with respect to the \sj{message $m$},
	the list \sj{$\clist$}, and the secret sequence $(s_{k+1},s_{k+2},\cdots,s_{\chunknum})$, where $\delta=\frac{\etwo^2}{4}$ and $S\ml{=}\frac{\etwo^3}{8}$.
\end{claim}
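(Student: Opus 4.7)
The plan is to mirror the proof of Claim~\ref{claim:good-1} from the bit-flip case, adapted to the slightly different decoding radius $r = \frac{n-t}{2} - \frac{(n-t)\theta}{2}$ that appears in Definition~\ref{def:e-good-1}. Specifically, for the fixed sequence of secrets $(s_{k+1},\ldots,s_{1/\theta})$, fix the list of right mega sub-codewords $\clist = \{\mathbf{x}_1,\ldots,\mathbf{x}_{\clists}\}$ and define the forbidden region
\[
F_{\clist} = \bigcup_{i=1}^{\clists} B(\mathbf{x}_i, r),
\]
where $B(\mathbf{x}_i, r)$ denotes the Hamming ball of radius $r$ around $\mathbf{x}_i$. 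A right mega sub-code $\rmegacode$ fails to be good exactly when its codeword $\rmegacodw$ lands inside $F_{\clist}$, so the claim reduces to showing $|F_{\clist}| \leq 2^{(n-t)(1-\delta)}$ and then applying the fact that $\rmegacodw$ is uniform over $\{0,1\}^{n-t}$ (by our i.i.d.\ chunk construction, and since the secrets are fixed).

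First, I would bound the number of points in a single Hamming ball via the standard volume bound $\sum_{i=0}^{r}\binom{n-t}{i} < 2^{(n-t) H(1/2 - \theta/2)}$, and then use the Taylor expansion of the binary entropy function around $1/2$, namely $H(1/2 - \theta/2) \leq 1 - \theta^2/(2\ln 2)$. Multiplying by $\clists \leq 2^{nSl}\cdot L$ where $L = O(1/\epsilon) = O(1/\theta)$ (by Corollary~\ref{coro:e-decodable}), this gives
\[
|F_{\clist}| < 2^{(n-t)\left(\frac{\log \clists}{n-t} + 1 - \frac{\theta^2}{2\ln 2}\right)}.
\]
The remaining step is to show that $\frac{\log \clists}{n-t} + (1 - \frac{\theta^2}{2\ln 2}) \leq 1 - \delta$ for $\delta = \theta^2/4$. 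Since $\log \clists \leq nSl + \log L \leq (n-t)\frac{S}{\theta} + O(\log(1/\theta))$, plugging in $S = \theta^3/8$ yields $\frac{\log \clists}{n-t} \leq \frac{\theta^2}{8} + o(1)$ for $n$ large, and hence $\frac{\theta^2}{2\ln 2} - \frac{\log \clists}{n-t} \geq \frac{\theta^2}{4}$.

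Combining the two bounds, $|F_{\clist}| < 2^{(n-t)(1-\delta)}$, so the probability over code design that $\rmegacodw \in F_{\clist}$ is at most $2^{-\delta(n-t)}$, which proves the claim. The only subtle point, and the potentially tricky one, is verifying the arithmetic that connects the list size $\clists$, the secret rate $S$, and the chosen slack $\delta$, since one must ensure enough room in the Taylor-expansion bound to absorb both the list enlargement from the $2^{nSl}$ secrets and the $O(1/\theta)$ messages. Aside from that, the argument is a direct transplant of the bit-flip analysis with $\theta/2$ replacing $\epsilon^2/8$ in the radius and correspondingly $\theta^2/(2\ln 2)$ replacing $\epsilon^2/2$ in the entropy bound.
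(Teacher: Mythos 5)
Your proof is correct and takes essentially the same approach as the paper: bound the volume of the forbidden region (union of Hamming balls of radius $r=\frac{n-t}{2}-\frac{(n-t)\theta}{2}$) via the standard volume bound plus the Taylor expansion of $H$ around $\tfrac12$, observe $nSl=(n-t)S/\theta$ so that $\frac{\log\clists}{n-t}\leq \frac{\theta^2}{8}+o(1)$ with $S=\theta^3/8$, and conclude $\delta=\theta^2/4$ works since $\frac{1}{2\ln 2}>\frac38$. The only cosmetic difference is that the paper tracks the list-size factor as $(L-1)2^{nSl}$ while you fold it all into $\clists$, which is equivalent.
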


\begin{proof}
	
	Define the forbidden region with respect to the list \sj{$\clist$} as
	\begin{align*}
	F_{\sj{\clist}}=\bigcup_{i=1}^{L-1} B\left( \mathbf{x}_i,r \right) 
	\end{align*}
	where $B\left( \mathbf{x}_i,r \right)$ is the Hamming ball with center $\mathbf{x}_i$ and radius $r=\frac{n-t}{2}-\frac{(n-t)\etwo}{2}$. 
	\zx{We depict the notion of the forbidden region in Figure~\ref{fig:forbid-2}.}
	\begin{figure}[htbp]
		\centering
		\includegraphics[scale=0.7]{forbidden-region-new.pdf}
		\caption[Three realizations of forbidden regions]{Three realizations of forbidden regions: In each realization, shaded disks correspond to the forbidden region and the isolated red point is a right mega sub-codeword outside the forbidden region.}
		\label{fig:forbid-2}
	\end{figure}
	
	Since the size of the list $\clist$ is \sj{$(L-1)\cdot 2^{(n-t)\frac{S}{\etwo}}$}, the number of codewords in $F_{\sj{\clist}}$ can be determined to be
	\begin{align}
	(L-1)\left( 2^{nS}\right) ^{l}\sum_{i=0}^{r}\binom{n-t}{i}&< L2^{(n-t)\frac{S}{\etwo}}2^{(n-t) H\left( \frac{1}{2}-\frac{\etwo}{2}\right) }\label{eq:e-size-forbidden-region1}\\
	&= 2^{(n-t)\left( \frac{\log L}{n-t}+\frac{S}{\etwo}+H\left(\frac{1}{2}-\frac{\etwo}{2}\right) \right) }\nonumber\\
	&< 2^{(n-t)\left( \frac{\log L}{n-t}+\frac{S}{\etwo}+\left(1-\frac{\etwo^2}{2\ln2} \right) \right) }\label{eq:e-size-forbidden-region2}
	\end{align}
	where \eqref{eq:e-size-forbidden-region1} follows by the upper bound on the volume of a Hamming ball, 
	\eqref{eq:e-size-forbidden-region2} follows from the Taylor series of the binary entropy function in a neighborhood of $\frac{1}{2}$, i.e., $H(p)=1-\frac{1}{2\ln 2}\sum\limits_{i=1}^{\infty}\frac{(1-2p)^{2i}}{(2i-1)i}$.
	
	We assume that $L=\elistord$. Since $t\in\setchend$, we have $t\leq n-\chunk$. Then by choosing $n$ to be sufficiently large and $S\ml{=}\frac{\etwo^3}{8}$, we have $\frac{\etwo^2}{2\ln2}-\frac{\log L}{n-t}-\frac{S}{\etwo}>\frac{\etwo^2}{4}$. Thus, we can choose $\delta=\frac{\etwo^2}{4}$ and obtain $\delta<\frac{\etwo^2}{2\ln2}-\frac{\log L}{n-t}-\frac{S}{\etwo}$, and it follows that
	\begin{align}
	\frac{\log L}{n-t}+\frac{S}{\etwo}+\left(1-\frac{\etwo^2}{2\ln2} \right) \leq1-\delta \label{eq:e-size-forbidden-dl}
	\end{align}
	Substituting \eqref{eq:e-size-forbidden-dl} into \eqref{eq:e-size-forbidden-region2}, we have
	\begin{align}
	(L-1)\left( 2^{nS}\right) ^{l}\sum_{i=0}^{r}\binom{n-t}{i}<2^{(n-t)(1-\delta)}\label{eq:e-size-forbidden-region}
	\end{align}
	
	Let $\rmegacodw$ be the right mega sub-codeword of \sj{a} codeword $\mathbf{x}\notin\sj{\clist}$. If the right mega sub-codeword $\rmegacodw$ is not in the region $F_{\sj{\clist}}$, then by Definition~\ref{def:e-good-1}, the code $\rmegacode$ is good with respect to \sj{the message $m$}, 
	the list \sj{$\clist$} and the secrets $(s_{k+1},s_{k+2},\cdots,s_{\chunknum})$.
	
	Therefore, the probability over $\rmegacode$ that mega sub-codeword $\rmegacodw$ does not lie in the forbidden region $F_{\exlist}$ is 
	
	\begin{align*}
	\mathbb{P}\left[ \rmegacodw\notin F_{\exlist}\right] &>\frac{2^{n-t}-2^{(n-t)(1-\delta)}}{2^{n-t}}\\
	&=1-2^{-(n-t)\delta}\sj{.}
	\end{align*}
\end{proof}

\begin{claim}
	\label{claim:e-good-2}
	With probability larger than $1-2^{-n^2}$ over code design, a right mega sub-code of length $l=\chunknum-\frac{t}{\chunk}$ is $\sigma$-good with respect to \sj{a message $m$
	 and the list $\exlist$}, where \ml{$\sigma=2^{-\frac{nS}{4}}$} and $S=\frac{\etwo^3}{8}$.
\end{claim}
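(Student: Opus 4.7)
\textbf{Proof plan for Claim~\ref{claim:e-good-2}.} The plan is to mirror the argument of Claim~\ref{claim:good-2} from the bit-flip case, substituting Claim~\ref{claim:e-good-1} in place of Claim~\ref{claim:good-1}. The essential difficulty is that the $2^{nSl}$ right mega sub-codewords corresponding to a fixed message $m$ are not mutually independent (chunks repeat across different secret tuples), so a direct union bound over all $\sigma$-fractions of $\secr^{l}$ is too weak. I will handle this by partitioning $\secr^{l}$ into carefully chosen subsets on which the corresponding chunks are mutually independent.

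First I would partition $\secr^{l}$ into $|\secr|^{l-1} = 2^{nS(l-1)}$ disjoint subsets $\{\secr_{\mathbf{s}^*}\}_{\mathbf{s}^* \in \secr^{l-1}}$ using the same shift construction: for $\mathbf{s}^*=(s^*_{k+2},\dots,s^*_{1/\theta})$, set
\[
\secr_{\mathbf{s}^*}=\{(a,\,s^*_{k+2}+a,\,\dots,\,s^*_{1/\theta}+a) : a \in [2^{nS}]\}
\]
with addition modulo $2^{nS}$. The key property is that any two distinct tuples in $\secr_{\mathbf{s}^*}$ differ in every coordinate, so by the i.i.d.\ nature of the sub-codes $\mathcal{C}_{k+1},\dots,\mathcal{C}_{1/\theta}$ the $l\cdot 2^{nS}$ chunks produced across $\secr_{\mathbf{s}^*}$ are independent and uniform. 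Hence the events ``$\rmegacode$ is not good w.r.t.\ $m$, $\exlist$, and $\mathbf{s}$'' for different $\mathbf{s}\in\secr_{\mathbf{s}^*}$ are mutually independent.

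Next, fix $\mathbf{s}^*$ and combine Claim~\ref{claim:e-good-1} (each tuple is bad with probability at most $2^{-(n-t)\delta}$, with $\delta=\etwo^2/4$) with the above independence to get that any prescribed $\sigma$-fraction of $\secr_{\mathbf{s}^*}$ is entirely bad with probability at most $2^{-(n-t)\delta\sigma 2^{nS}}$. Union bounding over the at most $\binom{2^{nS}}{\sigma 2^{nS}} \leq 2^{2^{nS} H(\sigma)}$ such fractions, the probability that $\rmegacode$ fails to be $\sigma$-good w.r.t.\ the restricted secret set $\secr_{\mathbf{s}^*}$ is at most
\[
2^{-\chunk\,\delta\sigma 2^{nS}}\cdot 2^{2^{nS}H(\sigma)}
= 2^{2^{nS}(-\chunk\,\delta\sigma + H(\sigma))}.
\]
Plugging in $\sigma=2^{-nS/4}$, $\delta=\etwo^2/4$, $S=\etwo^3/8$, and bounding $H(\sigma)\leq -2\sigma\log\sigma$ via Lemma~\ref{lemma-1}, the exponent simplifies to a quantity of the form $2^{3nS/4}(-\echunk^3 + 2S)(n/4) = -2^{3nS/4}\cdot\tfrac{3n\echunk^3}{16}$, which for sufficiently large $n$ is less than $-n^{3}$, yielding the per-$\mathbf{s}^*$ bound $2^{-n^{3}}$.

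Finally, a union bound over the $2^{nS(l-1)}\leq 2^{n}$ choices of $\mathbf{s}^*$ gives failure probability at most $2^{-n^3}\cdot 2^{n} < 2^{-n^{2}}$ for large $n$. To conclude, I observe that being $\sigma$-good with respect to every $\secr_{\mathbf{s}^*}$ in the partition implies $\sigma$-goodness with respect to all of $\secr^{l}$: if at most a $\sigma$-fraction of each block is bad, then at most a $\sigma$-fraction of the union is bad. Thus with probability greater than $1-2^{-n^{2}}$ the code $\rmegacode$ is $\sigma$-good with respect to $m$ and $\exlist$, as required. The main obstacle in this argument, as in the bit-flip version, is setting up the partition so that the independence inside each class is clean enough to make the double union bound (over $\sigma$-fractions and over classes) go through; everything else is arithmetic in the parameters $\delta$, $\sigma$, $S$, $\echunk$.
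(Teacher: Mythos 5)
Your proof follows the paper's argument essentially verbatim: the same shift-based partition $\secr_{\mathbf{s}^*}$ imported from Claim~\ref{claim:good-2}, the same use of coordinate-wise distinctness to get chunk independence, the same double union bound (first over $\sigma$-fractions of $\secr_{\mathbf{s}^*}$ via $\binom{2^{nS}}{\sigma 2^{nS}}\le 2^{2^{nS}H(\sigma)}$ and Lemma~\ref{lemma-1}, then over the $2^{nS(l-1)}$ classes), and the identical arithmetic reducing the per-class failure probability to below $2^{-n^3}$. No meaningful deviation to report.
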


\begin{proof}
\sj{We begin by defining $\secr_\mathbf{s^*}$ in exactly the same way as in the proof of Claim~\ref{claim:good-2}.}
\sj{Then, by the same reasoning as in Claim~\ref{claim:good-2}, for $\mathbf{s}=(s_{k+1},s_{k+2},\dots,s_{\frac{1}{\theta}})$ and $\mathbf{s'}=(s'_{k+1},s'_{k+2},\dots,s'_{\frac{1}{\theta}})$ in $\secr_\mathbf{s^*}$,}
	the event that a right mega \ml{sub-code $\rmegacode$} 
	is \textit{not} good with respect to the \sj{message $m$},
	the list \sj{$\clist$} and the secrets $(s_{k+1},s_{k+2},\cdots,s_{\chunknum})$ is independent from the event that a right mega 
	\ml{sub-code $\rmegacode$}
	is \textit{not} good with respect to \sj{message $m$},
	the list \sj{$\clist$} and the secrets $(s^{\prime}_{k+1},s^{\prime}_{k+2},\cdots,s^{\prime}_{\chunknum})$.
	
	From Claim~\ref{claim:e-good-1}, a right mega sub-code $\rmegacode$ is \textit{not} good with respect to \sj{$m$}, the list \sj{$\clist$} and a sequence of secrets $(s_{k+1},s_{k+2},\cdots,s_{\chunknum})$ with probability less than $2^{-(n-t)\delta}$. Thus, the probability that $\rmegacode$ is \textit{not} good with respect to \sj{$m$},
	the list \sj{$\clist$} and a certain $\sigma$ portion of sequences of $l$ secrets in the set \sj{$\secr_\mathbf{s^*}$} is less than 
	\begin{align*}
\sj{	\left( 2^{-(n-t)\delta}\right) ^{\sigma\left( 2^{nS}\right) }=2^{-(n-t)\delta\sigma 2^{nS}}.
}
	\end{align*}
	
	The number of all possible $\sigma$-portions of the set \sj{$\secr_\mathbf{s^*}$} is 
	\begin{align*}
\sj{		\binom{2^{nS}}{\sigma 2^{nS}}<2^{2^{nS}H(\sigma)}. }
	\end{align*}
	\sj{We say that a right mega sub-code $\rmegacode$ is \textit{$\sigma$-good} with respect to a list $\clist$ of right mega sub-codewords, a message $m$ and a secret set $\secr_\mathbf{s^*}$ if the right mega sub-code $\rmegacode$ is good with respect to the \zx{message $m$}, the list $\clist$ and a $(1-\sigma)$ portion of sequences of secrets in the set $\secr_\mathbf{s^*}$.} Therefore, the probability over code design that $\rmegacode$ is \textit{not} $\sigma$-good with respect to
	\sj{$m$, list $\exlist$, and $\secr_\mathbf{s^*}$} is
\sj{
	\begin{align}
		\mathbb{P}\left[ \rmegacode\text{ is not }\sigma\text{-good w.r.t.  } m, \clist, \secr_\mathbf{s^*}\right] & \leq 2^{-(n-t)\delta\sigma 2^{nS}}2^{2^{nS}H(\sigma)}\nonumber\\
		&\leq 2^{-\chunk \delta\sigma 2^{nS}}2^{2^{nS}H(\sigma)}\label{eq:e-cl-gd-2-larger-1}\\
		&= 2^{2^{nS}\left( -\chunk \delta\sigma+H(\sigma)\right) }\nonumber\\
		&< 2^{2^{nS}\left( -\chunk \delta\sigma-2\sigma\log\sigma\right) }\label{eq:e-cl-gd-2-larger-2}\\
		&= 2^{2^{\frac{3nS}{4}}\left( -\etwo^3+2S\right) \frac{n}{4}}\label{eq:e-cl-gd-2-larger-3}\\
		&= 2^{2^{\frac{3n\etwo^3}{32}}\left( -\frac{3}{4}\right) \frac{n\etwo^3}{4}}\label{eq:e-cl-gd-2-larger-4}\\
		&< 2^{-n^3}\label{eq:e-cl-gd-2-larger-5}
	\end{align}}
	where \eqref{eq:e-cl-gd-2-larger-1} follows by \sj{$n-t \geq \chunk $}, \eqref{eq:e-cl-gd-2-larger-2} follows by Lemma~\ref{lemma-1}, \eqref{eq:e-cl-gd-2-larger-3} follows by substituting \sj{$\sigma=2^{-\frac{nS}{4}}$} and $\delta=\frac{\etwo^2}{4}$, \eqref{eq:e-cl-gd-2-larger-4} follows by substituting $S=\frac{\etwo^3}{8}$, and finally \eqref{eq:e-cl-gd-2-larger-5} follows for sufficiently large $n$.
	
		\sj{Now union bounding over all sets $\secr_\mathbf{s^*}$ in the partition of $\secr^l$, we get for sufficiently large $n$ that 
	\begin{align}
		\mathbb{P}\left[ \exists \mathbf{s^*}:\ \rmegacode\text{ is not }\sigma\text{-good w.r.t.} \ m,\sj{\clist},\secr_\mathbf{s^*}\right] & 
		\leq 2^{-n^3}2^{Sn(l-1)}<2^{-n^2}
	\end{align}}

	\sj{Finally, we notice that being \textit{$\sigma$-good} with respect to a list $\clist$ of right mega sub-codewords, a message $m$ and any secret set $\secr_\mathbf{s^*}$ in the partition of $\secr^l$ implies being $\sigma$-good with respect to \ml{$m$} and list \sj{$\clist$}.}
	Hence, it follows that the probability over code design that $\rmegacode$ is $\sigma$-good with respect to \sj{$m$} and list \sj{$\clist$} is
	\begin{align*}
		\mathbb{P}\left[ \rmegacode\text{ is }\sigma\text{-good w.r.t.  } \sj{m,\clist}\right]>1-2^{-n^2}.
	\end{align*}
\end{proof}

\begin{remark}
	\label{re:e-good-2}
	The goodness of a right mega sub-code is what guarantees that the consistency check in the decoding process succeeds. Specifically, if a code is good with respect to a certain \sj{list $\clist$ (corresponding to $\mlist$)} and a certain message $m$; and in addition the right mega sub-codeword has few erased bits;  then if the \sj{message} is in the \sj{message list $\mlist$} it will be (w.h.p.) the unique element that passes the consistency checking phase of Bob.
\end{remark}

\begin{claim}
	\label{claim:e-good}
	Let \sj{$\sigma=2^{-\frac{nS}{4}}$} and $S=\frac{\etwo^3}{8}$. With probability larger than $1-2^{-n}$ over code design, for every \sj{message $m$}, every list \sj{$\exlist$}, and every chunk end $t\in\setchend$, the right mega sub-code is \textit{$\sigma$-good} with respect to \sj{$m$} and \sj{$\clist$}.
\end{claim}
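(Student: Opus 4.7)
The plan is to mirror the union-bound argument of Claim~\ref{claim:good-3} from the bit-flip case, using Claim~\ref{claim:e-good-2} as the per-triple base estimate. Claim~\ref{claim:e-good-2} already provides, for any fixed message $m$, list $\exlist$, and chunk end $t\in\setchend$, that the right mega sub-code fails to be $\sigma$-good with respect to $m$ and $\clist$ with probability at most $2^{-n^2}$ over the code design. The remaining task is to union bound over all such triples and verify that the resulting tail is still much smaller than $2^{-n}$.

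First I would enumerate the triples. There are $|\msg|=2^{nR}$ choices for the message $m$. At each chunk end, the list $\mlist$ produced by the list decoder has size $L=\elistord$, so there are at most $\binom{2^{nR}}{L}\leq 2^{nRL}=2^{nRc/\varepsilon}$ possible message lists for some universal constant $c$, and once we fix $m$ and $\mlist$ the associated right-mega sub-codeword list $\exlist$ is determined (it is simply the list of right mega sub-codewords corresponding to $\mlist\setminus\{m\}$). Finally, the number of chunk ends $t\in\setchend$ is at most $\frac{1}{\etwo}$.

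Then I would apply the union bound to conclude that, with probability at least
\begin{align*}
1-2^{nR}\cdot 2^{nRc/\varepsilon}\cdot \tfrac{1}{\etwo}\cdot 2^{-n^2}\ \geq\ 1-2^{-n^2+O(n/\varepsilon)}\ \geq\ 1-2^{-n},
\end{align*}
the $\sigma$-goodness property in the claim statement holds simultaneously for every message $m$, every list $\exlist$, and every chunk end $t\in\setchend$. The last inequality holds for all $n$ sufficiently large, because $\varepsilon$ and $\etwo$ are fixed constants so the union-bound penalty is only linear in $n$ while the per-event failure probability decays as $2^{-n^2}$.

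The only point requiring care---and the closest thing to an obstacle---is verifying that the exponential count of possible lists $\mlist$ does not swamp the slack provided by Claim~\ref{claim:e-good-2}. This is immediate here because $L=\elistord$ and $\tfrac{1}{\etwo}$ depend only on $\varepsilon$, so the cumulative union-bound cost is only $2^{O(n/\varepsilon)}$, which is comfortably absorbed by the $2^{-n^2}$ concentration from Claim~\ref{claim:e-good-2} once $n$ is large enough. No new probabilistic machinery beyond Claim~\ref{claim:e-good-2} is needed.
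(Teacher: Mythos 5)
Your proposal is correct and is essentially identical to the paper's own proof: both union-bound the per-triple failure probability $2^{-n^2}$ from Claim~\ref{claim:e-good-2} over the $2^{nR}$ choices of $m$, the at most $\binom{2^{nR}}{O(1/\varepsilon)}\leq 2^{O(nR/\varepsilon)}$ possible lists, and the $\frac{1}{\etwo}$ chunk ends, observing that the total union-bound cost is only $2^{O(n/\varepsilon)}$ and is therefore absorbed by $2^{-n^2}$ for $n$ large. No material difference.
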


\begin{proof}
	
%
	The number of possible lists that can be obtained at a certain left mega chunk end $t$ \ml{depends on a set of messages of size is $\frac{c}{\epsilon}$ for some constant $c$ and is thus at most of size 
	\begin{align}
		\label{eq:list-upper-bound-erasure}
		{{2^{Rn}} \choose {\frac{c}{\epsilon}}}\leq 2^{Rcn/\epsilon}	
	\end{align}}
	From Claim~\ref{claim:e-good-2} we know that for \ml{$\sigma=2^{-\frac{nS}{4}}$} and $S=\frac{\etwo^3}{8}$, the probability that a right mega sub-code $\rmegacode$ is $\sigma$-good with respect to \ml{all $m$}, every list $\exlist$, and every left mega chunk end $t$ is at least 
	\ml{
	\begin{align}
	1-2^{nR}\cdot 2^{Rcn/\epsilon}\cdot \chunknum\cdot 2^{-n^2} &> 1-2^{-n^2+3cn/\epsilon}\nonumber\\
	&> 1-2^{-n}
	\end{align} }
	for sufficiently large $n$.
\end{proof}

\subsubsection{Summary and proof of Theorem~\ref{the:main:erase}}

\begin{claim}
	\label{claim:good-code-exist}
	Let $p\in\rangep$, $\eone\in\left(0,1-2p\right)$ and $4\etwo=\eone$. With probability at least $1-\frac{1}{n}-2^{-n}$ over code design, there exists a good code $\mcode$ that satisfies the following properties:
	\begin{itemize}
		\item For any erasure pattern of the adversary, there exists a position $\tstar$ such that the left mega sub-code with respect to $\tstar$, $\megacode[\kstar]$, is list decodable with list size $L=\elistord$  and that the transmitted 
		\ml{message $m$} is in $\mlist$. \ml{Let $\clist$ be the list of right mega sub-codewords corresponding to $\mlist \setminus\{m\}$.}
		\item At position $\tstar$, the right mega sub-word received with respect to position $\tstar$ has no more than $\frac{1}{2}-\efour$ of its bit erased; and in addition, the right mega sub-code $\rmegacode[\kstar]$ is $\sigma$-good with respect to the \ml{message $m$} and list $\exlist$ where \ml{$\sigma=2^{-\frac{nS}{4}}$}, and \ml{$S=\frac{\theta^3}{8}$.}
	\end{itemize}
\end{claim}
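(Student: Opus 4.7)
The plan is to combine Claim~\ref{claim:e-1}, Corollary~\ref{coro:e-decodable}, and Claim~\ref{claim:e-good} via a union bound; this mirrors the role of Claim~\ref{claim:prob-good-exist} in the bit-flip case. First, observe that Claim~\ref{claim:e-1} is purely combinatorial: for any erasure pattern of Calvin producing $\lambda\le np$ erasures, there deterministically exists a chunk-end $\tstar=\kstar\chunk \in \setchend$ satisfying both the list-decoding condition \eqref{eq:erasure-con-1} and the energy bounding condition \eqref{eq:erasure-con-2}. The existence of $\tstar$ depends only on $\lambda$ (and the erasure locations) and not on the random code $\mcode$.

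Next, I would invoke Corollary~\ref{coro:e-decodable}: with probability at least $1-1/n$ over the random draw of $\mcode$, for every $\lambda\le np$ and every $t\in\setchend$ for which $Rn+\lambda+\ethree t\le t$, the left mega sub-code with respect to $t$ is $L$-list-decodable for $\lambda$ erasures with $L=\elistord$. In particular this applies to the specific $\tstar$ produced by Claim~\ref{claim:e-1}. Since the codeword for the transmitted message $m$ is, by definition, consistent with the received erased word on all unerased positions, its left prefix lies in the list, so $m\in\mlist$, establishing the first property.

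For the second property, the energy bounding condition \eqref{eq:erasure-con-2} immediately gives $np-\lambda\le(1/2-\efour)(n-\tstar)$, so at most a $\frac{1}{2}-\efour$ fraction of the bits in the right mega sub-word are erased. For the $\sigma$-goodness portion, I would invoke Claim~\ref{claim:e-good}: with probability at least $1-2^{-n}$ over the random code, for every message $m$, every possible list $\exlist$, and every chunk end $t\in\setchend$, the right mega sub-code is $\sigma$-good with respect to $m$ and $\exlist$. A union bound over the two bad events (failure of list-decodability anywhere, failure of $\sigma$-goodness anywhere) yields the stated probability of $1-\frac{1}{n}-2^{-n}$.

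The main subtlety is not any calculation but rather the quantification order. Although $\mlist$ is effectively chosen adaptively by Calvin (who in our worst-case analysis is assumed to know the code, the message and the decoding procedure), Claim~\ref{claim:e-good} was already proved with a universal quantifier over \emph{every} message and \emph{every} possible list. Consequently, no matter which list Calvin's erasure pattern induces, the $\sigma$-goodness guarantee applies. The actual work was carried out in Claim~\ref{claim:e-good-2} and Claim~\ref{claim:e-good}, where the stochastic structure of Alice's encoding and the independence of secrets across chunks were exploited; the present claim is essentially a bookkeeping step that packages these guarantees together to produce a single code $\mcode$ possessing all the properties required by the decoder analysis that follows.
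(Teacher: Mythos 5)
Your proposal is correct and follows the same route as the paper's proof: invoke Claim~\ref{claim:e-1} for the deterministic existence of $\tstar$, Corollary~\ref{coro:e-decodable} for list-decodability (probability $1-1/n$), Claim~\ref{claim:e-good} for $\sigma$-goodness (probability $1-2^{-n}$), and union bound the two bad events. Your explicit remark on the quantification order — that Claim~\ref{claim:e-good} already quantifies over every message and every list, so adaptivity of Calvin's induced list is harmless — is a useful clarification that the paper leaves implicit.
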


\begin{proof}
	For any erasure pattern chosen by the adversary, by Corollary~\ref{coro:e-decodable}, with probability at least $1-\frac{1}{n}$ over code design, the left mega sub-code $\megacode[\kstar]$ is list decodable \zx{with decoding radius $\lambda$ and }list size $L=\elistord$. Let $\lambda$ be the parameter corresponding to $\tstar$ of Claim~\ref{claim:e-1}. Since we know there are exactly $\lambda$ erasures in the left mega sub-word of the received word, we have \ml{$m\in\mlist$}. In addition, the properties of $\lambda$ satisfy the number of asserted erasures in the right mega sub-word.
	
	By Claim~\ref{claim:e-good}, with probability greater than $1-2^{-n}$ over code design, the right mega sub-code $\rmegacode[\kstar]$ is $\sigma$-good with respect to \ml{$m$} and list $\exlist$.
	
	Hence, the probability for the code $\mcode$ to simultaneously possess the two properties stated in the claim is at least $1-\frac{1}{n}-2^{-n}$.
\end{proof}

\begin{claim}
	\label{thm:e-error-prob}
	Let $p\in\rangep$, $\eone\in\left(0,1-2p\right)$ and $4\etwo=\eone$. Let $R=1-2p-\eone$. For any transmitted message $m\in\left[2^{nR}\right]$
the decoding procedures described in Section~\ref{sec:model:erase} allows Bob to correctly decode with probability at least $1-2^{-\frac{n\etwo^3}{32}}$ over the random secrets $s\in\secr$ available to Alice.
\end{claim}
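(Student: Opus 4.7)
The plan is to combine Claim~\ref{claim:good-code-exist} (existence of a good code) with Claim~\ref{claim:e-1} (bound on remaining erasures) to reduce the decoding-success statement to a simple counting argument: for all but a $\sigma$-fraction of right-side secret sequences, the transmitted message is the unique element of $\mlist$ whose right mega sub-codeword is consistent with the received word.

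First I would fix the good code $\mcode$ guaranteed by Claim~\ref{claim:good-code-exist} and consider an arbitrary message $m$ together with an arbitrary adversarial erasure strategy. Let $\tstar=\kstar\chunk$ be the decoding position produced by Claim~\ref{claim:e-1}, so that the left mega sub-code admits list decoding (from Corollary~\ref{coro:e-decodable}) yielding a list $\mlist$ with $|\mlist|\leq L = \elistord$, and $m\in\mlist$. Let $\clist$ be the list of right mega sub-codewords corresponding to $\mlist\setminus\{m\}$. By the second property of Claim~\ref{claim:good-code-exist}, the right mega sub-word received has at most $(\tfrac{1}{2}-\tfrac{\theta}{2})(n-\tstar)$ erased positions, and the right mega sub-code $\rmegacode[\kstar]$ is $\sigma$-good with respect to $m$ and $\clist$, where $\sigma = 2^{-\frac{nS}{4}}$ and $S=\frac{\theta^3}{8}$.

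Next I would invoke the definition of $\sigma$-goodness: for at least a $(1-\sigma)$-fraction of the $l$-tuples of secrets $(s_{\kstar+1},\ldots,s_{\chunknum})\in\secr^{l}$ that Alice uses to encode the right portion, the right mega sub-codeword corresponding to $m$ has Hamming distance strictly greater than $\tfrac{n-\tstar}{2}-\tfrac{(n-\tstar)\theta}{2}$ from every codeword in $\clist$. Conditioning on this event (which holds with probability $\geq 1-\sigma$ over Alice's private randomness), I would argue that no codeword $\mathbf{x}'\in\clist$ can be consistent with the received right mega sub-word: since $\mathbf{x}'$ disagrees with the transmitted right mega sub-codeword in strictly more than $(\tfrac{1}{2}-\tfrac{\theta}{2})(n-\tstar)$ positions, and the adversary erases at most $(\tfrac{1}{2}-\tfrac{\theta}{2})(n-\tstar)$ positions, at least one unerased coordinate witnesses the disagreement and causes $\mathbf{x}'$ to fail the consistency check. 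Therefore the transmitted $m$ is the unique surviving candidate and Bob decodes correctly.

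Finally, tallying the failure probability over Alice's secrets gives an error bound of $\sigma = 2^{-nS/4} = 2^{-n\theta^3/32}$, matching the claim. The main obstacle (already resolved by the earlier claims) is the interplay between the adversarial list $\mlist$ and the randomness in the right portion: since Calvin is causal, at the moment $\mlist$ is (implicitly) determined the right-side secrets are still independent uniform, which is exactly what Claim~\ref{claim:e-good} exploits through the union bound over $m$ and $\mlist$. No further probabilistic work beyond unpacking the goodness guarantee is required here; the proof is a short chain of implications from Claim~\ref{claim:good-code-exist} combined with the pigeonhole comparison between the distance lower bound and the erasure upper bound.
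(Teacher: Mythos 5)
Your proof is correct and follows the same chain of reasoning as the paper: fix the good code from Claim~\ref{claim:good-code-exist}, invoke $\sigma$-goodness to exclude all but a $\sigma = 2^{-nS/4}$ fraction of right-side secret sequences, and conclude the error bound $1-\sigma = 1-2^{-n\theta^3/32}$. You spell out the pigeonhole argument (distance strictly exceeding the erasure budget forces an unerased disagreement) that the paper delegates to Remark~\ref{re:e-good-2}, but the substance is identical.
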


\begin{proof}
	A decoding error is declared if the consistency decoder fails to return a single message or if the decoder returns a message that is not equal to the transmitted message. 
	At position $\tstar$, by Claim~\ref{claim:good-code-exist} and Remark~\ref{re:e-good-2}, the consistency check of the decoding process will return the correct message, with probability $1-\sigma$ over the randomness of the encoding.  
	Therefore, the success probability is obtained by the probability that the sequence of \ml{$\lstar=\chunknum-\frac{\tstar}{\chunk}$} secrets used in the right mega sub-codeword is not chosen from the particular $\sigma$ portion of $\secr^{\lstar}$ that may cause a decoding failure. From Claim~\ref{claim:e-good-2}, we have \ml{$\sigma=2^{-\frac{nS}{4}}$} where \ml{$S=\frac{\theta^3}{8}$.}
	
	Hence, the probability of successful decoding is at least 
	\ml{
	\begin{align*}
		1-\sigma=1-2^{-\frac{nS}{4}}=1-2^{-\frac{n\etwo^3}{32}}
	\end{align*}}
\end{proof}

\begin{theorem}[Rephrasing of Theorem~\ref{the:main:erase}]
	\label{thm:e-cap}
	Let $p\in\rangep$. The capacity of the binary causal adversarial erasure channel $C_p$ is $1-2p$.
\end{theorem}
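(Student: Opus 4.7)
The proof plan is to combine the converse bound from prior work with an achievability argument assembled from the claims already established in the appendix. The converse direction $C_p \leq 1-2p$ is given by Bassily and Smith~\cite{bassily2014causal}, so I would simply cite their result: for any stochastic code of rate exceeding $1-2p$, the average (and hence maximum) probability of error is bounded away from zero. The substance of my proof is therefore the achievability direction $C_p \geq 1-2p$.

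For achievability, fix $\xi>0$ and $\beta>0$, and let $\eone>0$ be a small constant to be set (in particular I will take $\eone < \beta$ and $4\etwo=\eone$). Set the target rate $R = 1-2p-\eone$. The plan is to invoke Claim~\ref{claim:good-code-exist} to produce, with probability at least $1-\tfrac{1}{n}-2^{-n}$ over the random code ensemble $\Phi$, a deterministic code $\mcode$ of block length $n$, message set size $2^{nR}$, and chunk parameter $\echunk$ satisfying both of the structural properties needed for decoding: (i) for every erasure pattern of Calvin, there is a chunk-end position $\tstar$ at which Bob's list decoder returns a list $\mlist$ of size $L=\elistord$ containing the true message, and (ii) the right mega sub-code with respect to $\tstar$ is $\sigma$-good with $\sigma=2^{-nS/4}$ and $S=\theta^3/8$, while the right mega sub-word carries at most a $\tfrac{1}{2}-\efour$ fraction of erasures. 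For $n$ sufficiently large, $1-\tfrac{1}{n}-2^{-n} > 0$, so such a deterministic code $\mcode$ exists.

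Next I fix that code $\mcode$ and run the iterative decoder from Section~\ref{sec:model:erase}. By Claim~\ref{thm:e-error-prob}, for every message $m$ and every erasure pattern of Calvin, the probability over Alice's private secrets that Bob decodes incorrectly is at most $2^{-n\etwo^3/32}$. For sufficiently large $n$ this is at most $\xi$, so the maximum probability of error of $\mcode$ is at most $\xi$. Therefore $R = 1-2p-\eone$ is achievable; since $\eone$ can be taken arbitrarily small (in particular smaller than $\beta$), and $2^{n(R-\beta)} \leq 2^{nR}$ distinct messages are transmitted with error at most $\xi$, the definition of achievability is satisfied. Taking the supremum gives $C_p \geq 1-2p$.

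Combining the two directions, $C_p = 1-2p$ as claimed. The main technical content lies entirely in the ingredients already proven — in particular, the simultaneous existence of a position $\tstar$ meeting both the list-decoding condition and the energy-bounding condition (Claim~\ref{claim:e-1}), and the $\sigma$-goodness of the right mega sub-code uniformly over all adversarially chosen lists (Claim~\ref{claim:e-good}). The proof of Theorem~\ref{thm:e-cap} itself is just a short assembly step: apply the converse of~\cite{bassily2014causal}, invoke Claim~\ref{claim:good-code-exist} to extract a good code with positive probability, invoke Claim~\ref{thm:e-error-prob} to bound its maximal error probability, and take $\eone\to 0$. There is no new obstacle at this final layer beyond bookkeeping the parameters $\eone, \etwo, S, \sigma$ so that all the prior claims apply consistently and the error and rate gaps can be driven below the prescribed $\xi$ and $\beta$.
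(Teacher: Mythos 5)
Your proposal is correct and follows essentially the same route as the paper's proof: cite the converse from~\cite{bassily2014causal}, then assemble achievability by fixing a deterministic code via Claim~\ref{claim:good-code-exist} and bounding its maximal error via Claim~\ref{thm:e-error-prob} for large $n$. Your slightly more explicit bookkeeping (choosing $\eone<\beta$ and noting $2^{n(R-\beta)}\leq 2^{nR}$, and explicitly extracting the code before invoking Claim~\ref{thm:e-error-prob}) does not change the substance of the argument.
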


\begin{proof}
	Let $\xi>0$ and $\beta>0$ be arbitrarily small.
	The converse is proven in \cite{bassily2014causal}.
	For achievability (our main result),  setting $R$ to be equal to $C_p-\beta$ and taking sufficiently large $n$, by Claim~\ref{thm:e-error-prob} the probability of decoding error is at most $\xi$. 
		Therefore, Alice may reliably transmit $2^{n(1-2p-\beta)}$ distinct messages to Bob by using the code $\mcode$ while the probability of error of the code $\mcode$ is at most $\xi$.
Hence, the capacity of the channel is $C_p=1-2p$.
\end{proof}

\begin{table}[b] 
	\renewcommand{\arraystretch}{1.4}
	\centering
	\caption{Table of Parameters (for erasure case)}
	\begin{tabular}{|c|c|c|}
		\hline
		symbol & description & equality/range \\ \hline
		$\Lambda$ & erasure & \\ \hline
		
		$m$ & message & $m\in\msg$\\ \hline
		$s$ & secret & $s\in\secr$\\ \hline
		$\mathbf{x}$ & codeword & $\mathbf{x}\in\mathcal{X}^{n}$ \\ \hline
		$\cpm$ & codeword with erasures & \\ \hline
		
		$\msg$ & message set & $\msg=\left[2^{nR}\right]$\\ \hline
		$\secr$ & secret set & $\secr=\left[2^{nS}\right]$ \\ \hline
		
		$\mathcal{X}$ & input alphabet & $\binaryset{}$ \\ \hline
		$\mathcal{Y}$ & output alphabet & $\binaryera{}$ \\ \hline
		$\mathcal{T}$ & set of chunk ends & $\choiceschunk$ \\ \hline
		$\mcode$ & code & \eqref{code-design}\\ \hline
		$\Phi$ & uniform distribution of stochastic codes & \\ \hline
		$C_{p}$ & capacity & $C_{p}=1-2p$ \\ \hline
		$R$ & message rate & $R=C_{p}-\eone$ \\ \hline
		$S$	& secret rate &  $S=\frac{\etwo^3}{8}$ \\ \hline
		$\eone$ & gap in the rate from $C_{p}$ & \\ \hline
		$\etwo$ & $\chunknum$ is the number of chunks in a codeword & $\etwo=\frac{\eone}{4}$ \\ \hline
		$p$ & fraction of a codeword that can be erased & $p\in\rangep$ \\ \hline
		
		$n$	& block length &  \\ \hline
		$k$ & number of chunks in the left mega sub-codeword & $k=\frac{t}{\chunk}$\\ \hline
		$l$ & number of chunks in the right mega sub-codeword & $l=\chunknum-\frac{t}{\chunk}$\\ \hline
		$t$ & length of left mega sub-codeword & $t\in\setchend$ \\ \hline
		$\tstar$ & correct decoding point & lower \eqref{nprimelower}, upper \eqref{nprimeupper} \\ \hline
		$\lambda$ & number of erasures up to the list decoding point & \\ \hline
		
		$\mlist$ & \zt{a list of messages} & \\ \hline
		$\exlist$ & \zt{a list of right mega sub-codewords} & \\
				  & \zt{excluding codewords corresponding to $m$} & \\ \hline
		$L$ & list size\zt{ of $\mlist$} & $\elistord$ \\ \hline
		\zt{$\clists$} & \zt{list size of $\exlist$} & \zt{$2^{nSl}\cdot O\left(\frac{1}{\epsilon}\right)$} \\ \hline
		
		
		
		$\sigma$ & fraction of bad secret sequences in $\secr^{l}$ & \ml{$\sigma=2^{-\frac{nS}{4}}$} \\ \hline
		
	\end{tabular}
	\label{tab:e-params}
\end{table}

\end{appendices}

\end{document}